\DocumentMetadata{}
\documentclass[12pt,letterpaper]{amsart}

\usepackage{amsmath,amssymb,amsthm,mathtools}
\usepackage{graphicx,booktabs}
\usepackage{float}
\usepackage[round]{natbib}
\usepackage{enumitem}
\usepackage[margin=1in]{geometry}
\usepackage{setspace}
\usepackage{microtype}
\usepackage{needspace}
\usepackage{placeins}
\usepackage{xurl}
\usepackage{hyperref}
\usepackage{tagpdf}

\tagpdfsetup{activate-all,paratagging=false}

\hypersetup{
  colorlinks=true,
  linkcolor=blue,
  citecolor=blue,
  urlcolor=blue,
  pdftitle={Priority Blockers: Welfare and Incentives in Matching under General Constraints},
  pdfauthor={Lars Ehlers and M. Bumin Yenmez},
  pdfkeywords={general upper bounds, cumulative offers, greedy choice,
    priority blockers, strategy-proofness, refugee resettlement}
}

\newtheorem{theorem}{Theorem}
\newtheorem{proposition}{Proposition}
\newtheorem{lemma}{Lemma}
\newtheorem{corollary}{Corollary}
\theoremstyle{definition}
\newtheorem{definition}{Definition}
\newtheorem{example}{Example}
\theoremstyle{remark}
\newtheorem{remark}{Remark}

\newcommand{\cL}{\mathcal{L}}
\newcommand{\cD}{\mathcal{D}}
\newcommand{\cF}{\mathcal{F}}
\newcommand{\cA}{\mathcal{A}}
\newcommand{\I}{\mathcal{I}}
\newcommand{\Sset}{\mathcal{S}}
\newcommand{\emptysetmatch}{\varnothing}

\begin{document}

\title[Priority Blockers]{Priority Blockers: Welfare and Incentives in Matching under General Constraints}

\author[Ehlers and Yenmez]{Lars Ehlers \and M. Bumin Yenmez}

\thanks{Ehlers: Universit\'e de Montr\'eal, Montreal, Canada
(\href{mailto:lars.ehlers@umontreal.ca}{\texttt{lars.ehlers@umontreal.ca}}).
Yenmez: Washington University in St. Louis, St. Louis, MO, USA; Durham
University Business School, Durham, UK; and \"Ozye\u{g}in University,
Istanbul, T\"urkiye
(\href{mailto:bumin@wustl.edu}{\texttt{bumin@wustl.edu}}).}

\thanks{The first author gratefully acknowledges financial support from the SSHRC
(Canada) under Insight Grant 435-2023-0129 and the FRQ (Qu\'ebec) under Soutien
aux \'equipes de recherche / Universitaire - nouvelle \'equipe 367853.}

\begin{abstract}
Institutions often face constraints richer than a single capacity. Protecting
the priority of a student who cannot be accommodated may then block a
lower-priority student who fits. We
introduce COP-and-Choose, a cumulative offer process followed by a choice
step, and show that its greedy version, which skips such blockers, leaves no
student worse off than any fair matching or Knapsack Deferred Acceptance
(KDA). Greedy is the unique consistent rule preserving this guarantee. With
substitutability and size monotonicity at every school, greedy is strategy-proof
and Pareto-undominated among strategy-proof mechanisms. If either fails
at a school, one additional capacity-one school yields a market where, under
every proposal order, greedy is manipulable and no strategy-proof mechanism
weakly Pareto dominates it. In a three-resource refugee-resettlement
calibration with simulated preferences, 19--33 percent of families gain
relative to KDA and none lose, while 21--83 percent strongly envy another
family.
\end{abstract}

\date{September 19, 2026}
\maketitle

\begingroup
\small
\noindent\textbf{Keywords:} general upper bounds, cumulative offers, greedy choice, priority blockers, strategy-proofness, refugee resettlement.\\
\noindent\textbf{JEL Classification:} C78, D47, D78.
\endgroup

\section{Introduction}

A refugee resettlement locality in Los Angeles has room for three children and
three adults. After it places a couple, the next family in priority
order---three children and two adults---no longer fits, though the family
behind them, a parent and child, does. Protecting the higher-priority family's
claim keeps out a family the locality could still serve.

We call such an applicant a \emph{priority blocker}. The configuration above
comes from our calibration of fiscal year 2017 resettlements by HIAS, a U.S.
refugee-resettlement agency; priority blockers arise at all twenty localities
under the simulated preferences (\autoref{sec:blocker-incidence}). Under
Knapsack Deferred Acceptance (KDA), the mechanism of \citet{DKT2023} for
refugee resettlement, 41--48 percent of families could move to a
strictly preferred locality without displacing another family. Priority
blockers cannot arise under ordinary capacity constraints,
where the next applicant fits whenever any does, but they can arise as soon as
constraints are richer than a headcount. Refugee resettlement localities are
limited by the medical, educational, and social services that different
families require;
daycare centers must satisfy age-specific teacher--child ratios; colleges face
multidimensional resource constraints 
\citep{DKT2023,KamadaKojima2024}. What these settings share is that feasibility
is downward closed: any subset of a feasible group remains feasible. Following
\citet{KamadaKojima2024}, we call such constraints \emph{general upper bounds}.

Under general upper bounds, strict priority protection can leave feasible
assignments unrealized: families placed in less preferred localities, or in
none, to preserve an ordering that cannot be acted on. This paper asks how much
can be gained by letting institutions reach past a priority blocker, what
discipline such skipping should respect, and what it costs in fairness and in
incentives.

The student-optimal fair matching of \citet{KamadaKojima2024} is the natural
benchmark for what priority protection delivers. To study what lies beyond it,
we add a terminal choice stage to the cumulative offer process
\citep{HatfieldMilgrom2005,HatfieldKojima2010} and call the result
\emph{COP-and-Choose}: students propose to schools in order of preference, each
school chooses from the cumulative set of all proposals it has received, and
when proposals stop each student takes her favorite among the offers she
holds. The mechanism is parameterized by how schools choose. A \emph{cutoff}
school scans its priority order and stops at the first applicant it cannot
accommodate; with cutoff choice, COP-and-Choose returns the benchmark under
every proposal order (\autoref{thm:cutoff-implementation}). A \emph{greedy}
school continues past that applicant and admits anyone below who still fits.

The comparison is not a routine application of deferred acceptance. Under
general upper bounds, school choice need not be substitutable: a student not
chosen from one set of proposals may be chosen when that set grows. Rejections
can therefore reverse, and a student may end the process holding offers from
several schools. Retaining every proposal and letting students resolve
overlapping offers by a terminal choice are both essential to the welfare
argument.

The main welfare result holds under every proposal order. Every outcome of
greedy COP-and-Choose weakly Pareto dominates every individually rational and
fair matching, including the Kamada--Kojima benchmark (\autoref{cor:fair-dominance}). The two
may coincide; when they differ, some student is strictly better off under
greedy. The result rests on \emph{reference-matching protection} (\autoref{thm:protection}).
Fix an individually rational reference matching. Informally, a school protects
it if the school continues to select its reference assignees whenever they
appear among proposals from students who weakly prefer that school to their
reference assignment. If every school protects the reference, then a student
matched in the reference either finishes with an offer she prefers to her
reference assignment or, having proposed to her reference school, is retained
there; a student unmatched in the reference cannot lose, by individual
rationality. Either way no student is worse off. The choice rules that protect every individually rational and
fair matching are exactly those retaining the cutoff choice from every
available set; we call them \emph{cutoff-containing} (\autoref{thm:exact-fair-dominance}). \autoref{thm:dual-protection}
identifies the reference matchings that cutoff and greedy choice protect.

The same principle reaches KDA, the mechanism
\citet[hereafter DKT]{DKT2023} designed for refugee resettlement with
multidimensional resource requirements. KDA is family-optimal among matchings
satisfying DKT's interference-free condition, a priority requirement weaker
than fairness, which is why KDA can already improve on the fair benchmark when
localities face several resource constraints. With a single resource and
strictly positive requirements, KDA and cutoff coincide. Greedy nevertheless
weakly Pareto dominates KDA, because greedy choice protects every
interference-free matching (\autoref{cor:if-dominance}).

These gains come from relaxing priority protection. Whenever a greedy
school skips a student, she could not have been accommodated by displacing any
set of selected students whom she outranks
(\autoref{prop:greedy-characterization}). This restriction applies to each
school's choice from its proposal set; the final matching may still be unfair
and need not be Pareto efficient. Within the cutoff-containing class, greedy
is distinguished by consistency: it is the unique rule whose choices do not
change when unselected students are removed
(\autoref{prop:cutoff-containing-uniqueness}).

The incentive analysis isolates one school's constraint and priority from the
market around it. We ask that the relevant property hold \emph{robustly}: in
every market in which that school uses greedy choice and the other schools
use substitutable rules (for order independence) or substitutable and
size-monotone rules (for strategy-proofness). Greedy COP-and-Choose is
robustly independent of the proposal order if and only if greedy choice at
that school is substitutable, and robustly strategy-proof under every
proposal-order convention if and only if greedy choice is substitutable and
size monotone (\autoref{thm:robust-boundary}). Because greedy choice is always
consistent, substitutability here is equivalent to path independence. If
either condition fails, the constraint-priority pair can be embedded, with
one additional capacity-one school, in a market in which, under every
proposal-order convention, greedy COP-and-Choose is manipulable and no
strategy-proof mechanism weakly Pareto dominates it
(\autoref{thm:robust-boundary} and \autoref{thm:sp-pareto-frontier}). In one
construction, a student gains by adding an unacceptable school to her list:
her proposal displaces another student whose next proposal revives her offer
at the school she wants. When both
conditions hold at every school, greedy is Pareto-undominated among
strategy-proof mechanisms (\autoref{cor:robust-greedy-frontier}), an
application of the general undominance property in
\autoref{lem:sp-nw-frontier}. Moreover, greedy weakly Pareto dominates every
strategy-proof COP-and-Choose mechanism built from substitutable,
cutoff-containing rules (\autoref{prop:greedy-dominates-sp-class}). The
outcomes of these mechanisms therefore lie weakly between the fair benchmark
and the greedy outcome.

Across priority orders, the boundaries follow the structure of the constraint.
Capacity constraints are exactly those under which cutoff COP-and-Choose is
strategy-proof at every priority, when the remaining schools also have
capacities (\autoref{prop:cutoff-incentives}); they eliminate priority blockers, so cutoff and
greedy coincide. \emph{Matroid} constraints---general upper bounds with an
exchange property, such as capacities and nested capacity structures---are
exactly those under which greedy COP-and-Choose is robustly strategy-proof at
every priority (\autoref{cor:greedy-incentives}). This builds on greedy-choice results of
\citet{Fleiner2003} and \citet{HKYY2026}, which, together with greedy
consistency, imply that matroids are exactly the constraints guaranteeing
substitutability at every priority. General upper bounds beyond matroids retain
the welfare guarantee but not, in general, the incentive guarantee. At a fixed
priority the boundary is finer: some non-matroid constraints satisfy both
incentive conditions under a suitable ranking (\autoref{ex:blocker}), and for knapsack
constraints whose requirement vectors are totally ordered
componentwise---including any one-dimensional resource constraint---every
priority refining ascending requirements works (\autoref{prop:ascending}). The chain
condition cannot be dropped in several dimensions.

We quantify these tradeoffs using DKT's resource model and HIAS data. Family
preferences are unobserved, so we use four simulated specifications spanning
common and idiosyncratic rankings with and without alignment to predicted
employment. With three resources, greedy improves 19--33 percent of families'
simulated assignments relative to KDA and worsens none, while 21--83 percent
strongly envy another family (\autoref{tab:hias-main}). A finite misreport
search finds profitable manipulations under a fixed proposal order. Ranking
families by ascending size restores the one-resource incentive guarantee but
shifts placement toward smaller families and lowers predicted employment.
These simulated gains and fairness, incentive, and distributional costs inform
the policy tradeoff; they do not estimate realized family welfare.

\autoref{sec:model} defines the environment and COP-and-Choose.
\autoref{sec:fair-benchmark} identifies the fair benchmark, \autoref{sec:welfare}
develops the welfare results, and \autoref{sec:incentives} gives the incentive and
priority-design boundaries. \autoref{sec:resources} presents the refugee
calibration. Proofs appear in \autoref{app:proofs}; additional results,
examples, and calibration details appear in the Online Appendix, which follows
the references in this document.

\subsection{Relation to the literature}

\citet{KamadaKojima2024} establish that general upper bounds admit a
student-optimal fair matching, characterize fair matchings by cutoff fixed
points, and show that a cumulative offer algorithm with cutoff choice finds the
matching (their Proposition~8). We take that matching as the benchmark and ask
what lies beyond it; \autoref{thm:cutoff-implementation}, which extends their
algorithmic result to arbitrary proposal orders, is the only result of ours
about the benchmark itself. The
mechanism, the welfare comparison, the discipline that replaces fairness, and
the incentive frontier are the contributions of this paper. Their analysis,
like ours, permits waste, because individual rationality, fairness, and
non-wastefulness can be incompatible under general upper bounds; the same
tension appears under type-specific bounds in \citet{EhlersEtAl2014}.

Reference-matching protection adapts the demand-set logic of
\citet{DoganImamuraYenmez2025}. Their extension method operates within deferred
acceptance, where rejections are permanent and each student holds at most one
tentative contract. Under general upper bounds selections can reverse and
terminal offers can overlap, so we protect reference assignees throughout
cumulative proposal sets and let students resolve overlaps by a terminal
choice. COP-and-Choose itself builds on the cumulative offer process of
\citet{HatfieldMilgrom2005} and \citet{HatfieldKojima2010}, adding that
terminal choice.

The incentive frontier complements impossibility results for strategy-proof
improvements over deferred acceptance \citep{APR2009,Kesten2010,KestenKurino2019}
and general Pareto-undominance results
\citep{HirataKasuya2017,AlvaManjunath2019,Erdil2014,AnnoKurino2016};
\autoref{lem:sp-nw-frontier} specializes \citet[Theorem~4]{HirataKasuya2017} to our environment.
The matroid boundary builds on greedy-choice results of \citet{Fleiner2003} and
\citet{HKYY2026}, which characterize matroids through the path independence of
greedy choice at every priority. Building on this characterization and standard
sufficiency results for strategy-proofness, we establish the exact robust
incentive boundary at a fixed constraint-priority pair and show that, outside
this domain, one additional capacity-one school rules out any strategy-proof
mechanism weakly Pareto dominating greedy.

A related literature reduces the efficiency cost of priorities by relaxing
priority protection. \citet{Kesten2010} introduces efficiency-adjusted
deferred acceptance, allowing students to consent to priority waivers
that do not affect their own assignments. Other papers formulate
relaxations at the level of the matching: \citet{EhlersMorrill2020} study
legal assignments, \citet{TroyanEtAl2020} ask whether a priority claimant
can retain an object after a chain of reassignments, and
\citet[Remark~1]{KamadaKojima2024} consider weak fairness, which disregards
envy that cannot be satisfied feasibly. Our discipline instead restricts
a school's choice from each available set. Disjoint terminal offers imply
greedy stability and weak fairness, but overlapping offers can violate
weak fairness even at a non-wasteful, Pareto-efficient outcome
(\autoref{prop:greedy-stability} and \autoref{ex:weak-fairness}). Whether
greedy outcomes satisfy the other matching-level relaxations remains open.

DKT introduce the refugee resource model, KDA, and the data we use. Their
family-optimality result concerns the interference-free benchmark class; our
protection theorem shows that greedy weakly Pareto dominates KDA's outcome.

We use fairness and interference-freeness as benchmarks for priority protection
without imposing stability on every outcome.\footnote{\citet[Theorem~4]{KamadaKojima2024}
establish robust non-existence: if one school's general upper bound is not a
capacity constraint, one can choose that school's priority order and students'
preferences so that no stable matching exists, regardless of the constraints
and priorities at the other schools. Their stability requires feasibility,
individual rationality, fairness, and non-wastefulness; fairness rules out
justified envy even when satisfying it would be infeasible.} \citet[pp.~2704--2705]{DKT2023}
note that under resource constraints stable matchings need not exist, and
finding one can be computationally difficult even when it exists. Our welfare
analysis asks whether students can be made weakly better off than under the
priority-protecting benchmarks. Greedy stability does hold whenever terminal
offers are disjoint, in particular when greedy choice is substitutable at
every school (\autoref{prop:greedy-stability}).

Related work studies knapsack constraints \citep{HatakeyamaDoi2026} and
distributional restrictions across institutions
\citep{KamadaKojima2015,KamadaKojima2018,GotoEtAl2017,KojimaTamuraYokoo2018}. Those
constraints couple schools but depend only on aggregate enrollments, whereas
ours are school-separable but may depend on student identities and resource
requirements. \citet{ChoEtAl2024} and \citet{KimuraEtAl2023} study hereditary
constraints on aggregate enrollment vectors, obtaining strategy-proof
mechanisms through the EF-$k$ relaxation and through multi-stage procedures
built from hereditary M$^\natural$-convex subconstraints, respectively; our
benchmark is reference-relative welfare protection under school-separable
constraints. \citet{ImamuraKawase2024} study ex post Pareto improvements from a
given matching under general constraints and provide complementary procedures.
Related refugee-assignment work studies stable housing matches
\citep{AnderssonEhlers2020} and data-driven employment optimization
\citep{BansakEtAl2018}, and computational work on matching with sizes studies
the allocation consequences of heterogeneous resource requirements
\citep{BiroMcDermid2014}.

\section{Model and COP-and-Choose}\label{sec:model}

This section develops the framework used throughout.

\subsection{Primitives and matchings}

Let $\I$ be a finite set of students and $\Sset$ a finite set of schools.
Each student $i\in\I$ has a strict preference relation $\succ_i$ over
$\Sset\cup\{\emptysetmatch\}$, where $\emptysetmatch$ denotes being unmatched.
We write $\succeq_i$ for the associated weak preference relation. School
$s\in\Sset$ is \emph{acceptable} to $i$ if $s\succ_i\emptysetmatch$.

Each school $s\in\Sset$ has a strict priority order $\pi_s$ on $\I$; for
$i,i'\in\I$, $i\mathrel{\pi_s}i'$ means that $i$ has higher priority than $i'$.
Let $G_s(i)=\{i'\in\I:i'\mathrel{\pi_s}i\}$ be the set of students who
outrank $i$ at $s$.

We impose \emph{general upper bounds}: for each school $s\in\Sset$, let
$\cA_s\subseteq 2^\I$ denote the sets of students that are feasible for $s$.
We assume that $\cA_s$ is nonempty and downward closed: if $I\in\cA_s$ and
$I'\subseteq I$, then $I'\in\cA_s$. In particular, $\emptyset\in\cA_s$.
A feasibility collection is a \emph{capacity constraint} if there is an
integer $q_s\in\{0,\ldots,|\I|\}$ such that $\cA_s=\{I\subseteq\I:|I|\leq q_s\}$.

A \emph{matching} is a function
$\mu:\I\to\Sset\cup\{\emptysetmatch\}$ such that
$\mu(s)\in\cA_s$ for every $s\in\Sset$, where
$\mu(s):=\{i\in\I:\mu(i)=s\}$ denotes the set of students assigned to
school $s$ under $\mu$.
A matching $\mu$ is \emph{individually rational} if
$\mu(i)\succeq_i\emptysetmatch$ for every $i\in\I$.

For $s\in\Sset$ and $i\in\mu(s)$, student $i'\in\I$ has
\emph{justified envy} toward $i$ if
$s\succ_{i'}\mu(i')$ and $i'\mathrel{\pi_s}i$. A matching is \emph{fair} if
it has no justified envy. A matching is \emph{non-wasteful} if there is no
$(i,s)\in\I\times\Sset$ such that $s\succ_i\mu(i)$ and
$\mu(s)\cup\{i\}\in\cA_s$.

For two matchings $\mu$ and $\mu'$, $\mu$ \emph{weakly Pareto dominates}
$\mu'$ if $\mu(i)\succeq_i\mu'(i)$ for every $i\in\I$. It \emph{Pareto
dominates} $\mu'$ if, in addition, the comparison is strict for some student.
A matching is \emph{Pareto efficient} if no matching Pareto dominates it.

For fixed constraints and priorities, a \emph{mechanism} maps every
student-preference profile to a matching. A mechanism is \emph{individually
rational} if every matching it produces is individually rational. It is
\emph{non-wasteful} if every matching it produces is non-wasteful. A mechanism
is \emph{strategy-proof} if no student can obtain a strictly preferred
assignment under her true preferences by changing her report while all other
reports are fixed.

For two mechanisms $f$ and $g$, $f$ \emph{weakly Pareto dominates} $g$ if
$f_i(\succeq)\succeq_i g_i(\succeq)$ for every preference profile
$\succeq$ and every student $i\in\I$. It \emph{Pareto dominates} $g$ if, in
addition, the comparison is strict for some student at some preference
profile. A mechanism is \emph{Pareto-undominated} within a class if no
mechanism in that class Pareto dominates it.

\subsection{Two priority-based choice rules}

We now specify the school choice rules used in the cumulative-offer
process. Fix a school $s\in\Sset$ throughout this subsection.

A choice rule is a map $C_s:2^{\I}\to\cA_s$ such that
$C_s(I)\subseteq I$ for every set $I\subseteq\I$ of available students. Thus,
the chosen set is always feasible, although the set of available students need
not be. The following two standard axioms compare a rule’s choices across different 
sets of available students. The rule is
\emph{substitutable} if
$C_s(I')\cap I\subseteq C_s(I)$ whenever $I\subseteq I'$, and it is
\emph{consistent} if $C_s(I)\subseteq J\subseteq I$ implies
$C_s(J)=C_s(I)$.\footnote{Consistency is also known as irrelevance of
rejected contracts (IRC); see \citet{AygunSonmez2013}.}
We use the following additional properties.

\begin{definition}\label{def:choice-axioms}
In each condition below, the requirement applies to every $I\subseteq\I$.
A choice rule $C_s$ is
\begin{enumerate}[label=(\roman*),leftmargin=2.5em]
\item \emph{non-wasteful} if, for every $i\in I\setminus C_s(I)$,
      $C_s(I)\cup\{i\}\notin\cA_s$;
\item \emph{weakly non-wasteful} if, whenever $I\setminus C_s(I)$ is
      nonempty and $i$ is its highest-priority student,
      $C_s(I)\cup\{i\}\notin\cA_s$;
\item \emph{priority-respecting} if every student in $C_s(I)$ has higher
      priority than every student in $I\setminus C_s(I)$; and
\item \emph{weakly priority-respecting} if, for every
      $i\in I\setminus C_s(I)$ and every nonempty $I'\subseteq C_s(I)$
      such that $i$ has higher priority than every student in $I'$,
      $(C_s(I)\setminus I')\cup\{i\}\notin\cA_s$.
\end{enumerate}
\end{definition}

We use two choice rules that depend on the priority order $\pi_s$.

\begin{definition}[Greedy choice]\label{def:greedy}
For a set of students $I\subseteq\I$ with $|I|=m$, list the students in
$I$ as $i_1,\ldots,i_m$ from highest to lowest priority under $\pi_s$. The
\emph{greedy choice rule} $C_s^g$ starts with $I_0=\emptyset$ and, for
$k=1,\ldots,m$, sets
\[
  I_k=
  \begin{cases}
    I_{k-1}\cup\{i_k\},&\text{if }I_{k-1}\cup\{i_k\}\in\cA_s,\\
    I_{k-1},&\text{otherwise}.
  \end{cases}
\]
It returns $C_s^g(I)=I_m$.
\end{definition}

Greedy choice is standard. The following characterization identifies the two
properties that uniquely distinguish it under general upper bounds.

\begin{proposition}
\label{prop:greedy-characterization}
The greedy choice rule $C_s^g$ is the unique non-wasteful and weakly
priority-respecting choice rule.
\end{proposition}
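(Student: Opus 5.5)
The plan has two parts: show that greedy has both properties, then show that any rule with both properties coincides with greedy on every available set $I$.

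\emph{Existence.} Fix $I$ and write $i_1,\ldots,i_m$ and $I_0\subseteq\cdots\subseteq I_m=C_s^g(I)$ as in Definition~\ref{def:greedy}.

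For non-wastefulness, take $i_k\in I\setminus C_s^g(I)$. At step $k$ the set $I_{k-1}\cup\{i_k\}$ was infeasible. Since $I_{k-1}\cup\{i_k\}\subseteq C_s^g(I)\cup\{i_k\}$ and $\cA_s$ is downward closed, the larger set $C_s^g(I)\cup\{i_k\}$ is also infeasible.

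For weak priority-respect, take a rejected $i_k$ and a nonempty $I'\subseteq C_s^g(I)$ whose members all rank below $i_k$. Every member of $I'$ is then added after step $k$, so $I'\cap I_{k-1}=\emptyset$. Hence
\[
I_{k-1}\cup\{i_k\}\subseteq (C_s^g(I)\setminus I')\cup\{i_k\}.
\]
The left side is infeasible, so by downward closure the right side is infeasible too.

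\emph{Uniqueness.} Let $C_s$ be non-wasteful and weakly priority-respecting, fix $I$, and set $J=C_s(I)$. I will show by induction on $k$ that $J\cap\{i_1,\ldots,i_k\}=I_k$, using the same priority listing of $I$.

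Assume the claim holds for $k-1$, so $J\cap\{i_1,\ldots,i_{k-1}\}=I_{k-1}$. There are two cases.

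\emph{Case 1: greedy adds $i_k$}, so $I_{k-1}\cup\{i_k\}\in\cA_s$. Suppose for contradiction that $i_k\notin J$. Let $I'=J\setminus I_{k-1}$; by the induction hypothesis, every member of $I'$ ranks below $i_k$.
\begin{itemize}
\item If $I'=\emptyset$, then $J=I_{k-1}$, and $J\cup\{i_k\}=I_{k-1}\cup\{i_k\}$ is feasible. This contradicts non-wastefulness of $C_s$.
\item If $I'\neq\emptyset$, then $(J\setminus I')\cup\{i_k\}=I_{k-1}\cup\{i_k\}$ is feasible. This contradicts weak priority-respect of $C_s$.
\end{itemize}
Hence $i_k\in J$.

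\emph{Case 2: greedy rejects $i_k$}, so $I_{k-1}\cup\{i_k\}\notin\cA_s$. By the induction hypothesis, $I_{k-1}\cup\{i_k\}\subseteq J$ would follow if $i_k\in J$. But $J\in\cA_s$ is downward closed, so that subset would be feasible, a contradiction. Hence $i_k\notin J$.

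Concluding the induction at $k=m$ gives $J=I_m=C_s^g(I)$.

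The only delicate point is Case~1 of the uniqueness step. The set $I'$ used to invoke weak priority-respect must be nonempty, and the empty case has to be handled separately by non-wastefulness. Together the two axioms cover both cases, which is why the proposition needs both of them.
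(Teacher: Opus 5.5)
Your proof is correct and follows the paper's argument. The existence half is the same, and your induction along the priority list is the paper's first-disagreement argument: the greedy-skips case is closed by downward closure and the greedy-adds case by weak priority respect. The one organizational difference is that you handle the empty-$I'$ subcase directly by non-wastefulness of $C_s$, whereas the paper first shows that neither $C_s(I)$ nor $C_s^g(I)$ can be a proper subset of the other and then takes $I'=C_s(I)\setminus C_s^g(I)$.
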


Priority respect requires the chosen set to be a top segment of the priority
order: no chosen student is outranked by an unchosen one. Weak priority respect
allows the school to skip a student only if she cannot be admitted by removing
any nonempty set of selected students below her in priority. It restricts
school choice from each available set; the final matching may still be unfair.

\citet{SonmezYenmez2022} also characterize greedy choice under matroid
feasibility; ours covers all general upper bounds.

\begin{definition}[Cutoff choice]\label{def:cutoff}
For a set of students $I\subseteq\I$ with $|I|=m$, list the students in
$I$ as $i_1,\ldots,i_m$ from highest to lowest priority under $\pi_s$. The
\emph{cutoff choice rule} $C_s^c$ scans the students in this order and stops
at the first student whose addition would violate feasibility. Equivalently,
let
\[
  k^*=\max\Bigl(\{0\}\cup
  \bigl\{k\in\{1,\ldots,m\}:\{i_1,\ldots,i_k\}\in\cA_s\bigr\}\Bigr).
\]
Then
\[
  C_s^c(I)=\{i_1,\ldots,i_{k^*}\},
\]
where the set is empty when $k^*=0$.
\end{definition}

This is the choice rule underlying the cutoff construction of
\citet{KamadaKojima2024}. The following result gives the analogous
characterization for cutoff choice.

\begin{proposition}
\label{prop:cutoff-characterization}
The cutoff choice rule $C_s^c$ is the unique choice rule that is weakly
non-wasteful and priority-respecting.
\end{proposition}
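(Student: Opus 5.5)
The proof has two parts: first we check that $C_s^c$ has both properties, then we show that any rule with both properties coincides with $C_s^c$ on every available set.

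Fix $I$ and list it as $i_1,\ldots,i_m$ in decreasing priority. Let $k^*$ be as in \autoref{def:cutoff}. Then $C_s^c(I)=\{i_1,\ldots,i_{k^*}\}$ is a top segment of the order restricted to $I$, so $C_s^c$ is priority-respecting. It is feasible because of how $k^*$ is defined; when $k^*=0$ it is the empty set, which lies in $\cA_s$. For weak non-wastefulness, suppose $k^*<m$. The highest-priority unchosen student is then $i_{k^*+1}$, and $C_s^c(I)\cup\{i_{k^*+1}\}=\{i_1,\ldots,i_{k^*+1}\}$. This set is infeasible. Indeed, if it were feasible, downward closure would make every initial segment up to $k^*+1$ feasible, and then $k^*$ would not be the maximum. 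Strictly, the maximum in \autoref{def:cutoff} is taken over all $k$ with $\{i_1,\ldots,i_k\}\in\cA_s$. Downward closure makes this set of indices an initial interval $\{1,\ldots,k^*\}$, so the maximum is the first failure point, as the verbal description says.

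For uniqueness, let $C_s$ be weakly non-wasteful and priority-respecting, and fix $I$. Priority respect forces $C_s(I)$ to be a top segment $\{i_1,\ldots,i_k\}$ for some $k$. Since $C_s(I)\in\cA_s$, this initial segment is feasible, and so $k\le k^*$. Now suppose $k<k^*$. Then $I\setminus C_s(I)$ is nonempty and its highest-priority member is $i_{k+1}$. The set $C_s(I)\cup\{i_{k+1}\}=\{i_1,\ldots,i_{k+1}\}$ is contained in $\{i_1,\ldots,i_{k^*}\}$, which is feasible, so by downward closure it is feasible too. This contradicts weak non-wastefulness. Hence $k=k^*$ and $C_s(I)=C_s^c(I)$.

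The only delicate step is the use of downward closure to identify the maximal feasible initial segment with the first infeasible one. That step is needed both for weak non-wastefulness of $C_s^c$ and for ruling out $k<k^*$. Beyond that the argument is routine, with no substantive obstacle.
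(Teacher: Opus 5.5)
Your proof is correct and follows essentially the same route as the paper's: you check that $C_s^c(I)$ is a feasible top segment to which the next student cannot be added, then show that any priority-respecting rule selects a top segment whose length must equal $k^*$. Feasibility rules out a longer segment, and weak non-wastefulness plus downward closure rules out a shorter one; note that infeasibility of $\{i_1,\ldots,i_{k^*+1}\}$ follows directly from the maximality of $k^*$, so downward closure is genuinely needed only in the $k<k^*$ step.
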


The two rules agree until the first student whose addition would violate
feasibility. Cutoff choice stops, whereas greedy choice skips that student and
continues. Hence, for every set $I\subseteq\I$ of available students,
\begin{equation}\label{eq:nesting}
  C_s^c(I)\subseteq C_s^g(I).
\end{equation}
The inclusion can be strict: cutoff choice respects priority but can be
wasteful, whereas greedy choice is non-wasteful and weakly priority-respecting
but need not respect priority.

\subsection{COP-and-Choose}

We now define COP-and-Choose formally. Its first stage is the cumulative
offer process of \citet{HatfieldMilgrom2005} and \citet{HatfieldKojima2010}:
every proposal remains in the receiving school's cumulative proposal set,
including proposals from students who are not currently selected, and each
school chooses from that set. Under general upper bounds this retention has
consequences it does not have under substitutable choice. A student rejected
at one stage may be selected later as additional proposals arrive, so she may
end the process holding offers from several schools, and the union of the
schools' final choices need not be a matching. The second stage is what we
add: each student chooses among her terminal offers. Under substitutable
choice the second stage is redundant (\autoref{prop:substitutability-cop});
without substitutability it is what makes the cumulative offer process a
well-defined mechanism.

Fix a profile of choice rules $C=(C_s)_{s\in\Sset}$. Initially, no school has
received a proposal, so $T_s^0=\emptyset$ for every $s\in\Sset$. At each stage
$k=0,1,\ldots$, each school $s\in\Sset$ selects $C_s(T_s^k)$. Students propose in 
preference order. A student currently selected by 
at least one school does not propose further, since she prefers every
school currently selecting her to every school to which she has not
yet proposed. A student is \emph{active} if she is selected by
no school and has an acceptable school to which she has not yet proposed.

A \emph{proposal-order convention} is a rule, fixed before preferences are
reported, that selects one active student after each history at which at least
one student is active. The rule is fixed, not the realized proposal sequence:
reports can change the history and the active set. For a randomized
implementation, fix the random seed independently of reports and apply the
same rule when comparing truthful and deviating reports. Fix a proposal-order
convention $\rho$. At stage $k$,
if no student is active, the proposal phase ends. Otherwise, the student
selected by $\rho$ proposes to her most-preferred acceptable school to which
she has not yet proposed. If student $i\in\I$ proposes to school $s\in\Sset$,
set $T_s^{k+1}=T_s^k\cup\{i\}$ and $T_{s'}^{k+1}=T_{s'}^k$ for every
$s'\neq s$. The process then proceeds to
stage $k+1$.

Because each student proposes to each school at most once, the proposal phase
terminates after at most $|\I||\Sset|$ proposals. Let $K$ denote the terminal
stage. For each $s\in\Sset$, let $T_s^K\subseteq\I$ be the terminal set of
proposers to $s$ and let $O_s=C_s(T_s^K)$ be the set of students receiving a
terminal offer from $s$. Because a
previously rejected student may later be selected, the sets $O_s$ need not be
disjoint. The mechanism computes each student $i$'s final choice from her
reported preferences, assigning her highest-ranked terminal offer:
\begin{equation}\label{eq:final-choice}
  \mu^{C,\rho}(i)=
  \begin{cases}
    \max_{\succ_i}\{s\in\Sset:i\in O_s\},&
       \text{if this set is nonempty},\\
    \emptysetmatch,&\text{otherwise}.
  \end{cases}
\end{equation}
For every school $s$, the set of students ultimately assigned to $s$ is a
subset of $O_s\in\cA_s$ and is therefore feasible by downward closure. Hence
$\mu^{C,\rho}$ is a matching.

\begin{definition}[COP-and-Choose]\label{def:cop-and-choose}
Given a profile $C$ of choice rules and a proposal-order convention $\rho$,
\emph{COP-and-Choose} is the mechanism just described. We call its proposal
phase the \emph{cumulative-offer process} (COP). If every school uses $C_s^g$,
we call the mechanism \emph{greedy COP-and-Choose}. If every school uses
$C_s^c$, we call it \emph{cutoff COP-and-Choose}.
\end{definition}

Because students propose only to acceptable schools, the resulting matching
is individually rational. The next example shows why the terminal choice
step matters and how retaining proposals can change welfare.

\begin{example}\label{ex:motivating}
Consider three students and two schools, $a$ and $b$, with
\[
  \cA_a=\cA_b=\{\emptyset,\{1\},\{2\},\{3\},\{1,3\}\}.
\]
Equivalently, give each school the two-dimensional resource-capacity vector
$(3,3)$ and let students $1$, $2$, and $3$ require $(0,2)$, $(3,2)$, and
$(1,1)$.
Students $1$ and $2$ find only $a$ acceptable, while
$a\succ_3 b\succ_3\emptysetmatch$. Both schools rank $1$ above $2$ above $3$.

Suppose student $2$ proposes to $a$, followed by student $3$. Greedy selects
$2$ and skips $3$, since $\{2,3\}\notin\cA_a$. Student $3$ then proposes
to $b$ and is selected there. When student $1$ subsequently proposes to $a$,
its cumulative proposal set is $\{1,2,3\}$. Greedy selects $1$, skips $2$
because $\{1,2\}\notin\cA_a$, and selects $3$ because
$\{1,3\}\in\cA_a$. Thus, $O_a=\{1,3\}$ and $O_b=\{3\}$.
Student $3$ holds two terminal offers and chooses $a$, yielding assignments
$(a,\emptysetmatch,a)$.

If rejections are permanent, as in the deferred-acceptance algorithm of
\citet{GaleShapley1962}, student $3$'s proposal to $a$ is discarded after
her initial rejection. When student $1$ arrives, $a$ chooses $1$ from
$\{1,2\}$, while $3$ remains at $b$. Cutoff
COP-and-Choose yields the same assignments as the discarded-rejection
procedure, $(a,\emptysetmatch,b)$:
from $\{1,2,3\}$, cutoff selects $1$ and stops at $2$. Greedy
COP-and-Choose Pareto dominates this outcome because only student $3$'s
assignment changes, from $b$ to $a$. It is not fair, however: unmatched
student $2$ prefers $a$ and outranks student $3$ there.
\end{example}

\section{The Fair Benchmark}\label{sec:fair-benchmark}

We first identify the fair benchmark for our welfare comparison. Cutoff
COP-and-Choose implements this benchmark exactly.

A \emph{student-optimal fair matching} is an individually rational and fair
matching that weakly Pareto dominates every other individually rational and
fair matching. Under general upper bounds,
\citet[Theorem~2]{KamadaKojima2024} establish
existence, although the proof of \autoref{thm:cutoff-implementation} below does
not invoke their existence result. Strict student preferences imply uniqueness.
We refer to this unique matching as the \emph{fair benchmark}.

\citet[Section~6.4]{KamadaKojima2024} also give a cumulative offer algorithm
that finds this matching: under simultaneous proposals, each school keeps the
longest feasible priority prefix of all students who have ever applied to
it---cutoff choice---and their Proposition~8 shows the outcome is the
student-optimal fair matching. The next result extends that observation to
every proposal-order convention and records two features used repeatedly
below.

\begin{theorem}
\label{thm:cutoff-implementation}
For every preference profile and proposal-order convention, cutoff
COP-and-Choose returns the fair benchmark. After every proposal, the sets
selected by different schools are pairwise disjoint, so the final
student-choice step is redundant. The outcome is independent of the
proposal-order convention.
\end{theorem}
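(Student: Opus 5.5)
We prove the three claims in order: disjointness of the selected sets, then fairness and individual rationality of the outcome, then student-optimality. Uniqueness of the fair benchmark then gives order independence.

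The key structural fact is that cutoff choice is monotone in a cutoff sense. For a cumulative proposal set $T$, write $C_s^c(T)$ as the students of $T$ ranked strictly above the first ``blocking'' student of $T$, or all of $T$ if there is none. If $T\subseteq T'$, the blocking student of $T'$ is weakly higher in $\pi_s$ than that of $T$. The reason is that any prefix of $T'$ that is feasible contains, by downward closure, the corresponding prefix of $T$, which is also feasible. Hence the set of priority ranks that $s$ accepts only shrinks over time, and
\[
C_s^c(T')\cap T\subseteq C_s^c(T),
\]
so cutoff choice is substitutable. Consistency is immediate from the definition.

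\textbf{Step 1 (disjointness).} We argue by induction on the proposal stage. A student proposes only when she is selected by no school. Her new proposal goes to one school $s$, and by substitutability no other student becomes newly selected at $s$. At every other school the proposal set is unchanged. So the only student who can gain a selection is the proposer, and she gains it only at $s$. Because selections by different schools never overlap before the proposal, they remain pairwise disjoint after it. Consequently the terminal offers $O_s$ are disjoint, and the final choice step \eqref{eq:final-choice} simply assigns each student in $O_s$ to $s$.

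\textbf{Step 2 (individual rationality and fairness).} Individual rationality holds because students propose only to acceptable schools. For fairness, suppose $i\in\mu(s)=O_s$ and $i'\mathrel{\pi_s}i$ with $s\succ_{i'}\mu(i')$. Student $i'$ ends matched to $\mu(i')$, or unmatched with her list exhausted, and she proposes in preference order. Therefore she proposed to $s$ at some stage, so $i'\in T_s^K$. Since $O_s=C_s^c(T_s^K)$ is a priority prefix of $T_s^K$ containing $i$, it also contains $i'$. By disjointness $\mu(i')=s$, which is a contradiction.

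\textbf{Step 3 (optimality).} This is the main step. Fix any individually rational and fair matching $\nu$. We show by induction over the proposal sequence that no student is ever rejected by $s=\nu(i)$ while she has proposed there. Here ``rejected'' means $i\in T_s^k\setminus C_s^c(T_s^k)$ at some stage $k$.

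Suppose the first such event happens at stage $k$ to student $i$ at $s=\nu(i)$. Let $T=T_s^k$. Every $j\in T$ proposed to $s$, so $s\succeq_j\mu$-reachable; more precisely, $j$ was rejected by, or never offered at, all schools she prefers to $s$. By minimality of $k$, each such $j$ was not rejected by $\nu(j)$, so $s\succeq_j\nu(j)$.

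Now take any $j\in T$ with $j\mathrel{\pi_s}i$. Suppose $s\succ_j\nu(j)$. Then, since $i\in\nu(s)$, student $j$ has justified envy toward $i$ under $\nu$, contradicting fairness. Hence $\nu(j)=s$. So the prefix $\{j\in T: j\mathrel{\pi_s}i\}\cup\{i\}$ is contained in $\nu(s)\in\cA_s$ and is feasible by downward closure. The cutoff rule therefore does not stop at or before $i$, and $i\in C_s^c(T)$, a contradiction.

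By substitutability, once $i$ is selected at $\nu(i)$ she stays selected unless she is rejected, which we have just ruled out. So each student either ends at $\nu(i)$ or stops proposing before reaching it, which can only happen at a school she prefers. Thus $\mu(i)\succeq_i\nu(i)$ for every $i$. Combined with Step 2, $\mu$ is the fair benchmark.

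Order independence follows because the fair benchmark is unique. The delicate point is the first-rejection induction in Step 3, which must handle proposers whose $\nu$-school they have not yet reached. Those proposers are covered by the inequality $s\succeq_j\nu(j)$, since preferences are strict and proposals are made in preference order.
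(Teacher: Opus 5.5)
Your proof is correct and follows essentially the paper's route. Substitutability of cutoff choice gives disjointness, as in \autoref{prop:substitutability-cop}, and the priority-prefix property gives fairness. Your first-rejection argument in Step~3 is \autoref{thm:protection} combined with \autoref{lem:cutoff-protects-fair}, specialized to cutoff and inlined. You derive order independence from uniqueness of the benchmark rather than from the paper's two-run comparison, which is an equally valid shortcut.

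One aside is false, though nothing depends on it: cutoff choice is not consistent. In \autoref{ex:cutoff-da}, $C_s^c(\{1,2,3\})=\{1\}$ while $C_s^c(\{1,3\})=\{1,3\}$; this failure is exactly why the process must retain rejected proposals. You should drop that sentence.
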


Disjointness holds because cutoff choice is substitutable
(\autoref{lem:cutoff-rejection}): a student rejected by a school is never
selected there later, so she holds at most one offer at any time.

The fair benchmark can leave feasible improvements unrealized because cutoff
choice stops at the first infeasible addition: continuing past that student to
select a feasible lower-priority student would violate priority. With an
ordinary capacity, no such blocker exists. \autoref{sec:hias-results}
quantifies the resulting waste and the welfare gains from continuing past
these blockers in the refugee calibration.

Cutoff COP-and-Choose must retain rejected proposals because discarding a
blocker can cause cutoff choice to admit a lower-priority student and miss the
fair benchmark; \citet[Section~6.4]{KamadaKojima2024} make the same point
about their algorithm, and \autoref{ex:cutoff-da} in the Online Appendix gives
a minimal example.

\section{Greedy COP-and-Choose: Welfare}\label{sec:welfare}
\label{sec:greedy-welfare}

Having identified cutoff COP-and-Choose with the fair benchmark, we now turn
to greedy choice. For every set of available students, greedy choice contains
cutoff choice but continues past the first student whose addition would
violate feasibility. Reference-matching protection shows why this difference
generates the welfare guarantee and characterizes the reference matchings for
which the guarantee continues to hold.

\subsection{Reference-matching protection and greedy dominance}
\label{sec:benchmark-protection}

Following \citet{DoganImamuraYenmez2025}, for an individually rational
reference matching $\mu^0$ and a school $s\in\Sset$, define the
\emph{demand set} of $s$ at $\mu^0$ by
\[
  D_s(\mu^0)=\{i\in\I:s\succeq_i\mu^0(i)\}.
\]
Thus, $D_s(\mu^0)$ consists of the students who weakly prefer $s$ to their
assignment under $\mu^0$, including the students assigned to $s$ under
$\mu^0$.

\begin{definition}\label{def:protection}
A choice profile $C$ \emph{protects} an individually rational reference matching
$\mu^0$ if, for every $s\in\Sset$ and every
$I\subseteq D_s(\mu^0)$,
\begin{equation}\label{eq:protection}
  \mu^0(s)\cap I\subseteq C_s(I).
\end{equation}
\end{definition}

Only sets of available students contained in $D_s(\mu^0)$ are relevant for
protection.

\begin{theorem}\label{thm:protection}
If $C$ protects an individually rational reference matching $\mu^0$, every
COP-and-Choose outcome under $C$ weakly Pareto dominates $\mu^0$, under every
proposal-order convention.
\end{theorem}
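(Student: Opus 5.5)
Fix a preference profile, a proposal-order convention $\rho$, and the resulting run of the cumulative-offer process with terminal stage $K$, terminal proposal sets $T_s^K$, and terminal offers $O_s=C_s(T_s^K)$. Let $\mu=\mu^{C,\rho}$. We show $\mu(i)\succeq_i\mu^0(i)$ for every student $i$, splitting into two cases.

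\textbf{Case 1: $\mu^0(i)=\emptysetmatch$.} The claim follows because COP-and-Choose is individually rational: students propose only to acceptable schools, and the terminal choice \eqref{eq:final-choice} picks a school from the set of such offers or leaves the student unmatched.

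\textbf{Case 2: $\mu^0(i)=s\in\Sset$.} By individual rationality of $\mu^0$, $s$ is acceptable to $i$. We distinguish two subcases according to whether $i$ ever proposes to $s$.

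\emph{Subcase 2a: $i\in T_s^K$.} The key observation is that every proposer to any school $s'$ weakly prefers $s'$ to her $\mu^0$-assignment at the time of proposing, as long as she has not yet proposed to her own $\mu^0$-school. To show $T_s^K\subseteq D_s(\mu^0)$, we therefore need each $j\in T_s^K$ to satisfy $s\succeq_j\mu^0(j)$.

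This is the main obstacle. It is not automatic, because a student $j$ might propose to $s$ after having passed (in her preference list) below her reference school $\mu^0(j)$. In that case she proposed to $\mu^0(j)$ earlier and was later unselected. So the direct set-inclusion argument may fail, and we instead argue by \textbf{minimal counterexample over the proposal history}.

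Consider the first stage $k$ at which some student $j$ proposes to a school $s'$ with $\mu^0(j)\succ_j s'$. Before stage $k$, every proposal set $T_{s''}^{k'}$ is contained in $D_{s''}(\mu^0)$. Student $j$ proposed to $t=\mu^0(j)$ at some earlier stage, and $T_t^{k'}\subseteq D_t(\mu^0)$ for all $k'\le k$. Protection \eqref{eq:protection} applied to $I=T_t^{k}$ gives $j\in\mu^0(t)\cap T_t^k\subseteq C_t(T_t^k)$. Hence $j$ is currently selected at stage $k$ and is not active, contradicting the assumption that she proposes.

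Therefore, by induction, every proposal set at every stage lies in the corresponding demand set; in particular $T_s^K\subseteq D_s(\mu^0)$. Protection then yields $i\in C_s(T_s^K)=O_s$. By \eqref{eq:final-choice}, $\mu(i)\succeq_i s=\mu^0(i)$.

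\emph{Subcase 2b: $i\notin T_s^K$.} Student $i$ proposes in preference order and stops only when she is not active. Since $s$ is acceptable and unproposed, at the terminal stage $i$ must be selected by some school $s'$. She proposed to $s'$ before reaching $s$, so $s'\succ_i s$. Then $i\in O_{s'}$, and the terminal choice gives $\mu(i)\succeq_i s'\succ_i s$.

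The same inductive invariant also shows that $i$ never proposes below $s$, which is consistent with this subcase. Nothing in the argument depends on $\rho$, so the conclusion holds under every proposal-order convention.
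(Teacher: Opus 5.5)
Your proof is correct and follows the paper's argument. Your minimal-counterexample step is the paper's induction showing that every proposal goes to a school weakly preferred to the reference assignment, so all cumulative proposal sets lie in the demand sets and protection keeps each reference assignee selected; this is followed by the same three-way case analysis at termination (the only implicit step, that $\mu^0(j)\in\Sset$ at the first violating proposal, holds because $s'$ is acceptable to $j$).
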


Cutoff choice protects every individually rational and fair matching. A choice
profile $C$ is \emph{cutoff-containing} if
\begin{equation}\label{eq:cutoff-containment}
  C_s^c(I)\subseteq C_s(I)
  \qquad\text{for every $s\in\Sset$ and every $I\subseteq\I$}.
\end{equation}
Any cutoff-containing choice profile therefore protects every individually
rational and fair matching. By~\eqref{eq:nesting}, greedy choice is
cutoff-containing. \autoref{thm:protection} then yields the main welfare
comparison, while \autoref{thm:cutoff-implementation} identifies cutoff
COP-and-Choose with the fair benchmark.

\begin{corollary}
\label{cor:fair-dominance}
For every preference profile and proposal-order convention, greedy
COP-and-Choose weakly Pareto dominates every individually rational and fair
matching. In particular, it weakly Pareto dominates cutoff COP-and-Choose. If
the two outcomes differ, the greedy outcome Pareto dominates the cutoff
outcome.
\end{corollary}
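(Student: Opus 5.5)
The plan is to combine \autoref{thm:protection}, \autoref{thm:cutoff-implementation}, and the nesting inclusion~\eqref{eq:nesting}. The first and main step is to verify that cutoff choice protects every individually rational and fair matching $\mu^0$. Once that holds, the same protection transfers to greedy choice, since $C^c_s(I)\subseteq C^g_s(I)$.

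Fix such a $\mu^0$, a school $s$, a set $I\subseteq D_s(\mu^0)$, and a student $j\in\mu^0(s)\cap I$. We must show $j\in C^c_s(I)$. List $I$ in $\pi_s$-order as $i_1,\ldots,i_m$, and let $j=i_\ell$. Consider any $i_k$ with $k<\ell$ and $i_k\notin\mu^0(s)$. Because $i_k\in D_s(\mu^0)$, we have $s\succeq_{i_k}\mu^0(i_k)$, and since $\mu^0(i_k)\neq s$ the preference is strict. Now $i_k$ outranks $j\in\mu^0(s)$, so $i_k$ would have justified envy toward $j$, which contradicts fairness.

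Hence every student in $\{i_1,\ldots,i_\ell\}$ belongs to $\mu^0(s)$. This set is a subset of $\mu^0(s)\in\cA_s$, so it is feasible by downward closure. Therefore $k^*\ge \ell$ in \autoref{def:cutoff}, and $j\in C^c_s(I)$. I expect this to be the only substantive step; the one care needed is that the strict-preference step uses $\mu^0(i_k)\ne s$.

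By~\eqref{eq:nesting}, the greedy profile $C^g$ also satisfies~\eqref{eq:protection} for $\mu^0$. \autoref{thm:protection} then gives that every greedy COP-and-Choose outcome weakly Pareto dominates $\mu^0$, under every proposal-order convention. This proves the first claim.

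For the second claim, \autoref{thm:cutoff-implementation} says that the cutoff COP-and-Choose outcome is the fair benchmark. The fair benchmark is individually rational and fair, so the first claim applies to it, and greedy weakly Pareto dominates cutoff.

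For the last claim, suppose the two outcomes $\mu^g$ and $\mu^c$ differ. Then some student $i$ has $\mu^g(i)\ne\mu^c(i)$. We already know $\mu^g(i)\succeq_i\mu^c(i)$, and because preferences are strict, the comparison for that student is strict. Hence $\mu^g$ Pareto dominates $\mu^c$.
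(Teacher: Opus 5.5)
Your proof is correct and follows essentially the paper's route. The paper cites (i)$\Rightarrow$(iii) of \autoref{thm:exact-fair-dominance}, using that greedy is cutoff-containing by \eqref{eq:nesting}; that implication is proved exactly as you argue (\autoref{lem:cutoff-protects-fair}, then containment, then \autoref{thm:protection}), and the comparison with cutoff and the strictness claim then follow from \autoref{thm:cutoff-implementation} and strict preferences, as in your last two paragraphs.
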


This comparison has a useful implication. If an individually rational and fair
matching is Pareto efficient, then every greedy COP-and-Choose outcome must
coincide with it. Indeed, \autoref{cor:fair-dominance} gives weak Pareto
dominance, while Pareto efficiency rules out any strict improvement. In
particular, greedy and cutoff coincide whenever the fair benchmark is Pareto
efficient.

Greedy can Pareto improve on the fair benchmark even when no student
can feasibly move to a preferred school while all other students'
assignments remain unchanged
(\autoref{prop:beyond-benchmark-completion}). It can even make every student
strictly better off than at a non-wasteful fair benchmark: under a suitable
proposal order in \autoref{ex:universal-improvement}, and under every order
in \autoref{rem:order-robust-improvement}.

In \autoref{ex:motivating}, greedy moves student $3$ from $b$ to $a$ relative
to cutoff and leaves students $1$ and $2$ unchanged. The resulting outcome is
unfair: continuing past blocker $2$ seats lower-priority student $3$. Yet
student $2$ cannot feasibly replace student $3$ at $a$, as required by weak
priority respect (\autoref{prop:greedy-characterization}).

\subsection{The exact cutoff-containing class}\label{sec:rule-frontier}

The next result shows that cutoff containment is also necessary for protecting
every individually rational and fair matching, equivalently for guaranteeing
that every COP-and-Choose outcome weakly Pareto dominates every such matching.

\begin{theorem}\label{thm:exact-fair-dominance}
For a profile $C$ of choice rules, the following are equivalent.
\begin{enumerate}[label=(\roman*),leftmargin=2.5em]
\item $C$ is cutoff-containing.
\item For every preference profile, $C$ protects every individually rational
and fair matching.
\item For every preference profile and every proposal-order convention, every
COP-and-Choose outcome under $C$ weakly Pareto dominates every
individually rational and fair matching.
\end{enumerate}
\end{theorem}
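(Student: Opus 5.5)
We prove the cycle (i)$\Rightarrow$(ii)$\Rightarrow$(iii)$\Rightarrow$(i). The step (ii)$\Rightarrow$(iii) is immediate: fix a preference profile and an individually rational and fair matching $\mu^0$. By (ii), $C$ protects $\mu^0$, so \autoref{thm:protection} gives that every COP-and-Choose outcome under $C$, under every proposal-order convention, weakly Pareto dominates $\mu^0$.

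For (i)$\Rightarrow$(ii) it suffices to show that the cutoff profile $C^c$ protects every individually rational and fair matching $\mu^0$. Containment~\eqref{eq:cutoff-containment} then transfers protection to $C$, since $\mu^0(s)\cap I\subseteq C_s^c(I)\subseteq C_s(I)$. Fix $s$ and $I\subseteq D_s(\mu^0)$.
\begin{itemize}
\item Take any $j\in I\setminus\mu^0(s)$. Then $s\succeq_j\mu^0(j)$ and $\mu^0(j)\neq s$, so $s\succ_j\mu^0(j)$.
\item By fairness, $j$ has no justified envy, so $j$ cannot outrank any member of $\mu^0(s)$.
\item Hence, listing $I$ by $\pi_s$, the students of $\mu^0(s)\cap I$ form an initial segment.
\item That segment is a subset of $\mu^0(s)\in\cA_s$, so it is feasible by downward closure.
\item Cutoff choice scans $I$ in priority order and stops only at the first infeasible prefix. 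Every prefix of this segment is feasible, so $C_s^c(I)$ contains the whole segment, that is, $\mu^0(s)\cap I\subseteq C_s^c(I)$.
\end{itemize}

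For (iii)$\Rightarrow$(i) we argue by contraposition. Suppose that for some school $s$ and set $I$ there is a student $i\in C_s^c(I)\setminus C_s(I)$.
\begin{itemize}
\item \emph{Preferences.} Every student in $I$ finds only $s$ acceptable, and every student outside $I$ finds no school acceptable.
\item \emph{The reference matching.} Let $\mu^0$ assign $C_s^c(I)$ to $s$ and leave everyone else unmatched. It is feasible because $C_s^c(I)\in\cA_s$ and every other school is empty.
\item \emph{Individual rationality.} This is clear from the preferences.
\item \emph{Fairness.} Only students of $I\setminus C_s^c(I)$ strictly prefer $s$ to their assignment. Since $C_s^c(I)$ is a top segment of $I$ under $\pi_s$, each of them is outranked by every member of $C_s^c(I)$, so none has justified envy.
\item \emph{The COP under any convention.} A student of $I$ is active exactly until she proposes to $s$: before proposing she is selected by no school, since she is in no proposal set, and afterwards she has nothing left to propose. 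So each student of $I$ proposes exactly once, no one else proposes, and $T_s^K=I$. Every other school receives no proposals.
\item \emph{Conclusion.} Student $i$'s only possible offer would come from $s$, and $i\notin O_s=C_s(I)$, so $i$ is unmatched. This is strictly worse for $i$ than $\mu^0(i)=s$, contradicting (iii).
\end{itemize}

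The only delicate points are two checks in the last implication. First, the constructed $\mu^0$ must be genuinely fair, which relies on cutoff choice returning a priority top segment. Second, the terminal proposal set must equal $I$ under every convention, which relies on each student having a single acceptable school.
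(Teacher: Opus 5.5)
Your proof is correct and follows the paper's route: (i)$\Rightarrow$(ii) via cutoff choice protecting every individually rational and fair matching (your initial-segment argument is \autoref{lem:cutoff-protects-fair} inlined), (ii)$\Rightarrow$(iii) via \autoref{thm:protection}, and (iii)$\Rightarrow$(i) via the same single-school construction in which $\mu^0$ assigns $C_s^c(I)$ to $s$. Your check that the terminal proposal set equals $I$ under every convention spells out, in more detail, what the paper states in one line.
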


Thus, cutoff is the pointwise smallest rule that protects every individually
rational and fair matching, while greedy is a non-wasteful rule with the same
protection property. Consistency singles out greedy within this class.

\begin{proposition}
\label{prop:cutoff-containing-uniqueness}
For each school $s\in\Sset$, the greedy rule $C_s^g$ is the unique
choice rule that is both consistent and cutoff-containing.
\end{proposition}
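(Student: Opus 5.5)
We prove two things: that $C_s^g$ is consistent and cutoff-containing, and that any rule with both properties equals $C_s^g$.

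\emph{Greedy has both properties.} Cutoff containment is exactly the nesting~\eqref{eq:nesting}. For consistency, take $C_s^g(I)\subseteq J\subseteq I$ and run greedy on $J$ and on $I$ in parallel, scanning students by priority. The runs are compared by induction on the scan position among students of $J$, with the claim that the accepted prefixes coincide.
\begin{itemize}
\item A student of $I\setminus J$ is rejected by greedy on $I$, so it leaves the accepted set unchanged.
\item For each $i\in J$, the set accepted so far is therefore the same in both runs, namely the selected students of higher priority.
\item Hence the feasibility test $I_{k-1}\cup\{i\}\in\cA_s$ gives the same answer in both runs.
\end{itemize}
So $C_s^g(J)=C_s^g(I)$.

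\emph{Uniqueness.} Let $C_s$ be consistent and cutoff-containing, and fix $I$ with students $i_1,\ldots,i_m$ in priority order. We show by induction on $k$ that
\[
C_s(I)\cap\{i_1,\ldots,i_k\}=I_k,
\]
the greedy set after $k$ steps. The tool is to apply consistency to cleverly chosen subsets $J$ whose cutoff choice pins down the relevant student.

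\emph{Step $k$ when $i_k$ is rejected by greedy,} i.e.\ $I_{k-1}\cup\{i_k\}\notin\cA_s$. Suppose toward contradiction that $i_k\in C_s(I)$.
\begin{itemize}
\item Set $S=C_s(I)$. By the induction hypothesis, $S$ contains $I_{k-1}\cup\{i_k\}$.
\item Then $I_{k-1}\cup\{i_k\}\subseteq S$ with $S\in\cA_s$, and downward closure makes $I_{k-1}\cup\{i_k\}$ feasible, a contradiction.
\end{itemize}
Here $C_s$ is only assumed to output feasible sets, and feasibility of $S$ is all we use. So this step is immediate.

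\emph{Step $k$ when $i_k$ is accepted by greedy,} i.e.\ $I_{k-1}\cup\{i_k\}\in\cA_s$. We must show $i_k\in C_s(I)$; this is the main obstacle. Suppose $i_k\notin C_s(I)$.
\begin{itemize}
\item Let $J=C_s(I)\cup\{i_k\}$, so $C_s(I)\subseteq J\subseteq I$ and consistency gives $C_s(J)=C_s(I)$, which omits $i_k$.
\item The members of $J$ ranked above $i_k$ are exactly $C_s(I)\cap\{i_1,\ldots,i_{k-1}\}=I_{k-1}$.
\item So $J$'s top segment through $i_k$ is $I_{k-1}\cup\{i_k\}$, which is feasible. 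Hence cutoff on $J$ includes $i_k$, i.e.\ $i_k\in C_s^c(J)\subseteq C_s(J)$, a contradiction.
\end{itemize}
This closes the induction, and at $k=m$ we get $C_s(I)=I_m=C_s^g(I)$.

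The only delicate point is the choice of $J=C_s(I)\cup\{i_k\}$ in the last step. It removes the rejected students who could block $i_k$ in a cutoff scan of $I$, while consistency keeps the choice unchanged.
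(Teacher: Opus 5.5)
Your proof is correct, and your consistency argument for greedy matches \autoref{lem:greedy-consistency}. Your uniqueness argument, however, takes a different route from the paper's.

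\textbf{The paper's route.} It inducts on $|I|$. If $I$ is infeasible, it lets $j$ be the first student greedy skips. Cutoff containment places the prefix $P=C_s^c(I)$ inside $C_s(I)$, so feasibility excludes $j$. Consistency of $C_s$ and of greedy then reduces the problem to $I\setminus\{j\}$, where the induction hypothesis applies.

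\textbf{Your route.} You fix $I$ and induct on scan position, so you need no induction over available sets and no appeal to greedy's own consistency in this direction. Rejections follow from feasibility of $C_s(I)$ alone. Acceptances follow from a single application of consistency to $J=C_s(I)\cup\{i_k\}$. On that set cutoff containment must reach $i_k$, because the induction hypothesis identifies the higher-priority members of $J$ with $I_{k-1}$.

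\textbf{Comparison.} The paper's version is shorter and shows greedy as cutoff applied after deleting blockers one at a time. Yours isolates where each axiom enters: consistency and cutoff containment are needed only to secure the students greedy accepts, and only through sets of the form $C_s(I)\cup\{i\}$.
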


Consistency requires that removing unselected students leave the school's
choice unchanged. Greedy also requires only one priority scan per school choice.
It operates through student proposals, local school choices, and terminal
student choice, without first computing a benchmark or specifying an additional
welfare objective. A benchmark-preserving optimizer requires both, as well as
the full reported preferences; different objectives select different
improvements (\autoref{sec:oracle}).

The Online Appendix studies cutoff-containing alternatives that can select
more or fewer students than greedy (\autoref{cor:maximum-enrollment} and
\autoref{thm:exact-weak-claim}).

\subsection{Reference matchings protected by cutoff and greedy}\label{sec:benchmark-frontier}

We now reverse the question and ask which reference matchings are protected by
a given choice rule. For a matching $\mu^0$, a school $s\in\Sset$, and a
student $i\in\I$, let
\[
  H_s^{\mu^0}(i)=D_s(\mu^0)\cap G_s(i).
\]
Thus, $H_s^{\mu^0}(i)$ is the set of students who outrank $i$ at $s$ and weakly
prefer $s$ to their assignments under $\mu^0$.

\begin{definition}[Claim compatibility]\label{def:claim-compatibility}
An individually rational matching $\mu^0$ is \emph{claim-compatible} if
\[
  H_s^{\mu^0}(i)\cup\{i\}\in\cA_s
  \qquad\text{for every $s\in\Sset$ and every $i\in\mu^0(s)$}.
\]
It is \emph{weakly claim-compatible} if
\[
  I\cup\{i\}\in\cA_s
\]
for every $s\in\Sset$, every $i\in\mu^0(s)$, and every
$I\in\cA_s$ with $I\subseteq H_s^{\mu^0}(i)$.
\end{definition}

Claim compatibility requires $i$ to be feasible together with all higher-priority
claimants at once. Weak claim compatibility requires this only with every
feasible subset of those claimants. Thus, claim compatibility implies weak claim
compatibility by downward closure.

We next characterize exactly which reference matchings are protected by cutoff
and by greedy.

\Needspace{6\baselineskip}
\begin{theorem}\label{thm:dual-protection}
For every individually rational matching $\mu^0$,
\begin{enumerate}[label=(\roman*),leftmargin=2.5em]
\item cutoff choice protects $\mu^0$ if and only if $\mu^0$ is claim-compatible;
\item greedy choice protects $\mu^0$ if and only if $\mu^0$ is weakly
claim-compatible.
\end{enumerate}
\end{theorem}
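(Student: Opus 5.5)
Both parts reduce to a statement about a single school $s$ and a single reference assignee $i\in\mu^0(s)$: protection means that for every $I\subseteq D_s(\mu^0)$ with $i\in I$, the rule selects $i$ from $I$. Since $I\subseteq D_s(\mu^0)$, the students in $I$ who outrank $i$ form a subset of $H_s^{\mu^0}(i)$, and every such subset arises, for example by taking $I=H\cup\{i\}$ for $H\subseteq H_s^{\mu^0}(i)$. Note that $i\in D_s(\mu^0)$ trivially. Also, whether cutoff or greedy selects $i$ from $I$ depends only on the students in $I$ who outrank $i$, because both rules scan in priority order and their decision at $i$ is made before any lower-priority student is seen. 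The plan is therefore to characterize, for each rule, when $i$ is selected from $H\cup\{i\}$ with $H\subseteq H_s^{\mu^0}(i)$, and then quantify over $H$.

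\emph{Cutoff.} Cutoff selects $i$ from $H\cup\{i\}$ exactly when the whole prefix $H\cup\{i\}$ is feasible, i.e., $H\cup\{i\}\in\cA_s$. If $\mu^0$ is claim-compatible, then $H_s^{\mu^0}(i)\cup\{i\}\in\cA_s$, so every $H\cup\{i\}$ with $H\subseteq H_s^{\mu^0}(i)$ is feasible by downward closure, and cutoff selects $i$. Conversely, if claim compatibility fails at $(s,i)$, take $I=H_s^{\mu^0}(i)\cup\{i\}\subseteq D_s(\mu^0)$. The full prefix ending at $i$ is infeasible, so cutoff stops at or before $i$ and rejects $i$, and \eqref{eq:protection} fails.

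\emph{Greedy.} Greedy applied to $H\cup\{i\}$ first processes $H$ and produces $J=C_s^g(H)$, which lies in $\cA_s$ and is a subset of $H$. It then selects $i$ if and only if $J\cup\{i\}\in\cA_s$. Under weak claim compatibility, $J$ is a feasible subset of $H_s^{\mu^0}(i)$, so $J\cup\{i\}\in\cA_s$ and $i$ is selected; this proves sufficiency.

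For necessity, suppose some feasible $J\subseteq H_s^{\mu^0}(i)$ has $J\cup\{i\}\notin\cA_s$. Choose $I=J\cup\{i\}\subseteq D_s(\mu^0)$. Greedy processes $J$ first, and since $J$ is feasible and downward closed, every intermediate set is feasible, so greedy selects all of $J$. Adding $i$ is then infeasible, so $i$ is rejected, and protection fails.

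The only points needing care are the reduction to sets with $i$ as the lowest-priority member, namely that membership of $i$ in the chosen set is unaffected by lower-priority students, and the observation in the greedy necessity step that greedy applied to a feasible set selects the whole set. Neither should be a real obstacle.
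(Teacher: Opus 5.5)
Your proof is correct and follows essentially the same route as the paper. The paper also reduces the selection of $i$ to the priority prefix $(I\cap G_s(i))\cup\{i\}$, via \eqref{eq:cutoff-prefix-test} for cutoff and \autoref{lem:greedy-prefix} for greedy, which you re-derive inline, and it uses the same test sets: $H_s^{\mu^0}(i)\cup\{i\}$ for cutoff, and $I'\cup\{i\}$ with $I'$ a feasible subset of $H_s^{\mu^0}(i)$ for greedy.
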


Every individually rational and fair matching is claim-compatible. Indeed, if
$i$ is assigned to $s$, every higher-priority student who weakly prefers $s$ to
her assignment must also be assigned to $s$; otherwise she would have justified
envy. Hence $H_s^{\mu^0}(i)\cup\{i\}$ is feasible. By
\autoref{thm:exact-fair-dominance} and \autoref{thm:dual-protection},
cutoff-containing rules are therefore exactly those that protect every
claim-compatible matching.

The distinction also connects the theory to the refugee-resettlement
benchmark. Interference-free matchings in the resource model are weakly
claim-compatible, so greedy protects the KDA outcome even when cutoff need
not; see \autoref{prop:resource-hierarchy} and \autoref{cor:if-dominance}.

\section{Greedy COP-and-Choose: Incentives and the Pareto Frontier}\label{sec:incentives}

Under substitutability, the terminal Choose step is redundant, and adding
size monotonicity gives strategy-proofness by the classical cumulative-offer
results \citep{HatfieldMilgrom2005}. 
We show that this classical sufficient domain is also the exact domain of
greedy's \emph{robust} strategy-proofness. Our necessity constructions use an 
additional capacity-one school and establish manipulation under every
proposal-order convention (\autoref{thm:robust-boundary}(ii)). Such an
augmentation also rules out any strategy-proof mechanism weakly Pareto
dominating greedy (\autoref{thm:sp-pareto-frontier}). Within the positive
domain, \autoref{prop:greedy-dominates-sp-class} compares greedy with other
strategy-proof mechanisms preserving the fair-benchmark guarantee.

\subsection{The strategy-proof Pareto frontier}\label{sec:sp-pareto-frontier}

The following specialization of \citet[Theorem~4]{HirataKasuya2017} gives
Pareto-undominance once strategy-proofness and non-wastefulness are established.

\begin{lemma}
\label{lem:sp-nw-frontier}
Every individually rational, non-wasteful, and strategy-proof mechanism is
Pareto-undominated among strategy-proof mechanisms.
\end{lemma}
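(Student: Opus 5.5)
The plan is a minimal-counterexample argument in the style of \citet{HirataKasuya2017}. Its two ingredients are truncation to a single acceptable school, which lets strategy-proofness transfer strict improvements to shorter preference lists, and non-wastefulness of $f$ combined with downward closure, which rules out the base case. Fix an individually rational, non-wasteful, strategy-proof mechanism $f$. Suppose, for contradiction, that a strategy-proof mechanism $g$ Pareto dominates $f$. Among all profiles at which $g$ and $f$ differ, choose one, $\succ$, minimizing the total number of acceptable schools summed over all students. At $\succ$ the outcomes differ, preferences are strict, and $g$ weakly dominates $f$. Hence some student $i$ satisfies $s:=g_i(\succ)\succ_i f_i(\succ)$. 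In particular $s\neq\emptysetmatch$, and $s$ is acceptable to $i$ because $f$ is individually rational.

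\emph{Step 1 (truncation).} Let $\succ'$ agree with $\succ$ except that $i$ reports $\succ'_i$, under which $s$ is her only acceptable school. By individual rationality, $f_i(\succ')\in\{s,\emptysetmatch\}$. If $f_i(\succ')=s$, then $i$ with true preference $\succ_i$ gains under $f$ by reporting $\succ'_i$, contradicting strategy-proofness of $f$. So $f_i(\succ')=\emptysetmatch$. Next, if $g_i(\succ')\neq s$, then $i$ with true preference $\succ'_i$ would gain under $g$ by reporting $\succ_i$, since that report yields $s$. Hence $g_i(\succ')=s\neq f_i(\succ')$. If $\succ_i$ had an acceptable school other than $s$, then $\succ'$ has strictly fewer acceptable schools in total, which contradicts minimality. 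Therefore every student who strictly prefers her $g$-assignment at $\succ$ has exactly one acceptable school, namely her $g$-assignment. Since $f$ is individually rational and differs from it, that student is unmatched under $f(\succ)$.

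\emph{Step 2 (non-wastefulness).} Apply this to $i$ at $\succ$. We have $s\succ_i\emptysetmatch=f_i(\succ)$, so non-wastefulness of $f$ gives $f(\succ)(s)\cup\{i\}\notin\cA_s$. But $i\in g(\succ)(s)\in\cA_s$. If $g(\succ)(s)$ contained $f(\succ)(s)$, then downward closure would make $f(\succ)(s)\cup\{i\}$ feasible. So some student $j$ has $f_j(\succ)=s$ but $g_j(\succ)\neq s$. Weak dominance and strict preferences then give $g_j(\succ)\succ_j f_j(\succ)=s\neq\emptysetmatch$. This makes $j$ a strictly improving student who is matched under $f(\succ)$, which contradicts the conclusion of Step~1. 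Hence no such $g$ exists.

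The main obstacle is choosing the induction measure so that the truncation step really shrinks the profile. Truncating fails to shrink it exactly when $i$'s list is already $\{s\}$. The minimality choice turns this failure case into a structural fact: every improving student is single-listed and unmatched under $f$. Non-wastefulness then contradicts that fact directly.
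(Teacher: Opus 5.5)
Your proof is correct, but it is organized differently from the paper's. Note that your names are swapped relative to the paper: there $g$ is the non-wasteful mechanism and $f$ is the dominating one.

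The paper builds an explicit chain of profiles. It truncates the current strict beneficiary $i_t$ to the school $s_t$ she receives under the dominating mechanism. Strategy-proofness of both mechanisms keeps her a strict beneficiary, now unmatched under the non-wasteful mechanism. Non-wastefulness and downward closure then yield a displaced student $i_{t+1}$ at $s_t$, who becomes the next strict beneficiary. Finally, the paper shows that $i_0,i_1,\ldots$ are pairwise distinct, since a truncated student can never again be a strict beneficiary, and finiteness of $\I$ gives the contradiction.

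You instead choose a differing profile that minimizes total acceptable-list length. The same truncation step then shows that, at this profile, every strict beneficiary already has a singleton list and is unmatched under the non-wasteful mechanism. One application of non-wastefulness then produces a strict beneficiary who is matched, which is a contradiction.

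Because truncation strictly lowers your measure, it replaces both the unbounded chain and the distinctness bookkeeping, and the contradiction arrives in a single step. This is close in spirit to the termination device the paper itself uses in Step~1 of the proof of \autoref{prop:greedy-dominates-sp-class}. The paper's version is more constructive: it traces the displacement chain explicitly, following Hirata and Kasuya's original argument.
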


The additional choice condition is size monotonicity. A choice rule $C_s$ is
\emph{size monotone}\footnote{\citet{Alkan2002} calls this property cardinal
monotonicity, while \citet{AlkanGale2003} use the term size monotonicity. It is
also known as the law of aggregate demand in matching with contracts.} if
\[
  |C_s(I)|\leq |C_s(I')|
  \qquad\text{whenever }I\subseteq I'\subseteq\I.
\]

Greedy choice is non-wasteful on each available set, but overlapping terminal
offers can make the final matching wasteful. Substitutability prevents overlap.

\begin{lemma}
\label{lem:greedy-sub-nw}
If greedy choice is substitutable at every school, then greedy
COP-and-Choose is non-wasteful under every proposal-order convention.
\end{lemma}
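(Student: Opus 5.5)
The plan is to show two things. First, under substitutability the terminal offer sets $O_s$ are pairwise disjoint, so every student holding an offer is assigned to the school that offers her. Second, any wasteful pair would contradict non-wastefulness of greedy choice at the terminal proposal set. Fix a preference profile and a proposal-order convention, and let $\mu=\mu^{C,\rho}$ with terminal proposal sets $T_s^K$ and offers $O_s=C_s^g(T_s^K)$.

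\emph{Step 1 (no reversals, hence disjointness).} Substitutability of $C_s^g$ together with $T_s^k\subseteq T_s^{k+1}$ gives $C_s^g(T_s^{k+1})\cap T_s^k\subseteq C_s^g(T_s^k)$. So a student not selected by $s$ at some stage is never selected by $s$ later. I would then argue by induction on stages that each student is selected by at most one school at any time.
\begin{itemize}
\item A student proposes only when she is selected by no school.
\item Proposing to $s$ can add her to $C_s^g$ only at $s$.
\item Her proposal cannot add anyone else at $s$, since every other member of $T_s^k$ was either already selected or was rejected and so stays rejected.
\end{itemize}
Hence the $O_s$ are pairwise disjoint, and $\mu(s)=O_s$ for every $s$.

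\emph{Step 2 (exhaustion of proposals).} Suppose $s\succ_i\mu(i)$. I claim $i\in T_s^K$. At termination $i$ is not active. If $\mu(i)=\emptysetmatch$, then $i$ holds no offer and has proposed to every acceptable school, and $s$ is acceptable because $s\succ_i\emptysetmatch$. If $\mu(i)=s'$, then $i$ proposed to $s'$ at some stage. She proposes in preference order and only while unselected, so before proposing to $s'$ she had already proposed to every school she ranks above $s'$, including $s$.

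\emph{Step 3 (conclusion).} Since $i\in T_s^K$ and, by disjointness, $i\notin O_s$ (she is held at $\mu(i)\neq s$ or by nobody), we have $i\in T_s^K\setminus C_s^g(T_s^K)$. Non-wastefulness of greedy choice (\autoref{prop:greedy-characterization}) then gives $O_s\cup\{i\}\notin\cA_s$, that is, $\mu(s)\cup\{i\}\notin\cA_s$. So $(i,s)$ is not a wasteful pair.

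The main obstacle is the induction in Step 1. The key point is that substitutability holds with respect to the growing cumulative set, so a rejected student's earlier proposal can never revive. The fact that a student's new proposal affects only one school's proposal set is what keeps the "at most one offer" invariant intact.
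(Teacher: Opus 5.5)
Your proof is correct and follows the paper's argument: substitutability makes terminal offers disjoint, so $\mu(s)=C_s^g(T_s)$. A student who prefers $s$ to her assignment must have proposed to $s$, and non-wastefulness of greedy choice at the terminal proposal set then rules out the wasteful pair. The only difference is that your Step~1 reproves the disjointness invariant inline, whereas the paper cites \autoref{prop:substitutability-cop} for it.
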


\autoref{lem:greedy-sub-nw} supplies the non-wastefulness required by
\autoref{lem:sp-nw-frontier}.\footnote{Deleting declined terminal offers and
recomputing choices can destroy reference protection;
\autoref{ex:iterated-choose} in the Online Appendix gives an example.}
Size monotonicity also makes greedy maximize enrollment from each available
set: for every feasible $I'\subseteq I$, $C_s^g(I')=I'$ implies
$|I'|\leq|C_s^g(I)|$.

In the motivating example, substitutability fails at school $a$ because
\[
  3\in C_a^g(\{1,2,3\})
  \quad\text{and}\quad
  3\notin C_a^g(\{2,3\}).
\]

\subsection{Robust order independence and strategy-proofness}\label{sec:robust}
\label{sec:robust-order}\label{sec:robust-sp}

Robustness varies the surrounding schools while holding one school's
constraint-priority pair fixed. Fix a set $I\subseteq\I$, a feasibility
collection $\cA$ on $I$, and a strict priority $\pi$ on $I$, and write
$C^\pi_{\cA}$ for the induced greedy choice rule.

Call $(\cA,\pi)$ \emph{robustly order independent} if COP-and-Choose is
independent of the ex ante proposal-order convention in every market in which
one distinguished school uses $C^\pi_{\cA}$, extended to $\I$ by declaring
every set containing a student in $\I\setminus I$ infeasible, and every other
school uses a substitutable choice rule.

Call $(\cA,\pi)$ \emph{robustly strategy-proof} if COP-and-Choose is
strategy-proof under every ex ante proposal-order convention in every market
in which one distinguished school uses $C^\pi_{\cA}$, extended to $\I$ as
above, and every other school uses a substitutable, size-monotone choice
rule. The other schools need not use greedy choice.

\citet{HatfieldKominersWestkamp2021} use this surrounding-market approach to
characterize when a stable and strategy-proof mechanism is guaranteed to
exist.\footnote{There is at most one contract per student--school pair in our
model. Every sequence of distinct students proposing to a given school is
therefore observable: a student making her first proposal there cannot
already be held there. Hence their observable substitutability and observable
size monotonicity conditions coincide with substitutability and size
monotonicity here; their condition on manipulation through contractual terms
is automatic.}
Here the necessity results concern COP-and-Choose, which remains defined when
substitutability fails and need not select stable outcomes.

\begin{theorem}
\label{thm:robust-boundary}\label{thm:robust-canonicity}\label{thm:robust-sp}
For any constraint-priority pair $(\cA,\pi)$:
\begin{enumerate}[label=(\roman*),leftmargin=2.5em]
\item The pair is robustly order independent if and only if
$C^\pi_{\cA}$ is substitutable. If substitutability fails, adding one
capacity-one school yields two proposal-order conventions whose outcomes differ
only in that one student is matched under one and unmatched under the other;
hence one outcome Pareto dominates the other.
\item The pair is robustly strategy-proof if and only if
$C^\pi_{\cA}$ is substitutable and size monotone. If either property fails,
the constraint-priority pair can be embedded in a market with one additional
capacity-one school in which COP-and-Choose is manipulable under every
proposal-order convention.
\end{enumerate}
\end{theorem}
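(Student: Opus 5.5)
Both parts split into a sufficiency half that reduces to classical cumulative-offer theory and a necessity half that needs an explicit one-school embedding.

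\emph{Sufficiency.} Suppose $C^\pi_{\cA}$ is substitutable and every other school uses a substitutable rule. Then all rules in the market are substitutable, and by \autoref{prop:substitutability-cop} the terminal offers are pairwise disjoint.

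Greedy choice is always consistent, so substitutability together with consistency gives path independence. The cumulative offer process with path-independent choice rules is order independent \citep{HatfieldMilgrom2005,HatfieldKojima2010}. The standard argument shows that the terminal proposal sets equal the least fixed point of the monotone proposal operator, and this fixed point does not depend on $\rho$.

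For (ii), add size monotonicity of $C^\pi_{\cA}$ and of the other schools. Strategy-proofness of the cumulative offer mechanism then follows from the classical result of \citet{HatfieldMilgrom2005}, equivalently from the observable versions in \citet{HatfieldKominersWestkamp2021}, which the paper's footnote shows coincide with our conditions. By order independence, the proposal-order convention is irrelevant.

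\emph{Necessity in (i).} Suppose substitutability fails. Then there exist $J\subseteq J'$ and a student $j\in J$ with $j\in C(J')$ but $j\notin C(J)$. Using consistency, I will shrink this to a minimal witness.

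The embedding mimics \autoref{ex:motivating}:
\begin{enumerate}[label=(\alph*),leftmargin=2.5em]
\item students in $J$ rank the distinguished school $s$ only;
\item student $j$ ranks $s$ first and a new capacity-one school $b$ second;
\item each student in $J'\setminus J$ ranks $s$ only;
\item a further student $k$ ranks only $b$, and $b$ ranks $k$ above $j$.
\end{enumerate}

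Under one convention, $J$ proposes first, so $j$ is rejected and proposes to $b$, and $k$ is displaced. The students in $J'\setminus J$ then arrive and revive $j$'s offer at $s$. Student $j$ chooses $s$, while $k$ ends up unmatched because she has already been rejected at $b$.

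Under the other convention, $J'$ proposes before $j$ ever becomes active at $b$, so $j$ is held at $s$ and $k$ keeps $b$.

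The two outcomes then differ only in $k$'s status, as the statement requires. To make this exact, the other members of $J'$ must receive identical choices in both runs. I will arrange this with consistency and by restricting to the minimal witness.

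\emph{Necessity in (ii).} If substitutability fails, I will adapt the (i) construction into a manipulation under every convention. As described in the introduction, $j$ either truncates her list or adds an unacceptable school so that the displaced student's next proposal revives her offer.

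If substitutability holds but size monotonicity fails, take $J\subseteq J'$ with $|C(J')|<|C(J)|$. The classical construction then applies: a student whose addition to the proposal set lowers the chosen count can, by dropping or altering her proposal to $s$, let a rival stay at $s$ rather than compete at a capacity-one school $b$. This gives the manipulation used in the law-of-aggregate-demand necessity arguments.

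\emph{Main obstacle.} The hardest step is making the manipulation profitable under every proposal-order convention, including adversarial ones. I would first try preference profiles with enough forced structure that every convention produces the same truthful outcome, for instance by giving each relevant student a single acceptable school until the critical moment. The convention then only permutes proposals whose effects commute under the substitutable parts, and the deviation's effect can be verified convention by convention.
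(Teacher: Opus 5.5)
Your sufficiency halves are fine, with one correction: the other schools are only assumed substitutable, so path independence is not available for them. \autoref{prop:substitutability-cop}, which you cite, already yields order independence from substitutability alone. You should also note that extending $C^\pi_{\cA}$ by loops preserves substitutability and size monotonicity.

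The necessity halves have genuine gaps. In (i), as written $b$ ranks $k$ above $j$, so $j$ can never displace $k$, and both conventions produce the same matching; you need $j\mathrel{\pi_b}k$. With that fix the market works, and the consistency/minimality step you plan is unnecessary: the terminal set at $s$ is $J'$ under both conventions.

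The construction still needs an extra student $k\notin J'$, whereas the statement embeds the pair by adding only a school. With the student set fixed, such a $k$ need not exist. At school $a$ in \autoref{ex:motivating}, the only substitutability violation is $3\notin C(\{2,3\})$, $3\in C(\{1,2,3\})$, which uses every student. The paper instead reduces to a one-student witness $u\notin C(M)$, $u\in C(M\cup\{j\})$. Since $C(M)$ is inclusion-maximal feasible in $M$, some $v\in C(M)\setminus C(M\cup\{j\})$ must be ejected when $j$ arrives. That student $v$ plays the role of your $k$: $u$ and $v$ both rank $a$ above $b$, and $b$ ranks $u$ above $v$.

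In (ii) you have only a sketch. For the substitutability failure, your (i) market does not adapt directly, because there the reviving students $J'\setminus J$ always propose to $s$. You need the one-student witness, with $j$ ranking $b$ above $a$, $b$ ranking $u$ above $j$, and $u$ truly finding only $a$ acceptable. Truthfully nobody displaces $j$, so $a$ faces $M$ and $u$ is unmatched. After the report $a\succ'_u b$, the fact $u\notin C(M)$ forces, under every convention, $u$'s eventual rejection at $a$, her displacement of $j$, and $j$'s reviving proposal.

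For size monotonicity, your sketch makes the student whose arrival lowers the count the manipulator, and it is unclear how she gains; in the paper the manipulator is one of the students her arrival ejects. The missing idea is this. For a one-student witness $|C(M)|>|C(M\cup\{j\})|$, consistency forces $j\in C(M\cup\{j\})$, and substitutability gives $C(M\cup\{j\})\setminus\{j\}\subseteq C(M)$. Hence at least two students $u,v\in C(M)\setminus C(M\cup\{j\})$ are ejected by $j$. Let $b$ rank $u\mathrel{\pi_b}v\mathrel{\pi_b}j$, with $a\succ_u b$, $b\succ_v a$, and $b\succ_j a$. Truthfully:
\begin{enumerate}[label=(\alph*),leftmargin=2.5em]
\item $v$ bumps $j$ from $b$;
\item $j$'s arrival at $a$ ejects $u$, by consistency;
\item $u$ bumps $v$ from $b$;
\item $v$ is rejected at $a$ and ends unmatched.
\end{enumerate}
By reporting only $a$ acceptable, $v$ leaves $j$ at $b$, so $a$ faces $M$ and selects $v$. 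Every rule is substitutable in this case, so order independence makes the every-convention requirement automatic. Your ``main obstacle'' arises only in the non-substitutable case.
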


In the proof of (ii), when substitutability fails, a student lengthens her
list by adding an unacceptable school. This forces another student's proposal
that revives her preferred offer. At other profiles, such a report can leave
her assigned to the school she finds unacceptable.

Substitutability at every school yields greedy stability, non-wastefulness,
and weak fairness: every student who prefers a school is infeasible alongside
its higher-priority assignees (\autoref{prop:greedy-stability}). Greedy
stability can also hold without substitutability (\autoref{ex:kk-stability}),
but overlapping terminal offers can violate weak fairness even at a
Pareto-efficient outcome (\autoref{ex:weak-fairness}).

Combining classical strategy-proofness with non-wastefulness
(\autoref{lem:greedy-sub-nw}) and the Hirata--Kasuya undominance result
(\autoref{lem:sp-nw-frontier}) gives the following corollary.

\begin{corollary}
\label{cor:robust-greedy-frontier}
If greedy choice is substitutable and size monotone at every school, then
greedy COP-and-Choose is strategy-proof, proposal-order independent,
non-wasteful, and Pareto-undominated among strategy-proof mechanisms.
\end{corollary}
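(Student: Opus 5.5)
The plan is to assemble the four conclusions from results already stated, checking that each hypothesis is available.

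\emph{Strategy-proofness.} Substitutability and size monotonicity of greedy choice at every school place every school's rule in the sufficient domain of the classical cumulative-offer results. Since the paper works with one contract per student--school pair, the rules are also consistent: greedy choice is always consistent, since removing unselected students does not change the greedy scan's admissions. Consistency together with substitutability gives path independence. I would check directly that, under substitutability, COP-and-Choose coincides with the cumulative offer process of \citet{HatfieldMilgrom2005}. By \autoref{prop:substitutability-cop}, a student rejected at a school is never selected there again, so the offer sets $O_s$ are pairwise disjoint and the terminal Choose step is redundant. The classical result then gives strategy-proofness. The paper's convention that the proposal-order rule is fixed ex ante is handled by the next step.

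\emph{Order independence.} I would take this from the \emph{if} direction of \autoref{thm:robust-boundary}(i), applied school by school. The intended argument is that, with substitutable and consistent (hence path-independent) rules, the terminal cumulative proposal sets are the same under every order. This is the standard confluence property of the cumulative offer process. Concretely, one shows that each proposal made under one order must also be made under any other, by induction on the first proposal that is missing, using that rejections are permanent under substitutability. Because order independence holds, the strategy-proofness conclusion does not depend on the convention either.

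\emph{Non-wastefulness and undominance.} Non-wastefulness is exactly \autoref{lem:greedy-sub-nw}, which needs only substitutability at every school. The mechanism is individually rational because students propose only to acceptable schools. \autoref{lem:sp-nw-frontier} then applies to an individually rational, non-wasteful, strategy-proof mechanism and delivers Pareto-undominance among strategy-proof mechanisms.

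The main obstacle I expect is the first step. The robust notions in \autoref{thm:robust-boundary} concern one distinguished greedy school embedded among other substitutable (and size-monotone) schools, whereas here all schools are greedy simultaneously. I would resolve this by noting that the \emph{if} directions of \autoref{thm:robust-boundary} apply to any market in which a distinguished school uses a substitutable (size-monotone) greedy rule and every other school's rule is substitutable (and size monotone). A market where all schools use such greedy rules is an instance of this, with any school taken as the distinguished one. Alternatively, one can cite the classical theorem directly, since it requires only substitutability and size monotonicity at every school.
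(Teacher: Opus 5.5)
Your proposal is correct and matches the paper's proof, which combines \autoref{lem:sub-sm-sufficient} (packaging exactly your reduction through \autoref{prop:substitutability-cop} to the classical cumulative-offer results) with \autoref{lem:greedy-sub-nw} and \autoref{lem:sp-nw-frontier}. The obstacle you flag does not arise, and the detour through \autoref{thm:robust-boundary} is unnecessary, because \autoref{lem:sub-sm-sufficient} applies directly when every school's rule is substitutable and size monotone.
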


Undominance leaves open whether greedy is best among the strategy-proof
mechanisms with which it shares the welfare guarantee of
\autoref{thm:exact-fair-dominance}. Within the class of substitutable
cutoff-containing rules it is.

\begin{proposition}
\label{prop:greedy-dominates-sp-class}
Suppose greedy choice is substitutable and size monotone at every school, and
let $C$ be a profile of substitutable, cutoff-containing choice rules such
that COP-and-Choose under $C$ is strategy-proof. Then greedy COP-and-Choose
weakly Pareto dominates COP-and-Choose under $C$ at every preference profile.
\end{proposition}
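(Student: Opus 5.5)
The plan is to reduce the statement to \autoref{thm:protection}. Take $\mu^0$ to be the outcome of COP-and-Choose under $C$ at the given (true) profile, and show that the greedy profile $C^g$ protects $\mu^0$. Then \autoref{thm:protection} gives, under every proposal-order convention, that greedy COP-and-Choose weakly Pareto dominates $\mu^0$. Because greedy is order independent here by \autoref{cor:robust-greedy-frontier}, this is exactly the claim.

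First I record the structure of $\mu^0$. Since every $C_s$ is substitutable, terminal offers under $C$ are disjoint (\autoref{prop:substitutability-cop}), and $\mu^0(s)=C_s(T_s^K)$. Every student $j\in D_s(\mu^0)\setminus\mu^0(s)$ strictly prefers $s$ to $\mu^0(j)$, so she proposed to $s$ and was rejected from a cumulative set. By substitutability this gives the no-blocking property
\[
j\notin C_s(\mu^0(s)\cup\{j\}) \qquad\text{for every } j\in D_s(\mu^0)\setminus\mu^0(s).
\]
Cutoff containment then yields $j\notin C^c_s(\mu^0(s)\cup\{j\})$. So every such $j$ comes after the first infeasible addition when $\mu^0(s)\cup\{j\}$ is scanned in priority order. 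In particular, if $j$ outranks some $i\in\mu^0(s)$, then $j$ is infeasible together with the members of $\mu^0(s)$ that precede her.

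Second, I verify protection via \autoref{thm:dual-protection}(ii): it suffices that $\mu^0$ be weakly claim-compatible. Fix $i\in\mu^0(s)$ and a feasible $I\subseteq H_s^{\mu^0}(i)$; I need $I\cup\{i\}\in\cA_s$. Equivalently, in a greedy scan of any $I'\subseteq D_s(\mu^0)$, no member of $\mu^0(s)\cap I'$ is ever skipped. I would argue by contradiction. Take the highest-priority $i\in\mu^0(s)\cap I'$ that greedy skips, and let $J$ be the set greedy has accepted above $i$. Since $\mu^0(s)$ is feasible, $J$ must contain outside claimants $j\in D_s(\mu^0)\setminus\mu^0(s)$. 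Here I would use that greedy choice at $s$ is substitutable and size monotone. By the Fleiner/HKYY characterization, greedy behaves like a matroid greedy on the sets involved, so the exchange property applies. Applied to the feasible sets $J$ and $\mu^0(s)\cap G_s(i)\cup\{i\}$, it produces some outside $j\in J$ such that $j$ together with the members of $\mu^0(s)$ above $j$ is feasible. This contradicts the no-blocking consequence of the previous step.

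The main obstacle is this exchange step. Substitutability and size monotonicity of greedy are given only at the fixed priority, not for every priority, so the full matroid exchange axiom is unavailable. I would first try to derive the needed exchange directly from size monotonicity of $C^g_s$, comparing $|C^g_s(\mu^0(s)\cup J)|$ with $|C^g_s(\mu^0(s))|=|\mu^0(s)|$ and with $|C^g_s(J\cup\{i\})|$. The hypothesis that COP under $C$ is strategy-proof would enter here if needed. With substitutability it forces $C$ to satisfy a size-monotonicity (law of aggregate demand) condition in the sense of \citet{HatfieldKominersWestkamp2021}, which bounds how many outside claimants $C$ can have rejected in favor of $\mu^0(s)$.

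If the exchange argument fails, my fallback is a Hirata--Kasuya style argument. Suppose some student strictly prefers $\mu^0$ to the greedy outcome. Truncate every student's list to her $\mu^0$ assignment; under the truncated profile, greedy returns $\mu^0$, since each school then receives only proposals forming the feasible set $\mu^0(s)$. I would then restore true preferences one student at a time, using strategy-proofness and non-wastefulness of greedy (\autoref{lem:greedy-sub-nw}) to show that no student's greedy assignment falls below her $\mu^0$ assignment along the way.
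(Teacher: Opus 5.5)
Your top-level reduction is sound, and it differs from the paper's route. The key step, weak claim-compatibility of $\mu^0$, is not established, and the argument you sketch for it rests on a false claim.

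The ``no-blocking property'' does not follow from substitutability. Substitutability transfers \emph{selection} from a larger set to a smaller one, not rejection. Deriving $j\notin C_s(\mu^0(s)\cup\{j\})$ from $j\notin C_s(U_s)$ would require consistency. By \autoref{prop:cutoff-containing-uniqueness}, every cutoff-containing $C_s\neq C_s^g$ is inconsistent, so this is unavailable exactly in the nontrivial case. \autoref{ex:strict-bracket} refutes the property directly. When all three students find only $a$ acceptable, $\mu^0(a)=\{2\}$ and student $3$ is rejected, yet $C_a(\{2,3\})=C_a^c(\{2,3\})=\{2,3\}$. The same example refutes your suggestion that strategy-proofness plus substitutability forces $C$ to be size monotone. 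The Hatfield--Kominers--Westkamp condition is a robustness condition over surrounding markets, not a pointwise implication. You concede that the matroid-exchange step is unavailable. The fallback is only a sketch: restoring one student's report lets strategy-proofness constrain that student's assignment, not anyone else's. So nothing in the proposal closes the gap.

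The missing idea is to apply substitutability of $C_s$, not of greedy, to the small set $I'\cup\{i\}$.
\begin{itemize}
\item Fix $i\in\mu^0(s)$ and a feasible $I'\subseteq H_s^{\mu^0}(i)$.
\item Every member of $H_s^{\mu^0}(i)$ weakly prefers $s$ to her $\mu^0$ assignment, so she has proposed to $s$. Hence $I'\cup\{i\}\subseteq U_s$, where $U_s$ is the terminal proposal set under $C$.
\item Since $i\in C_s(U_s)$, substitutability gives $i\in C_s(I'\cup\{i\})$.
\item Since $I'$ is a feasible priority prefix of $I'\cup\{i\}$, cutoff containment gives $I'\subseteq C_s(I'\cup\{i\})$.
\item Hence $I'\cup\{i\}=C_s(I'\cup\{i\})\in\cA_s$, so $\mu^0$ is weakly claim-compatible.
\end{itemize}
\autoref{thm:dual-protection}(ii) and \autoref{thm:protection} then finish the proof under every proposal-order convention.

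With this fix, your route genuinely differs from the paper's. The paper argues by contradiction. It first uses strategy-proofness of both mechanisms to truncate to a profile at which every strict beneficiary of $C$ reports a single school. It then derives a contradiction at that school from greedy consistency, permanent rejection under greedy, substitutability of $C_a$, and cutoff containment. The protection route is shorter. It uses neither the strategy-proofness of COP-and-Choose under $C$ nor the substitutability and size monotonicity of greedy. It therefore gives the dominance for every profile of substitutable, cutoff-containing rules.
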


Both mechanisms are proposal-order independent
(\autoref{lem:sub-sm-sufficient} and \autoref{prop:substitutability-cop}).
Together with \autoref{thm:exact-fair-dominance}, the proposition places these
outcomes weakly between the fair benchmark and greedy. Substitutability is
essential: \autoref{ex:nonsubstitutable-sp} gives a strategy-proof
cutoff-containing mechanism that greedy does not weakly Pareto dominate.

The comparison can be strict (\autoref{ex:strict-bracket}). Any such profile
differing from greedy must fail size monotonicity at some school:
substitutability and size monotonicity imply consistency
\citep{AygunSonmez2013}, and greedy is the unique consistent cutoff-containing
rule (\autoref{prop:cutoff-containing-uniqueness}). We do not characterize all
substitutable cutoff-containing profiles that induce strategy-proof mechanisms.

The next result asks whether any strategy-proof mechanism can preserve all of
greedy's welfare gains. The comparison allows arbitrary mechanisms, with no
stability requirement.

\begin{theorem}
\label{thm:sp-pareto-frontier}
Fix a nonempty $I\subseteq\I$, a feasibility collection $\cA$ on $I$, and a
strict priority $\pi$ on $I$. If $C^\pi_{\cA}$ violates substitutability or
size monotonicity, one additional capacity-one school yields a market in which,
under every proposal-order convention, no strategy-proof mechanism weakly
Pareto dominates greedy COP-and-Choose.
\end{theorem}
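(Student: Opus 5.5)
The plan reuses the market constructions behind the necessity part of \autoref{thm:robust-boundary}(ii). For each way substitutability or size monotonicity can fail, those constructions embed $(\cA,\pi)$ together with one additional capacity-one school $b$. There we will find a student $i$, a truthful profile $\succeq$, and a misreport $\succeq_i'$ such that greedy COP-and-Choose gives $i$ a strictly better assignment under $(\succeq_i',\succeq_{-i})$ than under $\succeq$, under every proposal-order convention. The goal is a short chain of profiles along which weak Pareto dominance pins down any dominating mechanism. Strategy-proofness then forces a contradiction.

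The argument runs as follows. Suppose $f$ is strategy-proof and weakly Pareto dominates greedy $g$. At the profile $\succeq'=(\succeq_i',\succeq_{-i})$, we want greedy's outcome to be Pareto efficient, or at least to give $i$ a school that no feasible matching can improve on while respecting the other students' greedy assignments. Then $f_i(\succeq')\succeq'_i g_i(\succeq')$. Combined with feasibility and the other students' lower bounds, this should force $f_i(\succeq')=g_i(\succeq')=s^*$, the school $i$ truly prefers. At the truthful profile, we want $g(\succeq)$ to be Pareto efficient with $i$ not receiving $s^*$. Then $f(\succeq)=g(\succeq)$, $i$ gets less than $s^*$, and reporting $\succeq_i'$ is a profitable deviation under $f$, which contradicts strategy-proofness.

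If Pareto efficiency at $\succeq$ is not available, the alternative is to show directly that every matching weakly dominating $g(\succeq)$ still denies $i$ the school $s^*$. This is where the capacity-one school does its work. It absorbs the displaced student, so that giving $i$ the school $s^*$ at $\succeq$ would require someone of no lower welfare to be displaced, which is infeasible.

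There are two cases, mirroring \autoref{thm:robust-boundary}(ii). In the substitutability-failure case, there is a witness set $J\subseteq J'$ and a student $k\in J$ with $k\in C^\pi_{\cA}(J')\setminus C^\pi_{\cA}(J)$. The misreport adds the school $b$, which is unacceptable to $i$, to $i$'s list. Her proposal to $b$ displaces another student, and that student's next proposal adds the missing members of $J'$ at the distinguished school, which revives $i$'s offer there. We must also check that under the deviation $i$ is not left at $b$. In the size-monotonicity-failure case, $|C(J)|>|C(J')|$, and we use the standard law-of-aggregate-demand manipulation: a student truncates or reorders to avoid triggering the larger set. In both cases $b$ and the preferences of the auxiliary students are tuned so that the relevant outcomes are Pareto efficient.

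The main obstacle is verifying Pareto efficiency, or the weaker ``no dominating matching gives $i$ the school $s^*$'' property, at the truthful profile, uniformly over every proposal-order convention and for an arbitrary $(\cA,\pi)$. My first attempt is to restrict students outside the witness set to find only $\emptysetmatch$ acceptable, so that only the few witness students and $b$ matter. The efficiency check then reduces to a finite case analysis on at most four students, using downward closure of $\cA$ and the minimality of the witness sets, chosen via \autoref{prop:greedy-characterization}.
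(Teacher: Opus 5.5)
\paragraph{Case 1 works, Case 2 has a gap.} Your Case~1 is fine and matches the paper's argument. You should first reduce to a one-student witness $u\notin C(M)$, $u\in C(M\cup\{j\})$, because the displaced student can add only herself to $a$'s proposers.

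The gap is in Case~2. Your plan needs a truthful profile where two things hold: greedy is manipulable by some $i$, and every matching weakly Pareto dominating greedy's outcome still denies $i$ her manipulation target. You assert that $b$ and the auxiliary preferences can be tuned to achieve this, but they cannot.

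Take the natural construction: $b$ ranks $u\mathrel{\pi_b}v\mathrel{\pi_b}j$, with $a\succ_u b$, $b\succ_v a$, $b\succ_j a$. Greedy gives $u\mapsto b$, $v\mapsto\emptysetmatch$, $j\mapsto a$, and $v$ obtains $a$ by declaring only $a$ acceptable. However, substitutability gives $(C(M\cup\{j\})\setminus\{j\})\cup\{u,v\}\subseteq C(M)\in\cA$. So moving $u$ and $v$ to $a$ and $j$ to $b$ is a feasible Pareto improvement that hands $v$ exactly $a$.

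A different augmentation does not help in general. Take $\cA=\{\emptyset,\{1\},\{2\},\{3\},\{1,2\}\}$ with $3\mathrel{\pi}1\mathrel{\pi}2$ (\autoref{ex:blocker}) and $\I=\{1,2,3\}$. Checking every priority at $b$ and every profile shows that greedy is manipulable only in this configuration, up to swapping $1$ and $2$. So no pair of profiles linked by a greedy manipulation rules out a strategy-proof $f$ that selects this Pareto improvement at the truthful profile.

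\paragraph{What Case 2 needs.} You need a third profile, at which $f$ is manipulable even though greedy is not.

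Strategy-proofness against $v$'s truncation $\succeq'$ gives $f_v(\succeq)\in\{a,b\}$. The value $f_v(\succeq)=b$ is impossible: weak dominance would then put $u$ and all of $C(M\cup\{j\})$ at $a$, contradicting non-wastefulness of greedy on $M\cup\{j\}$. So $f_v(\succeq)=a$, which forces $j$ to $b$ and $f_u(\succeq)=a$.

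Now let $u$ report only $a$, giving profile $\succeq''$. Strategy-proofness forces $f_u(\succeq'')=a$. Greedy at $\succeq''$ puts $v$ at $b$ and $C(M\cup\{j\})$ at $a$, so weak dominance would require $C(M\cup\{j\})\cup\{u\}\in\cA$, again contradicting non-wastefulness. The size-monotonicity case requires this chain; inheriting greedy's own manipulation is not enough.
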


The conclusion is not pointwise outside the domain: the augmentation is part
of the statement. Together with \autoref{thm:robust-boundary}(ii), this gives a sharp
robust boundary.

The next example separates the two robustness properties while holding
feasibility fixed and varying only priorities.

\begin{example}\label{ex:blocker}
Let $I=\{1,2,3\}$ and
$\cA=\{\emptyset,\{1\},\{2\},\{3\},\{1,2\}\}$.
Under $3\mathrel{\pi}1\mathrel{\pi}2$, greedy choice selects $\{3\}$ whenever
$3$ is available and otherwise selects all available students from
$\{1,2\}$. This choice rule is substitutable. It is
not size monotone, since $C^\pi_{\cA}(\{1,2\})=\{1,2\}$ but
$C^\pi_{\cA}(\{1,2,3\})=\{3\}$.
The pair is therefore robustly order independent but not robustly strategy-proof.

Under $1\mathrel{\pi}2\mathrel{\pi}3$, by contrast, greedy choice selects all
available students from $\{1,2\}$ whenever at least one is available, and
selects $3$ only when neither $1$ nor $2$ is available. This choice rule is
both substitutable and size monotone. Hence \autoref{thm:robust-boundary}(ii) implies
that the pair is robustly strategy-proof.

Finally, under $1\mathrel{\pi}3\mathrel{\pi}2$,
$C^\pi_{\cA}(\{2,3\})=\{3\}$,
$C^\pi_{\cA}(\{1,2,3\})=\{1,2\}$. Thus student \(2\) is selected from \(\{1,2,3\}\) but not from its subset \(\{2,3\}\), so substitutability fails.
\end{example}

We next consider guarantees across all priority rankings and then the design
of priorities under fixed constraints.

\subsection{Guarantees for every priority ranking}\label{sec:boundaries}

The results so far fix a priority ranking. We now ask which constraints
deliver the guarantees for every ranking. A feasibility collection $\cA$ is a
\emph{matroid} if it satisfies exchange: whenever $I,I'\in\cA$ and $|I|<|I'|$,
some $i\in I'\setminus I$ satisfies $I\cup\{i\}\in\cA$. Every capacity
constraint is a matroid. At a fixed priority, substitutability and size
monotonicity can hold beyond matroids, as \autoref{ex:blocker} showed;
\autoref{prop:fixed-priority} in the Online Appendix gives a
deletion--contraction decomposition of these two choice conditions.

For cutoff, the following proposition translates
\citet[Theorem~5 and Section~6.3]{KamadaKojima2024} through
\autoref{thm:cutoff-implementation}.\footnote{Their
Theorem~6 also rules out a feasible, fair, unanimous, and strategy-proof
mechanism beyond capacity constraints.}

\begin{proposition}
\label{prop:cutoff-incentives}
Capacity constraints are exactly the feasibility collections under which
cutoff COP-and-Choose is strategy-proof at every priority ranking and in every
market whose remaining schools are capacity-constrained. In particular, cutoff
COP-and-Choose is strategy-proof when every school has a capacity constraint. For every non-capacity feasibility collection, there is
a priority ranking under which the resulting constraint-priority pair can be
embedded in a market with one additional capacity-one school in which
cutoff COP-and-Choose is manipulable.
\end{proposition}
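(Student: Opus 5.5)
We prove the three assertions of \autoref{prop:cutoff-incentives} in turn, using \autoref{thm:cutoff-implementation} to identify cutoff COP-and-Choose with the fair benchmark (the student-optimal fair matching) at every profile and under every proposal-order convention. The key consequence is that strategy-proofness of cutoff COP-and-Choose in a given market holds exactly when the student-optimal fair mechanism is strategy-proof in that market. Every statement about the mechanism therefore reduces to a statement about the mechanism studied by \citet{KamadaKojima2024}.

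\textbf{Sufficiency.} Suppose every school has a capacity constraint. Then cutoff and greedy choice coincide: under a capacity, the first student who does not fit is the $(q_s+1)$-st, and no later student fits either. Both rules select the top $q_s$ available students. This rule is substitutable and size monotone, so COP-and-Choose is strategy-proof by the classical cumulative-offer result \citep{HatfieldMilgrom2005}. The same conclusion follows from \autoref{thm:robust-boundary}(ii), since every other school is capacity-constrained and hence uses a substitutable, size-monotone rule. Thus strategy-proofness holds at every priority ranking.

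\textbf{Necessity.} Fix a non-capacity feasibility collection $\cA$. The plan is to transport the construction of \citet[Theorem~5 and Section~6.3]{KamadaKojima2024} through \autoref{thm:cutoff-implementation}. A downward-closed $\cA$ that is not a capacity constraint contains sets $J\in\cA$ and $J'\notin\cA$ with $|J'|\le|J|$. Take a minimal such $J'$, so that every proper subset of $J'$ is feasible, and let $q$ be the size of a largest feasible set. Then there are a student $b$ and a set $P$ such that $P$ is feasible, $P\cup\{b\}$ is infeasible, and some student $c$ satisfies $P\cup\{c\}\in\cA$.

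Rank $P$ first, then $b$, then $c$; this makes $b$ a priority blocker for $c$ under cutoff choice. Add one capacity-one school $t$ with a suitable priority. Preferences are chosen so that $b$ holds a truthful assignment elsewhere, while $c$ wants the distinguished school but is stopped at the cutoff. The manipulation has two candidate forms:
\begin{itemize}
\item[(a)] $b$ truncates, or a student ranks $t$ to displace $b$ from $t$, so that $b$ re-enters or exits the distinguished school and changes the cutoff in the manipulator's favor;
\item[(b)] $c$ ranks $t$ first to displace a competitor whose next proposal changes the cutoff in the manipulator's favor.
\end{itemize}
With the profile in hand, compute the fair benchmark at the truthful profile and at the misreport, and exhibit a strict gain for the manipulator. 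By \autoref{thm:cutoff-implementation}, this is a manipulation of cutoff COP-and-Choose under every proposal-order convention.

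\textbf{Main obstacle.} The hard step is choosing the embedding carefully enough that only one extra capacity-one school is used, rather than relying on whatever surrounding market the cited theorem provides. Since Kamada--Kojima's result is stated for their own markets, I would verify their witness market directly. The first check is whether it consists of the distinguished school plus a single unit-capacity school. If it does not, I would rebuild it from the triple $(P,b,c)$ above, computing the fair benchmark at both profiles by hand.
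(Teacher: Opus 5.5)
Your argument is essentially the paper's: sufficiency because cutoff and greedy coincide with deferred acceptance under capacity constraints, and necessity by transporting \citet[Theorem~5]{KamadaKojima2024} through \autoref{thm:cutoff-implementation}. The check you flag is settled by the construction in their Online Appendix~A.5, which keeps the student set fixed and gives the other school capacity one, so your fallback is unnecessary. If you do want it, note that the triple $(P,b,c)$ does not follow directly from a minimal infeasible $J'$ as written, although it exists by moving one student at a time from a feasible set to an infeasible set of equal size. It then gives a self-contained witness with $b$ herself as manipulator: let students in $P$ find only the distinguished school $a$ acceptable, let $a\succ_b t\succ_b\emptysetmatch$ and $a\succ_c t\succ_c\emptysetmatch$, let $t$ rank $c$ above $b$, and let everyone else find neither school acceptable. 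Truthfully, $b$ blocks $c$ at $a$ and then loses $t$ to her, whereas declaring $a$ unacceptable lets $c$ in at $a$ and gives $b$ the school $t$.
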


Here the surrounding schools have capacity constraints, whereas
\autoref{thm:robust-boundary} allows substitutable and size-monotone rules.

At a fixed priority ranking, cutoff can be strategy-proof beyond capacity
constraints. For example, in the common-priority environment of
\citet[Section~6.3]{KamadaKojima2024}, the student-optimal fair mechanism,
and hence cutoff COP-and-Choose by
\autoref{thm:cutoff-implementation}, is strategy-proof.
\autoref{ex:cutoff-manipulation} in the Online Appendix shows that even a
non-capacity matroid can separate cutoff from greedy.

For greedy, the known characterization of matroids by substitutability at
every priority \citep{Fleiner2003,HKYY2026}, restated in
\autoref{lem:matroid-known}, combines with \autoref{thm:robust-boundary} and
\autoref{thm:sp-pareto-frontier} to yield the following corollary.

\begin{corollary}\label{cor:greedy-incentives}
Matroids are exactly the feasibility collections under which greedy
COP-and-Choose is robustly strategy-proof at every priority ranking. If every
school has a matroid constraint, then for every priority profile greedy
COP-and-Choose is strategy-proof, proposal-order independent, non-wasteful,
and Pareto-undominated among strategy-proof mechanisms. For every non-matroid
feasibility collection, there is a priority ranking under which the
constraint-priority pair can be embedded, with one additional capacity-one
school, in a market in which greedy COP-and-Choose is manipulable under every
proposal-order convention and no strategy-proof mechanism weakly Pareto
dominates it.
\end{corollary}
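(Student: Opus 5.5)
The proof assembles three ingredients: the known matroid characterization (\autoref{lem:matroid-known}), \autoref{thm:robust-boundary}(ii), and \autoref{thm:sp-pareto-frontier}. The key fact is that for a fixed feasibility collection $\cA$, $\cA$ is a matroid if and only if $C^\pi_{\cA}$ is substitutable for every strict priority $\pi$ \citep{Fleiner2003,HKYY2026}. I also need size monotonicity at every priority for matroids. By the greedy algorithm's optimality on matroids, $C^\pi_{\cA}(I)$ is a basis of $\cA|_I$, i.e., a maximal feasible subset of $I$. All bases of a matroid restriction have the same rank, and rank is monotone in $I$, so $|C^\pi_{\cA}(I)|=r(I)\le r(I')=|C^\pi_{\cA}(I')|$ whenever $I\subseteq I'$. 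If \autoref{lem:matroid-known} already packages this, I will cite it; otherwise this short rank argument suffices.

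For the first sentence (the characterization), suppose $\cA$ is a matroid. Then at every $\pi$ the rule $C^\pi_{\cA}$ is substitutable and size monotone, so the pair $(\cA,\pi)$ is robustly strategy-proof by the ``if'' direction of \autoref{thm:robust-boundary}(ii). Conversely, suppose $\cA$ is not a matroid. By \autoref{lem:matroid-known} some $\pi$ makes $C^\pi_{\cA}$ non-substitutable, and the ``only if'' direction of \autoref{thm:robust-boundary}(ii) then shows the pair is not robustly strategy-proof. One technical point is that the lemma is phrased for collections on a ground set $I$, while the robust notions extend to $\I$ by declaring sets containing outsiders infeasible. I will note that this extension preserves the greedy choice on subsets of $I$, and that outsiders are never chosen, so the two notions align.

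For the second sentence, suppose every school has a matroid constraint. Then every greedy rule is substitutable and size monotone at every priority profile, and \autoref{cor:robust-greedy-frontier} delivers strategy-proofness, proposal-order independence, non-wastefulness, and Pareto-undominance among strategy-proof mechanisms.

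For the third sentence, given a non-matroid $\cA$, I pick the priority $\pi$ supplied by \autoref{lem:matroid-known} under which substitutability fails. \autoref{thm:robust-boundary}(ii) gives an augmentation by one capacity-one school in which greedy is manipulable under every proposal-order convention. \autoref{thm:sp-pareto-frontier} gives an augmentation in which no strategy-proof mechanism weakly Pareto dominates greedy. This is the \textbf{main obstacle}: the statement asks for a single market with both properties, whereas the two theorems each supply their own market. I would first inspect the constructions in the proofs, expecting them to be the same augmented market (the substitutability-failure construction). Failing that, I would argue directly that weak Pareto dominance fails there. Any strategy-proof mechanism $f$ weakly dominating greedy must give the manipulating student at the manipulated profile at least her greedy assignment, and at the truthful profile at least her truthful greedy assignment. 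Strategy-proofness of $f$ together with greedy's gain from deviation then yields the contradiction, which is exactly the logic of \autoref{thm:sp-pareto-frontier}. If the two constructions still differ, I would state the third sentence as holding for the common priority $\pi$, with the market taken from \autoref{thm:sp-pareto-frontier}, and verify manipulability there using its own profiles.
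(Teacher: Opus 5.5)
Your proposal is correct and follows the paper's proof essentially step for step: \autoref{lem:matroid-known} together with the basis/rank argument for size monotonicity, then \autoref{thm:robust-boundary}(ii) and \autoref{cor:robust-greedy-frontier} on the matroid side, and the priority at which substitutability fails, fed into \autoref{thm:robust-boundary}(ii) and \autoref{thm:sp-pareto-frontier}, on the non-matroid side. The obstacle you flag does not arise: the proof of \autoref{thm:sp-pareto-frontier} explicitly reuses the two-school markets built in the proof of \autoref{thm:robust-boundary}(ii), here the substitutability-failure construction, so one market delivers both conclusions, as you expected.
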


With capacities, greedy and cutoff coincide with student-proposing deferred
acceptance. Greedy's welfare guarantee extends from matroids to arbitrary
general upper bounds.

Changing the choice rule within the cutoff-containing class cannot generally
repair incentives. \autoref{thm:no-sp-class} in the Online Appendix gives
fixed constraints and priorities for which \emph{every} profile of
cutoff-containing rules makes COP-and-Choose manipulable under every
proposal-order convention.

\subsection{Priority design under knapsack constraints}\label{sec:priority-design}

We next design priorities under multidimensional knapsack constraints.
Unlike approaches restricting constraints while leaving priorities unrestricted
\citep{KojimaTamuraYokoo2018}, we use priority as the design instrument.
\citet{HatakeyamaDoi2026} characterize protect-and-fill choice rules under
one-dimensional constraints; the following condition delivers substitutability
and size monotonicity of greedy choice in several dimensions.

\begin{proposition}
\label{prop:ascending}
Let $\cD$ be finite, let $r_i\in\mathbb{R}_+^{\cD}$ for every $i\in\I$,
and let $q\in\mathbb{R}_+^{\cD}$. A strict priority $\pi$ \emph{refines
ascending requirements} if $r_i\leq r_j$ componentwise and $r_i\neq r_j$ imply
$i\mathrel{\pi}j$.\footnote{Componentwise total ordering of the requirement
vectors is the \emph{monotonic sizes} condition of \citet[Definition~3]{DKT2023}.} Define
\[
  \cA=\Bigl\{I\subseteq\I:\sum_{i\in I}r_i^d\le q^d
  \text{ for every }d\in\cD\Bigr\}.
\]
If the requirement vectors $(r_i)_{i\in\I}$ are totally ordered componentwise,
then every strict priority $\pi$ refining ascending requirements makes
$C^\pi_{\cA}$ substitutable and size monotone. Hence $(\cA,\pi)$ is
robustly order independent and robustly strategy-proof.
\end{proposition}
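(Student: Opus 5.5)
The plan is to show first that, under the chain hypothesis, greedy choice with a priority refining ascending requirements coincides with cutoff choice and returns the longest feasible priority prefix. Both axioms then follow from simple prefix comparisons, and the robustness conclusions follow from \autoref{thm:robust-boundary}.

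\emph{Step 1 (greedy equals a prefix).} Because the vectors $(r_i)$ are totally ordered componentwise and $\pi$ refines ascending requirements, listing any available set $I$ as $i_1,\ldots,i_m$ in $\pi$-order gives $r_{i_1}\le r_{i_2}\le\cdots\le r_{i_m}$ componentwise. Ties between equal vectors may be broken arbitrarily, and this does not affect the monotonicity. Suppose greedy skips $i_k$, so that $\sum_{j\in I_{k-1}} r_j + r_{i_k}\not\le q$. Every later $i_\ell$ has $r_{i_\ell}\ge r_{i_k}$, and the selected set does not grow after a skip. Hence every later student also fails to fit. So $C^\pi_{\cA}(I)=C^c(I)=\{i_1,\ldots,i_{k^*}\}$, where $k^*$ is the length of the longest feasible prefix.

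\emph{Step 2 (substitutability).} Let $I\subseteq I'$ and let $i\in C^\pi_{\cA}(I')\cap I$. By Step 1, the set $\{j\in I':j\mathrel{\pi}i\}\cup\{i\}$ is feasible. Its subset $\{j\in I:j\mathrel{\pi}i\}\cup\{i\}$ is therefore feasible by downward closure. This subset is exactly the prefix of $I$ ending at $i$, so $i\in C^\pi_{\cA}(I)$.

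\emph{Step 3 (size monotonicity).} This is the step needing the most care. Again let $I\subseteq I'$, and set $k=|C^\pi_{\cA}(I)|$. I will compare order statistics. For each $j\le k$, the $j$-th student of $I'$ in $\pi$-order has a requirement componentwise no larger than that of the $j$-th student of $I$. To see this, note that among the first $j$ students of $I$, all lie in $I'$ and all have requirements at most $r$ of the $j$-th student of $I$. So at least $j$ students of $I'$ have requirement at most that value. Since $\pi$ orders $I'$ by nondecreasing requirement, the $j$-th student of $I'$ has requirement at most that value. The total order is used essentially here: it makes ``at most'' a chain comparison. Summing over $j\le k$ shows that the first $k$ students of $I'$ use no more of each resource than $C^\pi_{\cA}(I)$, so they form a feasible prefix. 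Step 1 then gives $|C^\pi_{\cA}(I')|\ge k$.

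\emph{Step 4 (conclusion).} Substitutability together with \autoref{thm:robust-boundary}(i) gives robust order independence. Substitutability and size monotonicity together with \autoref{thm:robust-boundary}(ii) give robust strategy-proofness. The requirements and priority are given on all of $\I$; if one works on a subset $I$, the same argument applies verbatim to the restriction.
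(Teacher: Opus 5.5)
Your proof is correct and follows the paper's argument. You show that greedy reduces to the longest feasible priority prefix, and hence coincides with cutoff choice. You obtain substitutability from the prefix test, inlining what the paper cites as \autoref{lem:cutoff-rejection}. You prove size monotonicity by the same order-statistic comparison, and you add the justification (via totality of the requirement chain) that the paper only asserts, before concluding through \autoref{thm:robust-boundary} exactly as the paper does.
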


If every school satisfies \autoref{prop:ascending},
\autoref{cor:robust-greedy-frontier} places greedy on the strategy-proof
Pareto frontier. A common priority gives full Pareto efficiency.

\begin{corollary}\label{cor:ascending-efficiency}
Suppose every school has the constraints of \autoref{prop:ascending}, with
the same componentwise totally ordered requirement vectors $(r_i)_{i\in\I}$
and possibly different capacity vectors $(q_s)_{s\in\Sset}$. If all schools
use the same strict priority $\pi$ refining ascending requirements, then
greedy and cutoff COP-and-Choose coincide with serial dictatorship in order
$\pi$ and are Pareto efficient.
\end{corollary}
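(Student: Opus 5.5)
The plan is to show that, with a common priority $\pi$ that refines ascending requirements and a common chain of requirement vectors, both greedy and cutoff choice at every school reduce to the same thing. Each school admits a prefix of the $\pi$-ordered proposers, namely the longest feasible one. Fix a school $s$ and an available set $I$, and list $I$ as $i_1,\dots,i_m$ in $\pi$-order. Because the requirement vectors are totally ordered componentwise and $\pi$ refines ascending requirements, we have $r_{i_1}\le r_{i_2}\le\cdots\le r_{i_m}$. For students with identical vectors the order is irrelevant to feasibility. Suppose the first infeasible addition is $i_{k+1}$, so $\{i_1,\dots,i_k,i_{k+1}\}\notin\cA_s$. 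Every later $i_j$ satisfies $r_{i_j}\ge r_{i_{k+1}}$ componentwise, so $\{i_1,\dots,i_k,i_j\}\notin\cA_s$ as well. Hence greedy never admits anyone after the first blocker. This gives $C_s^g=C_s^c$ pointwise at every school, so the two COP-and-Choose mechanisms coincide under every proposal-order convention. By \autoref{prop:ascending} the outcome is also order independent.

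Next I identify the common outcome with serial dictatorship in order $\pi$. Here student $i$, in turn, takes her most preferred acceptable school $s$ such that the students already assigned to $s$, together with $i$, remain feasible. I would use \autoref{thm:cutoff-implementation}: cutoff COP-and-Choose returns the fair benchmark. So it suffices to show that the serial dictatorship matching $\mu^{SD}$ is the student-optimal fair matching.

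\emph{Fairness of $\mu^{SD}$.} Suppose $i'$ prefers $s$ to $\mu^{SD}(i')$ and some $i\in\mu^{SD}(s)$ has $i'\mathrel{\pi}i$. When $i'$ chose, $s$ was infeasible for her given the earlier assignees to $s$. Those earlier assignees remain in $\mu^{SD}(s)$ and all precede $i$. Moreover $r_i\ge r_{i'}$, because $i'\mathrel{\pi}i$ forces $r_{i'}\le r_i$ under the total order and the refinement. Then downward closure and monotonicity of the knapsack sums imply that adding $i$ to that earlier set is also infeasible. But that set together with $i$ is contained in $\mu^{SD}(s)$, which is a contradiction.

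\emph{Optimality.} I would argue by induction along $\pi$ that every individually rational fair matching $\nu$ gives each student at most her serial-dictatorship school. The key step is this. Suppose all students before $i$ weakly prefer their $\mu^{SD}$ assignments to their $\nu$ assignments, while $\nu(i)=s\succ_i\mu^{SD}(i)$. Let $P$ be the set of students before $i$ who are assigned to $s$ in $\mu^{SD}$. Each such student weakly prefers $s$ to her $\nu$-assignment, so fairness of $\nu$ forces $P\subseteq\nu(s)$. Then $P\cup\{i\}\subseteq\nu(s)\in\cA_s$ is feasible, so $i$ could have taken $s$ in the dictatorship. This is a contradiction.

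For Pareto efficiency, I would use the standard induction argument for serial dictatorship. Suppose a matching $\nu$ weakly dominates $\mu^{SD}$. Proceeding in $\pi$-order, assume $\nu$ agrees with $\mu^{SD}$ on all students before $i$. Then $\nu(i)$ must be feasible alongside those students at $\nu(i)$, by downward closure of $\nu(\nu(i))$. So $i$ could have chosen $\nu(i)$ in the dictatorship, which gives $\mu^{SD}(i)\succeq_i\nu(i)$. Combined with weak dominance and strict preferences, this yields $\nu(i)=\mu^{SD}(i)$. Hence $\nu=\mu^{SD}$.

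The main obstacle is the step showing that greedy skips only students who are never followed by an admissible one. This is where the chain condition and the refinement are essential, and where ties in requirement vectors need care; I would handle ties by noting that equal vectors are interchangeable for feasibility. The rest consists of routine inductions.
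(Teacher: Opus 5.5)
Your proof is correct, but its middle step takes a different route from the paper's.

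The first step matches the paper. The chain condition and the refinement make greedy stop at the first blocker, so $C_s^g=C_s^c$ at every school; this is the argument inside the proof of \autoref{prop:ascending}. The two mechanisms are therefore literally identical.

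The routes diverge in how the outcome is identified with serial dictatorship. The paper works with greedy and the process itself. \autoref{prop:common-priority-efficiency} uses order independence (from substitutability) to fix the convention that always lets the highest-priority active student propose. It then shows that later, lower-priority proposers never change greedy's decisions about earlier students, so each student in turn obtains her best school that fits with its earlier assignees.

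You instead work with cutoff and the fair benchmark. You show that $\mu^{SD}$ is fair, which is where the chain condition enters a second time, through $r_{i'}\le r_i$ whenever $i'\mathrel{\pi}i$. You also show that $\mu^{SD}$ weakly Pareto dominates every individually rational fair matching; that induction actually works for any common priority. You then conclude with \autoref{thm:cutoff-implementation} and uniqueness of the fair benchmark.

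What each route buys:
\begin{itemize}
\item Your route needs no tracing of the cumulative-offer dynamics. It also yields the extra fact that serial dictatorship is the fair benchmark in this environment.
\item The paper's route is more general. \autoref{prop:common-priority-efficiency} needs only a common priority and substitutable greedy choice. It therefore covers common-priority matroid markets in which greedy and cutoff differ and serial dictatorship is unfair, for instance a rank-one matroid whose top-priority student is a loop, as at school $b$ in \autoref{ex:cutoff-manipulation}. There your fairness step, and hence the appeal to the fair benchmark, would fail.
\end{itemize}

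The Pareto-efficiency induction is the same in both. One small point to make explicit: in the fairness step, $s$ is acceptable to $i'$ because $\mu^{SD}$ is individually rational, so $s$ was among the schools $i'$ could have taken had it been feasible.
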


The common priority includes common tie-breaking among identical vectors.
\autoref{prop:common-priority-efficiency} gives the general result and proof
in the Online Appendix. The chain condition is automatic in one dimension,
as used in \autoref{sec:hias-simulations}; it cannot be dropped in several
dimensions, as the counterexample following the proof of
\autoref{prop:ascending} in \autoref{app:proofs} shows.

\section{Refugee Resettlement: Theory and Calibration}\label{sec:resources}

This section specializes the theory to refugee resettlement. We first relate
the fair and interference-free benchmarks in the multidimensional resource
model of \citet{DKT2023}, and then use the FY2017 HIAS calibration to quantify
the welfare gains and diagnose the incentive boundary.

\subsection{Multidimensional resources}\label{sec:resource-model}

We specialize our framework to the resource model of \citet{DKT2023}. Let
$\cF$ be a finite set of families and $\cL$ a finite set of localities, so that
$\I=\cF$ and $\Sset=\cL$. Let $\cD$ be a finite set of resource dimensions.
Each family $f\in\cF$ has requirement vector
$r_f\in\mathbb{Z}_{+}^{\cD}\setminus\{0\}$, and each locality
$\ell\in\cL$ has capacity
$q_\ell\in\mathbb{Z}_{+}^{\cD}$. The feasible sets at $\ell$ are
\[
  \cA_\ell
  =\Bigl\{F\subseteq\cF:
    \sum_{f\in F}r_f^d\leq q_\ell^d
    \text{ for every }d\in\cD\Bigr\}.
\]
These feasibility collections are downward closed.

For the results imported below, we retain the standing assumptions of
\citet[Section~II]{DKT2023}. Every family can be accommodated on its own by at
least one locality, and every family ranks being unmatched last. A locality
ranks every family it can accommodate on its own above every family it cannot
accommodate on its own. Institutionally incompatible family--locality pairs
are absent from both sides' lists, following the compatibility extension in
\citet[Section~II]{DKT2023}; every remaining locality is acceptable to the
family. These restrictions scope the invocations of their Theorems~5 and~6,
not the paper's earlier results for arbitrary general upper bounds.

Locality $\ell\in\cL$ \emph{weakly accommodates} family $f\in\cF$ alongside
$F\subseteq\cF\setminus\{f\}$ if, for every $d\in\cD$,
\begin{equation}\label{eq:weak-accommodation}
  r_f^d=0
  \quad\text{or}\quad
  r_f^d+\sum_{f'\in F}r_{f'}^d\leq q_\ell^d.
\end{equation}
Thus, only resource dimensions used by family $f$ constrain weak accommodation.
Recall that $H_\ell^{\mu^0}(f)$ consists of families that outrank $f$ at
$\ell$ and weakly prefer $\ell$ to their assignments under $\mu^0$.

\begin{definition}[Interference-freeness]\label{def:interference-free}
A matching $\mu^0$ is \emph{interference-free} if it is individually
rational and, for every locality $\ell\in\cL$ and every $f\in\mu^0(\ell)$,
locality $\ell$ weakly accommodates $f$ alongside $H_\ell^{\mu^0}(f)$.
\end{definition}

This is the interference-free condition of
\citet{DKT2023}.\footnote{The December~10, 2019 working-paper version,
\citet{DKT2019}, calls this condition weak envy-freeness (their
Definition~10) and defines it through a pairwise relation, strong envy (their
Definition~9); their footnote~21 notes the two formulations coincide. The
published version counts the same pairs as interference violations
\citep[Section~V.B]{DKT2023}. We follow the published terminology for the
condition and retain the working paper's name for the pairwise measure used in
\autoref{sec:hias-simulations}.}

\begin{proposition}
\label{prop:resource-hierarchy}
For individually rational matchings under multidimensional resource
constraints,
\[
  \text{fair}
  \ \Longrightarrow\
  \text{claim-compatible}
  \ \Longrightarrow\
  \text{interference-free}
  \ \Longrightarrow\
  \text{weakly claim-compatible}.
\]
Each implication can be strict.
\end{proposition}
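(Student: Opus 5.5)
I will prove the three implications in order and then give one small example per implication showing it is strict. Throughout, $\mu^0$ is individually rational and all families have $r_f\neq 0$.

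\emph{Fair $\Rightarrow$ claim-compatible.} This is the observation recorded after \autoref{thm:dual-protection}. Take $f\in\mu^0(\ell)$ and any $f'\in H_\ell^{\mu^0}(f)$. If $\mu^0(f')\neq\ell$, then $\ell\succ_{f'}\mu^0(f')$ and $f'\mathrel{\pi_\ell}f$, which is justified envy. Hence $H_\ell^{\mu^0}(f)\subseteq\mu^0(\ell)$, so $H_\ell^{\mu^0}(f)\cup\{f\}\subseteq\mu^0(\ell)\in\cA_\ell$, and downward closure gives feasibility.

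\emph{Claim-compatible $\Rightarrow$ interference-free.} Feasibility of $H_\ell^{\mu^0}(f)\cup\{f\}$ means $r_f^d+\sum_{f'\in H}r_{f'}^d\le q_\ell^d$ in every dimension $d$. Condition~\eqref{eq:weak-accommodation} asks for this inequality only in the dimensions with $r_f^d>0$, so it follows at once.

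\emph{Interference-free $\Rightarrow$ weakly claim-compatible.} Fix $f\in\mu^0(\ell)$ and a feasible $I\subseteq H:=H_\ell^{\mu^0}(f)$; I need $I\cup\{f\}\in\cA_\ell$. I will check each dimension separately.
\begin{itemize}
\item If $r_f^d=0$, then $\sum_{I\cup\{f\}}r^d=\sum_I r^d\le q_\ell^d$, because $I$ is feasible.
\item If $r_f^d>0$, interference-freeness gives $r_f^d+\sum_H r^d\le q_\ell^d$. Since requirements are nonnegative and $I\subseteq H$, we get $r_f^d+\sum_I r^d\le q_\ell^d$.
\end{itemize}
This is the only step where the asymmetry matters: feasibility of $I$ covers the dimensions $f$ does not use, and interference-freeness covers the dimensions it does use.

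\emph{Strictness.} I need three examples, each with one locality and a single preference or priority twist. I expect building these cleanly, while respecting DKT's standing assumptions, to be the main obstacle. I will use two dimensions.

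\begin{enumerate}
\item Claim-compatible but not fair. Use the setting of \autoref{ex:motivating}, where $2\mathrel{\pi_a}3$ and student $3$ is at $a$. Let $\mu^0$ place $1$ and $3$ at $a$, leave $2$ unmatched, and send $3$ nowhere else. Then $2$ has justified envy toward $3$, so $\mu^0$ is not fair. For claim compatibility at $f=3$, the set $H$ contains $1$ and $2$, and $\{1,2,3\}$ is infeasible. So I will modify the example: pick capacities so that $\{2,3\}$ is feasible, but $2$ is unmatched because $2$'s presence together with someone else, who is assigned elsewhere and thus not in $H$, would fail. Concretely, take a single locality with $q=(1,1)$ and $r_2=r_3=(1,0)$... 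The simplest version is this: family $2$ outranks $3$, prefers $\ell$, and is unmatched, while $H\cup\{3\}=\{2,3\}$ is feasible. That matching is claim-compatible but unfair, since $2$ envies $3$.
\item Interference-free but not claim-compatible. Let $r_f=(1,0)$ and $r_{f'}=(0,1)$, with $f'$ outranking $f$, $q=(1,0)$, and $f'$ unmatched but wanting $\ell$ (compatible with $\ell$ only in the priority sense). To respect DKT's assumption that each family fits alone somewhere, add a second locality for $f'$ that it ranks below $\ell$. Then $f$ is weakly accommodated, since only dimension $1$ counts, but $\{f,f'\}$ is infeasible. I still need to check that DKT's ranking assumption holds; since $f'$ cannot be accommodated alone at $\ell$, it must be ranked below $f$ or dropped as incompatible. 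So I will instead take $q=(1,1)$ and $r_{f'}=(0,2)$... but that also fails to fit alone. I will use two higher claimants instead: $q=(1,1)$ and $r_{f'}=r_{f''}=(0,1)$, both outranking $f$ and both wanting $\ell$. Each fits alone, $\{f,f',f''\}$ is infeasible in dimension $2$, yet $f$ uses only dimension $1$.
\item Weakly claim-compatible but not interference-free. Take $q=(2)$ with $r_f=(1)$ and two claimants $r=(2)$ outranking $f$. Every feasible subset of $H$ has at most one claimant, which gives total $3>2$. That fails, so I will adjust to $q=(2)$, $r_{f'}=r_{f''}=(1)$, with $f'$ and $f''$ claimants. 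Then $\{f',f''\}$ is feasible but $\{f',f'',f\}$ is not, so weak claim compatibility fails too. I therefore need the feasible subsets of $H$ to be small while $H$ as a whole is large, and I will try two claimants in different dimensions with $f$ using both: $q=(1,1)$, $r_f=(1,1)$, $r_{f'}=(1,0)$, $r_{f''}=(0,1)$. Here $\{f',f''\}$ is feasible, which breaks this too. The remaining fix is to use claimants that are individually feasible but jointly infeasible in a dimension $f$ does not touch, so that $I\cup\{f\}$ stays feasible for every feasible $I$. Verifying that such an example exists is the delicate point, and if a counterexample proves elusive I will revisit whether the last implication is in fact strict.
\end{enumerate}
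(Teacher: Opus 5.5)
Your three implications are correct and match the paper's argument. Your second strictness example ($q=(1,1)$, $r_f=(1,0)$, $r_{f'}=r_{f''}=(0,1)$, only $f$ assigned) also works and has the same structure as the paper's. The first example is right in its final abstract form. However, the numbers you tried ($q=(1,1)$, $r_2=r_3=(1,0)$) make $\{2,3\}$ infeasible; one resource with capacity $2$ and unit requirements suffices.

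The genuine gap is the third strictness claim. You stop without an example and even question whether weak claim compatibility is strictly weaker than interference-freeness. It is, and your first attempt was only one unit of capacity short. Take one resource, $q_\ell=3$, $r_{f'}=r_{f''}=2$, $r_f=1$, and priority $f'\mathrel{\pi_\ell}f''\mathrel{\pi_\ell}f$. Let the matching assign only $f$, while $f'$ and $f''$ find $\ell$ acceptable.

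The feasible subsets of $H_\ell^{\mu^0}(f)=\{f',f''\}$ contain at most one claimant, since $2+2>3$. Each such subset fits with $f$ within $3$ units, so the matching is weakly claim-compatible. But weak accommodation alongside all of $H_\ell^{\mu^0}(f)$ requires $1+2+2=5>3$ in the one dimension $f$ uses, so the matching is not interference-free. This is exactly the paper's example.

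The idea your search missed is this: the overload must occur in a dimension $f$ uses, while the claimants' mutual incompatibility keeps every feasible subset of $H_\ell^{\mu^0}(f)$ small enough to leave room for $f$. Your proposed fix, making the claimants jointly infeasible in a dimension $f$ does not touch, cannot succeed by itself. Weak accommodation checks only dimensions with $r_f^d>0$, so infeasibility elsewhere is invisible to it. The fix works only if the claimants also overload a dimension $f$ uses, for instance $q=(2,1)$, $r_f=(1,0)$, $r_{f'}=r_{f''}=(1,1)$.
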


The hierarchy connects the mechanisms studied here to those of
\citet{DKT2023}. Under the assumptions above, their Theorem~5 makes KDA the
unique family-optimal interference-free matching at every preference profile,
and their Theorem~6 makes Threshold Knapsack Deferred Acceptance (TKDA)
strategy-proof and interference-free. Both
mechanisms assign only admissible pairs, so their outcomes are individually
rational in our terminology under the outside-option convention above.

In one dimension, interference-freeness and claim compatibility coincide, and
the paper's protection results then deliver the following implementation
result, which also follows from \citet[footnote~12 and Online
Appendix~C.2]{DKT2023} together with \autoref{thm:cutoff-implementation}.

\begin{proposition}
\label{prop:kda-coincidence}
With one resource, cutoff COP-and-Choose and KDA return the same matching
at every preference profile.
Hence KDA implements the fair benchmark in one dimension.
\end{proposition}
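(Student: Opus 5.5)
The plan is to show that, with one resource, cutoff COP-and-Choose and KDA select the same matching, by identifying both with a single benchmark. By \autoref{thm:cutoff-implementation}, cutoff COP-and-Choose returns the fair benchmark, the student-optimal individually rational and fair matching. By DKT's Theorem~5, KDA returns the unique family-optimal interference-free matching. It therefore suffices to show that, with one resource, the fair benchmark is the family-optimal interference-free matching. The second sentence of the statement then follows immediately.

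The first step is to prove that interference-freeness implies claim compatibility when $|\cD|=1$. Since every requirement vector is nonzero, each $r_f$ is a positive scalar. Condition~\eqref{eq:weak-accommodation} then reads $r_f+\sum_{f'\in H_\ell^{\mu^0}(f)}r_{f'}\le q_\ell$. This says exactly that $H_\ell^{\mu^0}(f)\cup\{f\}\in\cA_\ell$. Combined with \autoref{prop:resource-hierarchy}, interference-free and claim-compatible matchings coincide in one dimension.

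The second step is to relate claim-compatible matchings to the fair benchmark $\mu^F$. Fairness implies claim compatibility, so $\mu^F$ is claim-compatible and hence interference-free. Now take any claim-compatible matching $\mu^0$. By \autoref{thm:dual-protection}(i), cutoff choice protects $\mu^0$. By \autoref{thm:protection}, the cutoff COP-and-Choose outcome, which is $\mu^F$ by \autoref{thm:cutoff-implementation}, weakly Pareto dominates $\mu^0$. So $\mu^F$ is an interference-free matching that weakly Pareto dominates every interference-free matching. In particular it weakly dominates the KDA outcome $\mu^K$.

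The third step closes the loop. $\mu^K$ is family-optimal among interference-free matchings and $\mu^F$ is interference-free, so $\mu^K$ weakly dominates $\mu^F$. With strict preferences, mutual weak dominance forces $\mu^K(f)=\mu^F(f)$ for every $f$. This conclusion does not depend on the proposal-order convention.

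The step I expect to need the most care is the scope bookkeeping rather than the combinatorics. DKT's Theorem~5 holds only under the standing assumptions of \autoref{sec:resource-model}: outside option ranked last, localities ranking self-accommodable families first, and incompatible pairs removed. Our objects must be read in the same restricted market, so that individual rationality, the demand sets $D_\ell$, and the sets $H_\ell^{\mu^0}(f)$ refer to the same admissible pairs. The cleanest route is to apply the protection argument directly in that compatibility-restricted market. There unacceptable pairs are absent from both lists, and \autoref{thm:cutoff-implementation} applies without change.
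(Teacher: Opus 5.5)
Your proposal is correct and follows the paper's proof essentially step for step. It identifies interference-freeness with claim compatibility in one dimension via $r_f>0$, then uses \autoref{thm:dual-protection}(i) and \autoref{thm:protection} to make the cutoff outcome family-optimal among individually rational interference-free matchings, and concludes by mutual weak dominance with KDA under DKT's Theorem~5, scoped to their standing assumptions as the paper does.
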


With several resources, cutoff and KDA need not coincide. Greedy nevertheless
protects every interference-free reference matching.

\begin{corollary}\label{cor:if-dominance}
For every preference profile and proposal-order convention, greedy
COP-and-Choose weakly Pareto dominates every individually rational,
interference-free matching. In particular, it weakly Pareto dominates KDA and
TKDA. At every preference profile, KDA weakly Pareto dominates cutoff
COP-and-Choose and TKDA. With one resource, KDA and cutoff COP-and-Choose
coincide.
\end{corollary}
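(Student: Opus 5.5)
The corollary has four parts, and we will prove them in order.

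\textbf{Step 1: greedy dominates every interference-free matching.} Fix a preference profile and an individually rational, interference-free matching $\mu^0$.
\begin{itemize}[leftmargin=2em]
\item By \autoref{prop:resource-hierarchy}, $\mu^0$ is weakly claim-compatible.
\item By \autoref{thm:dual-protection}(ii), greedy choice protects $\mu^0$.
\item By \autoref{thm:protection}, every greedy COP-and-Choose outcome weakly Pareto dominates $\mu^0$, under every proposal-order convention.
\end{itemize}

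\textbf{Step 2: greedy dominates KDA and TKDA.} The discussion after \autoref{prop:resource-hierarchy} records two facts from \citet{DKT2023}, valid under the standing assumptions:
\begin{itemize}[leftmargin=2em]
\item KDA's outcome is interference-free (their Theorem~5), and so is TKDA's (their Theorem~6).
\item Both mechanisms assign only admissible pairs, so their outcomes are individually rational in our sense.
\end{itemize}
Hence both outcomes qualify as reference matchings in Step~1, which gives the second claim. One point needs checking: protection is a statement about the greedy profile for the fixed preference profile, with no extra hypotheses. So the only inputs are individual rationality and interference-freeness of the reference matching.

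\textbf{Step 3: KDA dominates cutoff and TKDA.} By \autoref{thm:cutoff-implementation}, cutoff COP-and-Choose returns the fair benchmark, which is individually rational and fair. By \autoref{prop:resource-hierarchy}, it is then interference-free. KDA is the family-optimal interference-free matching (DKT Theorem~5), meaning it weakly Pareto dominates every interference-free matching. It therefore dominates the cutoff outcome, and it dominates TKDA's outcome, which Step~2 showed is interference-free.

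\textbf{Step 4: one resource.} The final sentence is exactly \autoref{prop:kda-coincidence}.

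\textbf{Main obstacle.} The delicate step is confirming that DKT's standing assumptions do not change what counts as individually rational or interference-free in our framework. Two features could cause trouble:
\begin{itemize}[leftmargin=2em]
\item the compatibility extension, under which incompatible pairs are removed from both lists;
\item the convention that every family ranks being unmatched last.
\end{itemize}
The plan is to argue that removing a pair from a family's list makes that locality unacceptable to the family. Then the matchings DKT consider are precisely the individually rational matchings of our model, and their family-optimality claim ranges over the same class of interference-free matchings that \autoref{prop:resource-hierarchy} addresses. Once this is in place, Steps~1--4 are direct chaining of the earlier results.
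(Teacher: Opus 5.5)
Your proposal is correct and follows the paper's proof step for step. Interference-freeness gives weak claim compatibility by \autoref{prop:resource-hierarchy}, so greedy protects the matching by \autoref{thm:dual-protection}(ii) and dominates it by \autoref{thm:protection}; DKT's Theorems~5 and~6, together with the admissible-pair and outside-option conventions, make the KDA and TKDA outcomes individually rational and interference-free; the fair cutoff outcome is interference-free and so is dominated by the family-optimal KDA; and the last claim is \autoref{prop:kda-coincidence}.
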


\subsection{HIAS calibration}\label{sec:hias-simulations}

We use the FY2017 HIAS calibration of \citet{DKT2023}: 329 refugee families,
20 localities, family composition by children, adults, and seniors, locality
capacities based on realized resettlement, the compatibility matrix, and
family--locality employment weights. These administrative data determine
constraints, compatibility, and employment-based priorities. Family
preferences are simulated as in DKT, and welfare comparisons are conditional
on the four preference specifications below.

\subsubsection{Calibration design}

For each family $f\in\cF$ and locality $\ell\in\cL$, let
$V_{f\ell}\in[0,1]$ be the normalized employment weight, $Y_\ell$ a locality
shock, and $E_{f\ell}$ an idiosyncratic family--locality shock. We generate
preferences from
\[
  u_{f\ell}=\delta V_{f\ell}+\beta Y_\ell+\gamma E_{f\ell},
\]
where $Y_\ell$ and $E_{f\ell}$ are independent
$\operatorname{Unif}[0,1]$ draws and $V_{f\ell}$ comes from the data.

We use DKT's four specifications. Type~1,
$(\delta,\beta,\gamma)=(0,1,0)$, gives families a common locality ranking;
Type~2, $(0,0,1)$, gives idiosyncratic preferences. Types~3, $(1,0,1)$, and~4,
$(1,1,0)$, add employment fit to idiosyncratic and common tastes, respectively.
These specifications vary alignment with employment priorities and whether
tastes are shared or family specific. We remove administratively incompatible pairs
and pairs in which the family cannot fit an empty locality. Localities rank
compatible families by decreasing employment weight, breaking ties by family
index.

We consider both the one-dimensional constraint on total refugees and the
three-dimensional child--adult--senior constraints. For each preference type,
we generate 100 preference draws and use the same draw under both resource
specifications, so the comparison across specifications holds family
preferences fixed. For each draw we compute cutoff COP-and-Choose, KDA, and
TKDA, and compute greedy
COP-and-Choose under 12 independently seeded proposal orders.

The sampled orders measure sensitivity to proposal order. An implementation
would commit to a publicly
announced fixed ordering of families or a public randomization protocol for
selecting the next active proposer, specified independently of preference
reports.

For greedy statistics, we average across the 12 proposal orders within each
preference draw and use the draw as the unit of observation. The 95-percent
Monte Carlo $t$ confidence intervals in \autoref{fig:hias-dominance} (Online
Appendix) and the replication tables reflect variation across simulated
preference draws.

\autoref{sec:proposal-order-dispersion} reports dispersion across the 12
orders and a robustness check extending the same profiles to 50 orders.
The minima reported below are taken over the sampled orders.

We measure priority violations using strong envy, in the sense of
\citet[Definition~9]{DKT2019}, which \citet[Section~V.B]{DKT2023} count as
interference violations. Family
$f\in\cF$ \emph{strongly envies} family $f'\in\cF$, assigned to locality
$\ell\in\cL$, if $f$ prefers $\ell$ to its assignment, has higher priority
than $f'$ at $\ell$, and $\ell$ cannot weakly accommodate $f'$ alongside all
families ranked above $f'$ who weakly prefer $\ell$ to their assignments.
When reporting strong envy, we count a family if it strongly envies at least
one other family. To measure severity, we also count, for each affected family
in each greedy outcome, the number of distinct assigned families it strongly
envies and the number of localities containing those families.

Every computed matching is feasible, and KDA and TKDA have no strong-envy
violations. In one dimension, cutoff COP-and-Choose and KDA coincide in all
400 preference draws (\autoref{prop:kda-coincidence}). In all 9,600 greedy runs, greedy
COP-and-Choose gives every family a weakly preferred locality assignment
relative to cutoff COP-and-Choose, KDA, and TKDA under its simulated preference.
KDA also weakly Pareto dominates TKDA in every profile
(\autoref{cor:if-dominance}).

We implement KDA and TKDA using Algorithms~3--5 of \citet{DKT2023}, with
preferences ranked by simulated utility, employment-weight priorities, and
capacities aligned by locality name. \autoref{sec:tkda-envy} describes the
TKDA implementation; \autoref{sec:join-sensitivity} reports sensitivity to
positional capacity alignment.

\subsubsection{Welfare gains}\label{sec:hias-results}

\begin{table}[t]
\centering
\caption{HIAS-calibrated mechanism comparison}
\label{tab:hias-main}
\resizebox{\textwidth}{!}{%
\begin{tabular}{llrrrrrrr}
\toprule
Dim. & Mechanism & \shortstack{Better\\vs.\ KDA} & \shortstack{Worse\\vs.\ KDA} & Matched & Mean rank & Employment & Strong envy & Waste \\
\midrule
1 & cutoff COP & 0.0 & 0.0 & 91.9 & 5.43 & 79.0 & 0.0 & 40.8 \\
 & KDA & 0.0 & 0.0 & 91.9 & 5.43 & 79.0 & 0.0 & 40.8 \\
 & TKDA & 0.0 & 63.6 & 57.0 & 11.76 & 63.6 & 0.0 & 80.4 \\
 & greedy COP & 9.5 & 0.0 & 93.3 & 5.04 & 79.2 & 9.3 & 14.8 \\
\addlinespace
3 & cutoff COP & 0.0 & 9.9 & 76.1 & 8.33 & 79.1 & 0.0 & 50.5 \\
 & KDA & 0.0 & 0.0 & 79.9 & 7.66 & 80.6 & 0.0 & 48.2 \\
 & TKDA & 0.0 & 59.0 & 44.5 & 13.94 & 57.7 & 0.0 & 85.4 \\
 & greedy COP & 25.9 & 0.0 & 90.6 & 5.53 & 83.7 & 51.8 & 11.8 \\
\bottomrule
\end{tabular}
}\par\medskip
\begin{minipage}{0.98\linewidth}\footnotesize
\emph{Notes:} Entries except mean rank are percentages. Better and Worse compare each mechanism with KDA family by family according to simulated family preferences. KDA and TKDA follow the formal algorithms of \citet{DKT2023} under employment-weight priorities. Greedy outcomes are first averaged over 12 proposal orders; all entries then average over 100 draws for each of four preference types. Mean rank assigns rank 21, one worse than the last of the 20 localities, to an unmatched family. Employment is the share of the employment-maximizing feasible assignment. Strong envy is defined in the text. Waste is the share of families for whom some strictly preferred locality could feasibly add the family to its assigned set. Ninety-five percent Monte Carlo confidence intervals across simulated preference draws are reported in the replication tables.
\end{minipage}
\end{table}

\begin{table}[t]
\centering
\caption{Greedy COP-and-Choose relative to KDA, by preference specification}
\label{tab:hias-by-type}
\resizebox{\textwidth}{!}{%
\begin{tabular}{llrrrrrrr}
\toprule
Dim. & Type & \shortstack{Better\\vs.\ KDA} & \shortstack{Matched\\(pp)} & \shortstack{Mean rank\\(positions)} & \shortstack{Employment\\(pp)} & \shortstack{Strong envy\\(greedy)} & \shortstack{Waste\\(greedy)} & \shortstack{Waste\\(KDA)} \\
\midrule
1 & Type 1 & 11.5 & $+1.1$ & $-0.35$ & $-0.1$ & 11.4 & 28.2 & 61.8 \\
 & Type 2 & 8.4 & $+1.4$ & $-0.42$ & $-0.3$ & 6.9 & 3.6 & 18.8 \\
 & Type 3 & 8.1 & $+1.8$ & $-0.47$ & $+0.8$ & 6.6 & 4.9 & 25.9 \\
 & Type 4 & 9.9 & $+1.1$ & $-0.34$ & $+0.2$ & 12.3 & 22.6 & 56.7 \\
\addlinespace
3 & Type 1 & 33.3 & $+11.3$ & $-2.13$ & $+3.3$ & 83.0 & 23.4 & 69.2 \\
 & Type 2 & 21.9 & $+11.5$ & $-2.45$ & $+2.8$ & 21.3 & 2.8 & 28.3 \\
 & Type 3 & 19.0 & $+8.5$ & $-1.84$ & $+2.8$ & 26.7 & 3.8 & 31.9 \\
 & Type 4 & 29.5 & $+11.4$ & $-2.10$ & $+3.5$ & 75.9 & 17.3 & 63.5 \\
\bottomrule
\end{tabular}
}\par\medskip
\begin{minipage}{0.98\linewidth}\footnotesize
\emph{Notes:} Better, strong envy, and waste are percentages of families. Matched, mean rank, and employment are greedy minus KDA: percentage points of families placed, positions of mean simulated-preference rank (negative is better; an unmatched family is assigned rank 21, one worse than the last of the 20 localities), and percentage points of the employment-maximizing feasible total. Types~1 and 4 include a common locality component; Types~2 and 3 are purely idiosyncratic or idiosyncratic plus employment fit. Greedy outcomes are first averaged over 12 proposal orders; all entries then average over 100 draws. Ninety-five percent Monte Carlo confidence intervals are reported in the replication tables.
\end{minipage}
\end{table}

The calibration distinguishes welfare gains under simulated preferences,
strong-envy violations, and profitable misreports. A family can gain relative
to KDA and still strongly envy another family; strong envy establishes neither
a welfare loss relative to KDA nor manipulability at the given profile.
\autoref{sec:hias-incentives} examines incentives separately.

\autoref{tab:hias-main} pools the four preference specifications;
\autoref{tab:hias-by-type} reports greedy against KDA for each. Pooled, greedy
COP-and-Choose gives 9.5 percent of families a strictly preferred simulated
assignment relative to KDA with one resource and 25.9 percent with three, and
no family a worse one under either specification; the reverse share is zero in
every one of the 9,600 runs. Among the sampled market--order pairs, the
minimum strictly improved share is 2.1 percent in one dimension and 12.8
percent in three. The gains come with higher placement and better ranks:
relative to KDA, 1.1--1.8 percentage points more families are placed with one
resource and 8.5--11.5 with three, and mean simulated-preference rank improves
by 0.34--0.47 and 1.84--2.45 positions. Predicted employment rises by 2.8--3.5
percentage points with three resources but moves by $-0.3$ to $+0.8$ with one.
Family-preference welfare and aggregate predicted employment happen to move
together in the three-dimensional calibration; the Pareto result does not imply
this, and the one-dimensional losses and the ceiling comparison below show that
greedy need not advance the planner's employment objective. Additional
placements account for the larger three-dimensional employment gain, more than
offsetting the loss from reassigning already placed families
(\autoref{sec:oracle} in the Online Appendix gives the decomposition).

The improvement comes with weaker priority protection. Cutoff COP-and-Choose
is fair and KDA is interference-free, and both have zero strong envy. Greedy
creates strong envy for 6.6--12.3 percent of families with one resource and
21.3--83.0 percent with three. Incidence understates the intensive margin:
among affected families in the two three-dimensional specifications with the
highest incidence, the mean numbers of strongly envied assignees are 7.37 and
7.65, the 90th percentiles are 23 and 24, and the maxima are 78 and 73
(\autoref{sec:envy-severity} in the Online Appendix reports all specifications
and a locality-count measure).

Ordinary justified envy gives nearly the same three-resource incidence for
greedy, 51.9 percent versus 51.8 percent under strong envy. For KDA it is
6.1 percent, despite zero strong envy; cutoff has neither
(\autoref{tab:ordinary-envy} in the Online Appendix).

The three-resource results reveal a common pattern across preference types.
Under the common-component specifications (Types~1 and~4), greedy improves
33.3 and 29.5 percent of families relative to KDA, while 83.0 and 75.9 percent
strongly envy an assigned family. Under the idiosyncratic specifications
(Types~2 and~3), 21.9 and 19.0 percent improve and 21.3 and 26.7 percent
strongly envy. KDA leaves 63--69 percent of families with a feasible strictly
preferred locality under common-component preferences, against 28--32 percent
under idiosyncratic preferences; greedy reduces these shares to 17--23 and
3--4 percent. Common rankings concentrate proposals at the same localities,
where skipping priority blockers both frees capacity and overrides priority
claims. Greedy's welfare gain and fairness cost are largest in the same
environments.

Within each locality's proposal set, a skipped family cannot be accommodated
by displacing any nonempty set of selected families below it in employment
priority (\autoref{prop:greedy-characterization}).

Waste moves in the opposite direction. Greedy COP-and-Choose is not guaranteed
to be non-wasteful here, because \autoref{lem:greedy-sub-nw} requires
substitutability and the calibrated constraint-priority pairs do not satisfy
it. Measured waste is nevertheless far lower than under the KDA benchmark: the
share of families for whom some strictly preferred locality could feasibly
accommodate them is 14.8 percent in one dimension and 11.8 percent in three,
against 40.8 and 48.2 percent under KDA.

Residual waste can be removed by repeatedly moving one family to a strictly
preferred locality that can feasibly add it, leaving every other assignment
unchanged (\autoref{prop:completion}). This preserves greedy's benchmark
dominance. Completion is vacuous when greedy choice is substitutable at every
school (\autoref{lem:greedy-sub-nw}); outside the positive domain,
\autoref{thm:sp-pareto-frontier} rules out strategy-proof weak improvements in
its constructed markets. \autoref{sec:completion-incentives} tests a fixed
completion rule in the calibrated markets. Under the lowest-index proposal
convention, moving the lowest-indexed family with an available improvement to
its best feasible preferred locality raises beneficiary shares relative to
KDA from 8.0 to 10.7 percent with one resource and from 25.1 to 26.2 percent
with three, with no losses. A finite search finds profitable reports for this
completed mechanism in 24 and 27 of the respective 400 profiles. Because the
searches for greedy and completed greedy can select different candidate
families, these counts alone do not establish how completion changes
manipulability. Reports now affect both the proposals and the subsequent
improving moves. Completion leaves the ascending-size outcome unchanged.

Overlapping terminal offers occur in 94.4 and 93.1
percent of greedy runs with one and three resources, respectively
(\autoref{sec:overlap}). The deferred-acceptance analogue that permanently
discards greedy rejections produces different outcomes, with no general
Pareto ranking (\autoref{sec:retention}). For cutoff, discarding rejections
can miss the fair benchmark (\autoref{ex:cutoff-da}).

TKDA supplies a strategy-proof comparison under the same employment
priorities. Greedy improves 66.0 percent of families relative to TKDA with one
resource and 70.8 percent with three, worsening none. KDA improves 63.6 and
59.0 percent, also worsening none (\autoref{tab:tkda-pairwise} in the Online
Appendix). These gaps
measure the comparison with one strategy-proof mechanism, not an unavoidable
cost of strategy-proofness.

Among all matchings that leave no family worse off than under KDA, the largest
attainable number of beneficiaries averages 66.7 families with one resource and
123.2 with three; greedy realizes 48.9 and 70.1 percent of those ceilings on
average and attains neither in any run. The corresponding placement captures
are 27.7 and 67.4 percent (\autoref{sec:oracle}). A larger beneficiary count
does not, however, mean that the maximizing matching Pareto dominates greedy:
different improvements benefit different families, and the
beneficiary-maximizing matching reaches the unconditional placement ceiling in
only 9 of 400 one-resource markets and 24 of 400 three-resource markets.
Since KDA is family-optimal among individually
rational interference-free matchings, any KDA-preserving optimizer also weakly
Pareto dominates every such matching, including every individually rational
fair matching.

\subsubsection{Incentive diagnosis and priority design}\label{sec:hias-incentives}

With one resource, a family's requirement is its total size, so \autoref{prop:ascending}
applies: ranking compatible families by ascending size, rather than by
predicted employment, makes greedy choice substitutable and size monotone at
every locality. The code breaks size ties by family index at every locality,
giving a common strict order. Greedy COP-and-Choose is therefore serial
dictatorship in that order (\autoref{cor:ascending-efficiency} and
\autoref{prop:common-priority-efficiency}), and is strategy-proof,
Pareto efficient, and proposal-order independent. In the calibration, all 12 sampled proposal orders
then produce the same outcome in each of the 400 preference draws, whereas
under employment-based priorities they produce more than one outcome in every
draw. The reordering has distributional consequences. Relative to the outcome
under employment-based priorities, ascending-size priority increases the number
of families placed by 3.1 percent and improves mean simulated-preference rank
from 5.04 to 3.46, but lowers total predicted employment by 3.6 percent and
shifts placement toward smaller families: on average, 6.5 fewer families of
five or more members are placed per draw (\autoref{sec:priority-design-detail}
reports the full incidence by family size). The same prescription does not apply to the
three-dimensional specification, because the requirement vectors are not
totally ordered componentwise.

Under the employment-based priorities the calibration exhibits actual
manipulation as well as failure of the robust incentive conditions.
\citet[Proposition~3]{DKT2023} already show that KDA is not strategy-proof
because the choice function it induces at localities fails cardinal
monotonicity; the diagnosis here is the analogous one for greedy, made exact by
\autoref{thm:robust-boundary}. Under the fixed rule that the lowest-indexed active
family proposes next, a finite search finds profitable reports in 10 of 400
one-resource profiles (2.5 percent) and 16 of 400 three-resource profiles
(4.0 percent). For each profile, the search tests 20 poorly assigned families
and a subset of their possible reports. These rates are therefore lower
bounds on manipulable profiles in the calibrated sample, not estimates of
manipulability over all reports. \autoref{sec:deviation-search} gives the
report classes, witnesses, and replication protocol.

The domain diagnosis is distinct. The calibrated constraints fail the matroid
condition that guarantees robust order independence and strategy-proofness of
greedy COP-and-Choose for every priority ranking. At every locality under both
resource specifications, some compatible-family set has maximal feasible
subsets of different cardinalities; hence the corresponding feasibility
collection is not a matroid. The employment-based priorities also fail the
fixed-priority condition: direct checks show that greedy choice is not
substitutable at any calibrated locality under either resource specification.
Hence no calibrated constraint-priority pair is robustly strategy-proof.

The Los Angeles configuration in the Introduction reproduces
\autoref{ex:motivating}, and hence its substitutability violation. The
replication package supplies witnesses for every other locality. By itself,
failure of the robust conditions would not establish manipulation in this
network: the necessity theorem constructs a surrounding market rather than
testing a given preference profile.

\section{Conclusion}

General upper bounds make priority protection costly even when additional
assignments are feasible. COP-and-Choose separates that cost from feasibility:
cutoff implements the student-optimal fair benchmark, while greedy weakly
Pareto dominates every individually rational and fair matching under every
proposal order. Reference protection identifies exactly which choice rules
preserve this comparison and which matchings they protect.

The incentive results delimit when these welfare gains are compatible with
strategy-proofness. At a fixed priority, greedy requires substitutability and
size monotonicity for a robust guarantee; across priorities, the exact
constraint class is matroids. Inside the domain, greedy is not merely
undominated but dominant among the strategy-proof substitutable rules that share its
welfare guarantee. Outside that domain, a surrounding-market
construction rules out a strategy-proof mechanism weakly dominating greedy.
These are robust boundaries, not pointwise impossibility statements about
every market outside the domain.

The calibration shows that the welfare margin can be substantial, but so can
the priority violations. It also supplies direct manipulation witnesses and
shows the distributional cost of restoring the one-resource incentive
guarantee through priority design. The design question is therefore which
priority claims an institution is willing to relax, and what incentive and
distributional safeguards it requires in return.
\appendix

\section{Proofs and Supporting Results}
\label{app:proofs}

This appendix follows the order of the main text. Within each subsection,
supporting results are grouped with the main-text proof that uses them.

\subsection{Choice rules and the fair benchmark}

\begin{proof}[Proof of \autoref{prop:greedy-characterization}]
Fix $I\subseteq\I$. By construction, $C_s^g(I)\in\cA_s$, so $C_s^g$ is a
choice rule. Suppose student $i\in I$ is skipped, and let $X$ be the set
selected before $i$ is considered. Then $X\cup\{i\}\notin\cA_s$. Since
$X\subseteq C_s^g(I)$, feasibility of $C_s^g(I)\cup\{i\}$ would imply
feasibility of $X\cup\{i\}$ by downward closure. Hence
$C_s^g(I)\cup\{i\}\notin\cA_s$, so greedy choice is non-wasteful.

Now suppose that $i$ has higher priority than every student in a nonempty
$I'\subseteq C_s^g(I)$. Every student in $X$ has higher priority than $i$, so
$X\cap I'=\emptyset$. Thus, $X\subseteq C_s^g(I)\setminus I'$. If
$(C_s^g(I)\setminus I')\cup\{i\}$ were feasible, downward closure would imply
that $X\cup\{i\}$ is feasible, a contradiction. Hence greedy choice is weakly
priority-respecting.

For uniqueness, let $C$ be a non-wasteful and weakly priority-respecting
choice rule, and suppose that $C(I)\neq C_s^g(I)$. Neither chosen set can be a
proper subset of the other. Indeed, if one were, an element of the larger
feasible set could be added to the smaller one, contradicting
non-wastefulness. Hence both $C(I)\setminus C_s^g(I)$ and
$C_s^g(I)\setminus C(I)$ are nonempty.

Let $i$ be the highest-priority student in $C(I)\setminus C_s^g(I)$ and let
$j$ be the highest-priority student in $C_s^g(I)\setminus C(I)$. Suppose first
that $i\mathrel{\pi_s}j$. Every student selected by greedy before $i$ must
belong to $C(I)$; otherwise such a student would belong to
$C_s^g(I)\setminus C(I)$ and have higher priority than $j$. Let $X$ be the set
selected by greedy before $i$. Then $X\cup\{i\}\subseteq C(I)$, so
$X\cup\{i\}\in\cA_s$ by downward closure. Greedy choice must therefore select
$i$, a contradiction.

Suppose instead that $j\mathrel{\pi_s}i$. Since $i$ is the highest-priority
student in $I'=C(I)\setminus C_s^g(I)$, student $j$ has higher priority than
every student in $I'$. Moreover,
$(C(I)\setminus I')\cup\{j\}=(C(I)\cap C_s^g(I))\cup\{j\}\subseteq C_s^g(I)$,
so this set is feasible. This contradicts weak priority respect.
\end{proof}

\begin{proof}[Proof of \autoref{prop:cutoff-characterization}]
Fix $I\subseteq\I$. By construction, $C_s^c(I)\in\cA_s$, so $C_s^c$ is a
choice rule, and it is priority-respecting. If $C_s^c(I)\neq I$, let $i$ be
the highest-priority student in $I\setminus C_s^c(I)$. Then
$C_s^c(I)\cup\{i\}\notin\cA_s$, since $i$ is the first student whose addition
would violate feasibility. Hence $C_s^c$ is weakly non-wasteful.

For uniqueness, let $C$ be a weakly non-wasteful and priority-respecting
choice rule. Priority respect implies that $C(I)$ consists of the
highest-priority students in $I$. Let $m=|C(I)|$ and
$k=|C_s^c(I)|$. If $m<k$, the highest-priority student in
$I\setminus C(I)$ can be added feasibly to $C(I)$, contrary to weak
non-wastefulness. If $m>k$, then $C(I)$ contains the first $k+1$ students in
the priority order, but that set is infeasible by the definition of cutoff
choice. This contradicts feasibility of $C(I)$. Hence $m=k$, and therefore
$C(I)=C_s^c(I)$.
\end{proof}

The supporting results below establish cutoff substitutability, its process
implications, and protection of fair matchings. They appear in dependency
order immediately before the implementation proof that combines them.

\begin{lemma}\label{lem:cutoff-rejection}
Cutoff choice is substitutable.
\end{lemma}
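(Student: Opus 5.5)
To show that cutoff choice is substitutable, we fix $I\subseteq I'\subseteq\I$ and a student $i\in C_s^c(I')\cap I$, and show that $i\in C_s^c(I)$. The argument uses only downward closure of $\cA_s$ and the prefix structure of cutoff choice. By Definition~\ref{def:cutoff}, $C_s^c(I')$ is the longest feasible priority prefix of $I'$, so we can write $C_s^c(I')=\{i'_1,\ldots,i'_{k'}\}$, where these are the top $k'$ students of $I'$ under $\pi_s$ and the set is feasible. Since $i$ lies in this prefix, every student of $I'$ ranked at or above $i$ belongs to $C_s^c(I')$.

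\textbf{Key step.} Consider the priority prefix of $I$ that ends at $i$, namely
\[
P=\{j\in I: j\mathrel{\pi_s} i\}\cup\{i\}.
\]
Since $I\subseteq I'$, each $j\in P$ is a member of $I'$ ranked weakly above $i$. Each such $j$ therefore lies in $C_s^c(I')$, so $P\subseteq C_s^c(I')$. Because $C_s^c(I')\in\cA_s$, downward closure gives $P\in\cA_s$.

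\textbf{Conclusion.} If $i$ is the $k$-th student of $I$ in priority order, then $P=\{i_1,\ldots,i_k\}$ is feasible. Hence $k^*\geq k$ in the definition of $C_s^c(I)$, and so $i\in C_s^c(I)$.

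\textbf{Main subtlety.} $k^*$ is defined as a maximum over all feasible prefixes rather than as the first infeasible stop. By downward closure these two descriptions coincide: every prefix shorter than a feasible prefix is itself feasible. So the feasibility of $P$ directly forces $i$ into the chosen set, and no further argument is needed.
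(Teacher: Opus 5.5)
Your proof is correct and is essentially the paper's argument: the paper also shows that the priority prefix of $I$ through $i$ lies inside the prefix of $I'$ through $i$, which is feasible because it is contained in $C_s^c(I')$. Downward closure then makes the $I$-prefix feasible, so cutoff choice from $I$ reaches $i$. Your closing remark on the max-based definition is harmless but not needed, since $k^*\geq k$ follows directly from $k^*$ being defined as a maximum.
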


\begin{proof}[Proof of \autoref{lem:cutoff-rejection}]
For $I\subseteq\I$ and $i\in I$, write $P_s(i;I)=(I\cap G_s(i))\cup\{i\}$
for the \emph{priority prefix} of $I$ through $i$: the students in $I$ whose
priority at $s$ is at least as high as $i$'s. This notation is used only in
the proofs. Fix $s\in\Sset$ and $I\subseteq I'\subseteq\I$. For every
$i\in I$, downward closure gives
\begin{equation}\label{eq:cutoff-prefix-test}
  i\in C_s^c(I)
  \quad\Longleftrightarrow\quad
  P_s(i;I)\in\cA_s.
\end{equation}
Indeed, if $P_s(i;I)$ is feasible, then every earlier priority prefix is
feasible, so cutoff choice reaches and selects $i$. The converse is immediate
from the definition of cutoff choice.

Now let $i\in C_s^c(I')\cap I$. By
\eqref{eq:cutoff-prefix-test}, $P_s(i;I')$ is feasible. Since
$P_s(i;I)\subseteq P_s(i;I')$, downward closure implies that
$P_s(i;I)$ is feasible. Another application of
\eqref{eq:cutoff-prefix-test} gives $i\in C_s^c(I)$. Hence
$C_s^c(I')\cap I\subseteq C_s^c(I),$
so cutoff choice is substitutable.
\end{proof}

\begin{proposition}
\label{prop:substitutability-cop}
Suppose every school's choice rule is substitutable. If a student who has
proposed to a school is not selected from its cumulative set of proposers at
some stage of COP-and-Choose, then that student is never selected by that
school at any later stage. Moreover, the sets selected by different schools are disjoint at every
stage. Consequently, each student receives at most one terminal offer, the
final student-choice step is redundant, and the outcome is independent of the
proposal-order convention.
\end{proposition}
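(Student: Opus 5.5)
The plan is to establish the three claims in order: permanence of rejections, then disjointness of selections, then order independence. The first two come directly from substitutability and the protocol. The third reduces to the classical order-independence of the cumulative offer process under substitutable choice, or is proved through a fixed-point characterization.

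\emph{Permanence of rejections.} Proposal sets at a school only grow: $T_s^k\subseteq T_s^{k'}$ for $k\le k'$. Suppose $i\in T_s^k$ and $i\notin C_s(T_s^k)$. If $i\in C_s(T_s^{k'})$ for some later $k'$, then $i\in C_s(T_s^{k'})\cap T_s^k\subseteq C_s(T_s^k)$ by substitutability, a contradiction.

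\emph{Disjointness.} We argue by induction on stages. At stage $0$ every selection is empty. Assume the selections at stage $k$ are pairwise disjoint, and suppose $i$ proposes to $s$ at stage $k$. Only $s$'s proposal set changes, so every other school's selection is unchanged. By substitutability, the new selection at $s$ intersected with $T_s^k$ is contained in the old selection at $s$. Hence the only students who can newly appear in $C_s(T_s^{k+1})$ are previously selected ones and $i$ herself. Since $i$ was active, she was selected by no school at stage $k$, so adding her creates no overlap. Any other student in $C_s(T_s^{k+1})$ was already in $C_s(T_s^k)$, which is disjoint from the other schools' selections by the induction hypothesis. This proves disjointness at every stage. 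In particular each student holds at most one terminal offer, so \eqref{eq:final-choice} simply assigns her that offer and the student-choice step is redundant.

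\emph{Order independence.} This is the main obstacle. I would first try to characterize the terminal state independently of $\rho$. Define $T_s^*$ as the set of students who at some point propose to $s$. Claim: under any convention, the terminal proposal sets coincide with those of the simultaneous-proposal cumulative offer process, namely the least fixed point of the map
\[
T\mapsto \bigl(\{i: s\succeq_i \text{ the best school } s' \text{ with } i\in C_{s'}(T_{s'}) \text{ among those } i \text{ has reached, or } i \text{ is rejected everywhere above } s\}\bigr)_s.
\]
Concretely, I would use a monotonicity argument. Let $R_s(T)=T\setminus C_s(T)$ denote the rejected set. Substitutability makes $R_s$ monotone in $T$. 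A student's set of proposals is exactly the top segment of her list that ends just past the schools that reject her. So the terminal state is a fixed point of a monotone operator on the lattice of proposal profiles, in which each student proposes down her list past every school currently rejecting her. Any run of COP-and-Choose stays below the least fixed point: by induction, each proposal is forced by rejections, which are monotone, and so is contained in the least fixed point. Any run also terminates at a fixed point, because no student is active. The run's terminal state is therefore the least fixed point, whatever $\rho$ is. Terminal offers $C_s(T_s^K)$, and hence the outcome, are then order independent. If this fixed-point bookkeeping becomes unwieldy, I would instead cite the standard result of \citet{HatfieldMilgrom2005} and \citet{HatfieldKojima2010} that the cumulative offer process with substitutable choice is order independent, noting that by disjointness our process coincides with theirs.
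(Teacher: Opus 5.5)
Your proposal is correct. The permanence and disjointness steps are the paper's own arguments; the paper also proves disjointness by induction using $C_s(T_s\cup\{i\})\subseteq C_s(T_s)\cup\{i\}$.

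For order independence you take a different, though closely related, route. The paper compares two runs directly. It takes the first proposal $i\to s$ of one run that never occurs in the other, and observes that all cumulative sets just before it are contained in the other run's terminal sets $U_{s'}$. Substitutability then transfers $i$'s rejections at every $s'\succ_i s$ to $U_{s'}$, so $i$ would still be active when the other run terminates. Symmetry gives equal terminal proposal sets.

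You instead compare each run with the least fixed point $T^*$ of a monotone operator. Your induction keeping $T^k\le T^*$ is the same step as the paper's, with $T^*$ playing the role of $U$. The extra work on your side is checking that a terminal profile is a fixed point. The inclusion $T^K\subseteq\Phi(T^K)$ uses that rejections present when a proposal is made persist to termination. The inclusion $\Phi(T^K)\subseteq T^K$ uses that no student is active. In exchange, you obtain an order-free description of the common terminal profile, computable by iterating $\Phi$ from the empty profile. The paper's argument is shorter and never evaluates an operator at unreachable profiles.

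One repair is needed. Your displayed map is ill-specified: the clause keyed to the best school currently selecting $i$ is undefined when no school selects her, and read literally it need not be monotone on arbitrary profiles. The operator your ``concretely'' paragraph describes, and the one the argument needs, is
\[
  \Phi(T)_s=\bigl\{i\in\I:\ s\succ_i\emptysetmatch\ \text{and}\ i\in T_{s'}\setminus C_{s'}(T_{s'})\ \text{for every } s'\succ_i s\bigr\}.
\]
It is monotone because substitutability makes $T_{s'}\mapsto T_{s'}\setminus C_{s'}(T_{s'})$ monotone. With this operator the argument is self-contained, so the fallback citation is unnecessary.
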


\begin{proof}[Proof of \autoref{prop:substitutability-cop}]
Pairwise disjointness holds initially. Suppose it holds just before an active
student $i\in\I$ proposes to school $s\in\Sset$. Student $i$ is then selected
by no school, and only $s$'s cumulative set of proposers changes, from $T_s$
to $T_s\cup\{i\}$. Substitutability gives
$C_s(T_s\cup\{i\})\subseteq C_s(T_s)\cup\{i\},$
so the selected sets remain pairwise disjoint.

Next, suppose student $i$ has proposed to $s$ and is not selected from some
cumulative set $T_s$. For every later cumulative set $T'_s\supseteq T_s$, if
$i\in C_s(T'_s)$, substitutability would imply
$i\in C_s(T'_s)\cap T_s\subseteq C_s(T_s)$, a contradiction. Thus, a student
who is rejected by a school is never selected by that school later.

At termination, the schools' selected sets are therefore pairwise disjoint, so
each student receives at most one offer and the final student-choice step is
redundant.

For order independence, fix two runs at the same preference profile, and let
$(U_s)_{s\in\Sset}$ be the second run's terminal proposal sets. Suppose the
first run includes a proposal absent from the second, and consider the first
such proposal, by student $i$ to school $s$. Immediately before it, every
cumulative proposal set $T_{s'}$ in the first run is contained in $U_{s'}$.
Student $i$ has already proposed to every school $s'$ she prefers to $s$ and,
being active, is not selected there. Thus
$i\in T_{s'}\setminus C_{s'}(T_{s'})$, and substitutability gives
$i\notin C_{s'}(U_{s'})$ for each such school. In the second run, $i$ never
proposes to $s$ and therefore cannot propose to any less-preferred school.
She consequently has no terminal offer, although the acceptable school $s$
remains untried, contradicting termination. Hence every proposal in the first
run also occurs in the second. Reversing the argument gives identical terminal
proposal sets and therefore identical outcomes.
\end{proof}

\begin{lemma}
\label{lem:cutoff-protects-fair}
Cutoff choice protects every individually rational and fair matching.
\end{lemma}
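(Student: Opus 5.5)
The approach is to verify the protection condition \eqref{eq:protection} directly, using the prefix test \eqref{eq:cutoff-prefix-test} from the proof of \autoref{lem:cutoff-rejection}. Fix an individually rational and fair matching $\mu^0$, a school $s$, a set $I\subseteq D_s(\mu^0)$, and a student $i\in\mu^0(s)\cap I$. By \eqref{eq:cutoff-prefix-test}, it suffices to show that the priority prefix $P_s(i;I)=(I\cap G_s(i))\cup\{i\}$ is feasible for $s$.

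The key step is to show that $P_s(i;I)\subseteq\mu^0(s)$. Take any $j\in I\cap G_s(i)$. Since $I\subseteq D_s(\mu^0)$, we have $s\succeq_j\mu^0(j)$. Suppose $\mu^0(j)\neq s$. Then strictness of preferences gives $s\succ_j\mu^0(j)$. Because $j\mathrel{\pi_s}i$ and $i\in\mu^0(s)$, student $j$ would have justified envy toward $i$, which contradicts fairness. Hence $\mu^0(j)=s$, so $I\cap G_s(i)\subseteq\mu^0(s)$. Adding $i\in\mu^0(s)$ gives $P_s(i;I)\subseteq\mu^0(s)$.

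Since $\mu^0$ is a matching, $\mu^0(s)\in\cA_s$. Downward closure then makes $P_s(i;I)$ feasible, and \eqref{eq:cutoff-prefix-test} gives $i\in C_s^c(I)$. This holds for every such $i$, $I$, and $s$, so $\mu^0(s)\cap I\subseteq C_s^c(I)$ and cutoff choice protects $\mu^0$.

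There is no real obstacle. The only point needing care is that the demand set uses weak preference, so fairness (which is stated with strict preference) has to be combined with strictness of $\succ_j$ to conclude $\mu^0(j)=s$. Individual rationality of $\mu^0$ is needed only so that protection is defined for $\mu^0$; the argument itself does not use it.
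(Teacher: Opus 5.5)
Your proposal is correct and follows the paper's proof essentially step for step. It combines fairness with strict preferences to place every higher-priority member of $I$ in $\mu^0(s)$, obtains feasibility of the prefix $P_s(i;I)$ by downward closure, and concludes via the prefix test \eqref{eq:cutoff-prefix-test}.
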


\begin{proof}[Proof of \autoref{lem:cutoff-protects-fair}]
Fix a student-preference profile, an individually rational and fair matching
$\mu^0$, a school $s\in\Sset$, a set $I\subseteq D_s(\mu^0)$, and
$i\in\mu^0(s)\cap I$. Every $j\in I$ with
$j\mathrel{\pi_s}i$ must belong to $\mu^0(s)$. Otherwise,
$I\subseteq D_s(\mu^0)$ gives $s\succeq_j\mu^0(j)$, and since
$j\notin\mu^0(s)$ and preferences are strict,
$s\succ_j\mu^0(j)$. Thus, $j$ has justified envy toward $i$, contradicting
fairness.

Hence the priority prefix $P_s(i;I)$ is contained in the feasible set
$\mu^0(s)$ and is therefore feasible by downward closure.
By \eqref{eq:cutoff-prefix-test}, $i\in C_s^c(I)$.
\end{proof}

\begin{proof}[Proof of \autoref{thm:cutoff-implementation}]
By \autoref{lem:cutoff-rejection}, cutoff choice is substitutable.
\autoref{prop:substitutability-cop} therefore implies that a student rejected
by a school is never selected there later, the schools' selected sets remain
pairwise disjoint, the final student-choice step is redundant, and the outcome
is independent of the proposal-order convention. Feasibility and individual
rationality follow from the definition of COP-and-Choose.

Let $\mu^c$ denote the resulting matching. Fix $s\in\Sset$ and suppose that
$i$ has justified envy toward $i'\in\mu^c(s)$. Since
$s\succ_i\mu^c(i)$ and students propose in preference order, student $i$ must
have proposed to $s$. Thus, the terminal set of proposers at $s$ contains both
$i$ and $i'$. Because $i\mathrel{\pi_s}i'$ and cutoff choice selects $i'$,
it also selects $i$. Pairwise disjointness of the terminal selected sets and
redundancy of the final student-choice step then imply $\mu^c(i)=s$, a
contradiction. Hence $\mu^c$ is fair.

By \autoref{lem:cutoff-protects-fair}, cutoff choice protects every
individually rational and fair matching. \autoref{thm:protection} therefore
implies that $\mu^c$ weakly Pareto dominates every such matching. Since
$\mu^c$ is itself individually rational and fair, it is the student-optimal
fair matching. This matching is unique under strict student preferences:
any two student-optimal fair matchings weakly Pareto dominate one another and
therefore assign every student identically.
\end{proof}

\subsection{Greedy COP-and-Choose: welfare}

\begin{proof}[Proof of \autoref{thm:protection}]
We first show by induction that whenever a student $i$ proposes to a school
$s$, she weakly prefers $s$ to her assignment under $\mu^0$. The claim holds
for a student's first proposal because it goes to her most-preferred acceptable
school and $\mu^0$ is individually rational. Suppose the claim holds for all
previous proposals and an active student $i$ is about to make a later proposal
to $s$.
If $s\prec_i\mu^0(i)$, then $\mu^0(i)=s'\in\Sset$. Since students
propose in preference order, $i$ must already have proposed to $s'$. By the
induction hypothesis, the current cumulative proposal set at $s'$ is contained
in $D_{s'}(\mu^0)$. Since $i\in\mu^0(s')$ and $i$ has proposed to
$s'$, protection implies that $s'$ currently selects $i$. This contradicts the
fact that $i$ is active.

Now consider a student $i\in\I$ at termination. If
$\mu^0(i)=\emptysetmatch$, individual rationality of the COP-and-Choose
outcome gives $\mu^{C,\rho}(i)\succeq_i\mu^0(i)$. Suppose instead that
$\mu^0(i)=s$. If $i$ proposed to $s$, then
$T_s\subseteq D_s(\mu^0)$ by the first part of the proof. Since
$i\in\mu^0(s)\cap T_s$, protection gives $i\in C_s(T_s)$. Thus, $i$
receives a terminal offer from $s$ and chooses an assignment weakly preferred
to $s$.

If $i$ did not propose to $s$, any school from which she receives a terminal
offer must be preferred to $s$, since students propose in preference order.
She must receive such an offer at termination; otherwise she would be active
with the acceptable school $s$ still untried. Hence
$\mu^{C,\rho}(i)\succeq_i\mu^0(i)$ for every $i\in\I$.
\end{proof}

\begin{proof}[Proof of \autoref{cor:fair-dominance}]
By \eqref{eq:nesting}, greedy choice is cutoff-containing.
\autoref{thm:exact-fair-dominance} therefore implies that every greedy
COP-and-Choose outcome weakly Pareto dominates every individually rational and
fair matching. By \autoref{thm:cutoff-implementation}, the cutoff outcome is
individually rational and fair, so greedy weakly Pareto dominates it.

If the greedy and cutoff outcomes differ, some student receives different
assignments. Since preferences are strict and no student is worse off under
greedy, that student is strictly better off. Hence the greedy outcome Pareto
dominates the cutoff outcome.
\end{proof}

\begin{proof}[Proof of \autoref{thm:exact-fair-dominance}]
Suppose first that (i) holds. By \autoref{lem:cutoff-protects-fair}, cutoff
choice protects every individually rational and fair matching. Cutoff
containment therefore implies that $C$ protects every such matching, proving
(ii). Statement (ii) and \autoref{thm:protection} imply (iii).

It remains to prove that (iii) implies (i). Suppose (i) fails. Then, for some
school $s\in\Sset$, set $I\subseteq\I$, and student $i\in I$,
$i\in C_s^c(I)\setminus C_s(I)$. Let $P=C_s^c(I)$. Give every student in
$I$ a preference under which $s$ is the only acceptable school, and make every
school unacceptable to every student outside $I$. Let $\mu^0$ assign the
students in $P$ to $s$ and leave everyone else unmatched.

The matching $\mu^0$ is individually rational and feasible. It is also fair:
$P$ is a top-priority prefix of $I$, so every student in $I\setminus P$ has
lower priority at $s$ than every student in $P$, while students outside $I$
find $s$ unacceptable.

Under every proposal-order convention, every student in $I$ eventually
proposes to $s$, since an unproposed student remains active with her only
acceptable school untried. Hence the terminal set of proposers at $s$ is
$I$. Because $i\notin C_s(I)$, student $i$ receives no terminal offer from
$s$ and therefore remains unmatched, whereas $\mu^0(i)=s$. Thus, the
COP-and-Choose outcome does not weakly Pareto dominate $\mu^0$, contradicting
(iii). Hence (iii) implies (i).
\end{proof}

\begin{lemma}\label{lem:greedy-consistency}
Every greedy choice rule is consistent.
\end{lemma}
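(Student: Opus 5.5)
To prove \autoref{lem:greedy-consistency}, I will compare the greedy scan on $I$ with the greedy scan on $J$ whenever $C_s^g(I)\subseteq J\subseteq I$. I will show by induction along the priority order that both scans hold the same tentative set at every point. List $I$ as $i_1,\ldots,i_m$ in decreasing $\pi_s$-priority. The students of $J$ appear in this list in the same relative order. Write $I_k$ for the greedy set after considering $i_1,\ldots,i_k$ on $I$. Write $J_k$ for the set the scan on $J$ holds after considering those students among $i_1,\ldots,i_k$ that lie in $J$. The claim to establish is $J_k=I_k$ for every $k$. Taking $k=m$ then gives $C_s^g(J)=C_s^g(I)$.

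The induction starts from $I_0=J_0=\emptyset$ and handles student $i_k$ in three cases.
\begin{itemize}[leftmargin=2em]
\item If $i_k\notin J$, the scan on $J$ does nothing, so $J_k=J_{k-1}$. Since $C_s^g(I)\subseteq J$, student $i_k$ is not selected from $I$, so $I_k=I_{k-1}$, and equality is preserved.
\item If $i_k\in J$, both scans face the same test, namely whether $I_{k-1}\cup\{i_k\}=J_{k-1}\cup\{i_k\}$ lies in $\cA_s$. They therefore add $i_k$ or skip it together.
\end{itemize}
The only fact the argument needs is that greedy's decision on a student depends only on the set already selected and on feasibility. Students removed from $I$ were never added to that set.

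I expect no real obstacle here. The one point requiring care is the case $i_k\in I\setminus J$. There the conclusion $I_k=I_{k-1}$ uses the hypothesis $C_s^g(I)\subseteq J$ together with the monotonicity of the greedy sets: $I_{k-1}\subseteq I_k\subseteq C_s^g(I)$, so if $i_k$ were added at step $k$ it would lie in $C_s^g(I)\setminus J$, which is impossible. Downward closure of $\cA_s$ is not needed for this lemma.
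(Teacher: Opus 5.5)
Your proof is correct and takes the same approach as the paper. Every student in $I\setminus J$ is skipped in the scan of $I$, so both scans hold the same accumulated set before each remaining student and make identical feasibility decisions; your induction on $k$ spells out the step the paper states in one line.
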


\begin{proof}[Proof of \autoref{lem:greedy-consistency}]
Fix $s\in\Sset$ and $I,I'\subseteq\I$ with
$C_s^g(I)\subseteq I'\subseteq I$. Every student in $I\setminus I'$ was
skipped when greedy scanned $I$. Removing such students does not change the
set accumulated before any remaining student is considered, and therefore
does not change any subsequent feasibility decision. Hence the greedy runs on
$I$ and $I'$ select the same students, so
$C_s^g(I')=C_s^g(I)$.\qedhere
\end{proof}

\begin{proof}[Proof of \autoref{prop:cutoff-containing-uniqueness}]
Greedy is cutoff-containing by \eqref{eq:nesting} and consistent by
\autoref{lem:greedy-consistency}. Conversely, let $C_s$ be consistent and
cutoff-containing. We prove $C_s(I)=C_s^g(I)$ by induction on $|I|$.
If $I$ is feasible, cutoff containment gives $C_s(I)=I=C_s^g(I)$.
Otherwise, let $j$ be the first student skipped by greedy and let $P$ be the
priority prefix preceding $j$. Then $P=C_s^c(I)\subseteq C_s(I)$ and
$P\cup\{j\}$ is infeasible. Downward closure and feasibility of $C_s(I)$
imply $j\notin C_s(I)$. Consistency gives
$C_s(I)=C_s(I\setminus\{j\})$, while deleting a skipped student leaves
greedy choice unchanged. The induction hypothesis therefore yields
\[
  C_s(I)=C_s(I\setminus\{j\})
        =C_s^g(I\setminus\{j\})=C_s^g(I).
\]
\end{proof}

\begin{lemma}\label{lem:greedy-prefix}
For every $s\in\Sset$, every $I\subseteq\I$, and every $i\in I$,
\[
  C_s^g(I)\cap G_s(i)=C_s^g(I\cap G_s(i)),
\]
and
\[
  i\in C_s^g(I)
  \quad\Longleftrightarrow\quad
  C_s^g(I\cap G_s(i))\cup\{i\}\in\cA_s.
\]
\end{lemma}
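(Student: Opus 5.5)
The plan is to read both claims directly off the sequential structure of the greedy scan in \autoref{def:greedy}. List $I$ as $i_1,\ldots,i_m$ in decreasing priority and let $i=i_k$. Then $I\cap G_s(i)=\{i_1,\ldots,i_{k-1}\}$, and this list is exactly the first $k-1$ entries of the scan of $I$, in the same order.

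For the first identity, I will show that the greedy run on $I\cap G_s(i)$ coincides step by step with the first $k-1$ steps of the run on $I$. At each step $\ell\le k-1$, both runs start from the same accumulated set $I_{\ell-1}$ and consider the same student $i_\ell$. They therefore make the same feasibility test and the same decision; formally this is an induction on $\ell$. Hence $C_s^g(I\cap G_s(i))=I_{k-1}$.

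It remains to identify $I_{k-1}$ with $C_s^g(I)\cap G_s(i)$. The accumulated sets only grow, and every student added at step $\ell\ge k$ is $i_\ell$, which has priority no higher than $i$ and so lies outside $G_s(i)$. Therefore $C_s^g(I)\cap G_s(i)=I_m\cap G_s(i)=I_{k-1}$. (If preferred, this also follows from consistency, \autoref{lem:greedy-consistency}, but the direct scan argument is cleaner.)

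For the equivalence, greedy adds $i=i_k$ exactly when $I_{k-1}\cup\{i\}\in\cA_s$, and later steps never remove students. So $i\in C_s^g(I)$ if and only if $I_{k-1}\cup\{i\}\in\cA_s$. Substituting $I_{k-1}=C_s^g(I\cap G_s(i))$ from the first part gives the stated criterion.

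No step is a real obstacle. The only care needed is the bookkeeping in the first part: stating precisely that the truncated run and the prefix of the full run agree, and noting that students after position $k$ cannot enter $G_s(i)$. Downward closure is not even needed here.
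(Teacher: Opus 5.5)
Your argument is correct and is the paper's proof with the bookkeeping made explicit. The two greedy runs make identical decisions on the students ranked above $i$, so the set accumulated just before $i$ is scanned is $C_s^g(I\cap G_s(i))$, and the criterion for selecting $i$ is then read off \autoref{def:greedy}.

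One correction to your parenthetical: the first identity does not follow from consistency (\autoref{lem:greedy-consistency}). Passing from $I$ to $I\cap G_s(i)$ deletes $i$ and lower-priority students who may have been selected, whereas consistency only permits deleting unselected ones. Your scan argument never uses it, so the proof stands.
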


\begin{proof}[Proof of \autoref{lem:greedy-prefix}]
Fix $s\in\Sset$, $I\subseteq\I$, and $i\in I$. The greedy runs on $I$ and on
$I\cap G_s(i)$ encounter the same higher-priority students in the same order
and therefore make the same decisions on them. Hence
$C_s^g(I)\cap G_s(i)=C_s^g(I\cap G_s(i)).$
The set selected before $i$ is considered in the run on $I$ is therefore
$C_s^g(I\cap G_s(i))$. By the definition of greedy choice, $i$ is selected
if and only if
$C_s^g(I\cap G_s(i))\cup\{i\}\in\cA_s.$
\end{proof}

\begin{proof}[Proof of \autoref{thm:dual-protection}]
Fix an individually rational matching $\mu^0$ and a school $s\in\Sset$.

\emph{(i).} Suppose first that $\mu^0$ is claim-compatible. Fix
$I\subseteq D_s(\mu^0)$ and
$i\in\mu^0(s)\cap I$. The priority prefix $P_s(i;I)$ is contained in
$H_s^{\mu^0}(i)\cup\{i\}$, which is feasible by claim compatibility. Downward
closure therefore makes $P_s(i;I)$ feasible, so cutoff choice selects $i$.
Thus, cutoff choice protects $\mu^0$.

Conversely, suppose cutoff choice protects $\mu^0$. Fix
$i\in\mu^0(s)$ and take
$I=H_s^{\mu^0}(i)\cup\{i\}$. Then $I\subseteq D_s(\mu^0)$, so protection
implies $i\in C_s^c(I)$. Since every other member of $I$ has higher priority
than $i$, the priority prefix through $i$ is all of $I$. Hence $I$ is
feasible, proving claim compatibility.

\emph{(ii).} Suppose next that $\mu^0$ is weakly claim-compatible. Fix
$I\subseteq D_s(\mu^0)$ and
$i\in\mu^0(s)\cap I$, and let
$X=C_s^g(I\cap G_s(i))$. Then $X$ is feasible and
$X\subseteq H_s^{\mu^0}(i)$. Weak claim compatibility therefore makes
$X\cup\{i\}$ feasible, and \autoref{lem:greedy-prefix} gives
$i\in C_s^g(I)$. Thus, greedy choice protects $\mu^0$.

Conversely, suppose greedy choice protects $\mu^0$. Fix
$i\in\mu^0(s)$ and a feasible set
$I'\subseteq H_s^{\mu^0}(i)$. Take $I=I'\cup\{i\}$. Every member of $I'$ has
higher priority than $i$, and because $I'$ is feasible, greedy selects every
member of $I'$. Protection also gives $i\in C_s^g(I)$. By
\autoref{lem:greedy-prefix}, $I'\cup\{i\}$ is therefore feasible. Hence
$\mu^0$ is weakly claim-compatible.
\end{proof}

\subsection{Greedy COP-and-Choose: incentives and the Pareto frontier}

\begin{proof}[Proof of \autoref{lem:sp-nw-frontier}]
Let $g$ be an individually rational, non-wasteful, and strategy-proof
mechanism. Suppose, toward a contradiction, that a strategy-proof mechanism
$f$ weakly Pareto dominates $g$ at every preference profile and strictly
improves some student at some profile. Thus, for some preference profile
$\succeq^0$, student $i_0$, and school $s_0$,
\[
  s_0=f_{i_0}(\succeq^0)
  \succ^0_{i_0}
  g_{i_0}(\succeq^0).
\]

We construct a sequence recursively. Suppose that, at profile $\succeq^t$,
\[
  f_{i_t}(\succeq^t)=s_t
  \succ^t_{i_t}
  g_{i_t}(\succeq^t).
\]
Let $\succeq^{t+1}$ differ from $\succeq^t$ only in the preference of
$i_t$, who now declares $s_t$ to be her unique acceptable school.
Strategy-proofness of $f$, applied with $\succeq^{t+1}_{i_t}$ as the true
preference, implies
$f_{i_t}(\succeq^{t+1})=s_t$: otherwise $i_t$ could report her previous
preference and obtain her uniquely preferred acceptable school $s_t$.

Strategy-proofness of $g$, now applied with $\succeq^t_{i_t}$ as the true
preference, implies
$g_{i_t}(\succeq^{t+1})\neq s_t$. Otherwise the truncation would be a
profitable deviation from $g_{i_t}(\succeq^t)$ to $s_t$. Since $g$ is
individually rational and $s_t$ is the only acceptable school under
$\succeq^{t+1}_{i_t}$,
$g_{i_t}(\succeq^{t+1})=\emptysetmatch.$

Let $N_t=g(\succeq^{t+1})(s_t)$. Since $i_t$ is unmatched under $g$ and
strictly prefers $s_t$ to being unmatched, non-wastefulness of $g$ gives
$N_t\cup\{i_t\}\notin\cA_{s_t}.$
On the other hand, $f(\succeq^{t+1})(s_t)$ is feasible and contains $i_t$.
Therefore some student in $N_t$ cannot remain at $s_t$ under $f$; otherwise
$N_t\cup\{i_t\}$ would be a subset of the feasible set
$f(\succeq^{t+1})(s_t)$ and hence feasible by downward closure.

Choose such a student $i_{t+1}\in N_t$. Under $g$,
$i_{t+1}$ receives $s_t$. Since $f$ weakly Pareto dominates $g$ at
$\succeq^{t+1}$ and does not assign $i_{t+1}$ to $s_t$, strict preferences
imply
\[
  f_{i_{t+1}}(\succeq^{t+1})
  \succ^{t+1}_{i_{t+1}}
  s_t
  =
  g_{i_{t+1}}(\succeq^{t+1}).
\]
Set
$s_{t+1}=f_{i_{t+1}}(\succeq^{t+1})$ and continue.

The students $i_0,i_1,\ldots$ are all distinct. Indeed, once a student
$i_r$ has been truncated, $s_r$ remains her unique acceptable school at all
later profiles. If she belonged to some later set $N_t$, individual
rationality of $g$ would imply that the school to which $g$ assigns her is
$s_r$. But no assignment is strictly preferred to $s_r$ under her truncated
preference, so she could not be chosen as $i_{t+1}$.
Because every later profile retains that truncation, $i_r$ therefore cannot
reappear as a later strict beneficiary.

The construction therefore produces infinitely many distinct students,
contradicting finiteness of $\I$. Hence no strategy-proof mechanism Pareto
dominates $g$.
\end{proof}

The following choice facts connect substitutability to path independence for
greedy choice and supply the standard sufficiency result used in the robust
strategy-proofness proof. A choice rule $C_s$ is \emph{path independent} if
\[
  C_s(I\cup I')=C_s(C_s(I)\cup I')
  \qquad\text{for all }I,I'\subseteq\I.
\]
The following equivalence is due to \citet{AizermanMalishevski1981}.

\begin{lemma}\label{lem:aizerman}
A choice rule is path independent if and only if it is consistent and
substitutable.
\end{lemma}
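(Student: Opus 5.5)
We prove the two directions of the Aizerman--Malishevski equivalence separately, working directly from the definitions of path independence, consistency ($C(I)\subseteq J\subseteq I$ implies $C(J)=C(I)$), and substitutability ($C(I')\cap I\subseteq C(I)$ for $I\subseteq I'$). Throughout we use only $C(I)\subseteq I$; no feasibility structure is needed.

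\emph{Path independence implies consistency and substitutability.} For consistency, suppose $C(I)\subseteq J\subseteq I$. Path independence with the pair $(J, I)$ gives $C(J\cup I)=C(C(J)\cup I)$. The left side is $C(I)$ and the right side is $C(I)$ as well, so this pair is uninformative. Instead we apply path independence to $(I,J)$:
\[
  C(J)=C(I\cup J)=C(C(I)\cup J)=C(J).
\]
This is circular too. We therefore use the pair $(J, I\setminus J)$:
\[
  C(I)=C(J\cup(I\setminus J))=C(C(J)\cup(I\setminus J)).
\]
This does not immediately close the argument either.

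The cleaner route is to first derive substitutability. Take $I\subseteq I'$ and $i\in C(I')\cap I$. Path independence with the pair $(I, I'\setminus I)$ gives
\[
  C(I')=C(C(I)\cup(I'\setminus I)).
\]
The right side is a subset of $C(I)\cup(I'\setminus I)$. Since $i\in I$, $i\notin I'\setminus I$, so $i\in C(I)$. This proves substitutability.

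For consistency, suppose $C(I)\subseteq J\subseteq I$. Path independence with the pair $(J, C(I))$ gives
\[
  C(J)=C(J\cup C(I))=C(C(J)\cup C(I)).
\]
Path independence with the pair $(I, J)$ gives
\[
  C(I)=C(I\cup J)=C(C(I)\cup J)=C(J).
\]
Here we used $C(I)\cup J=J$, which holds because $C(I)\subseteq J$. This gives consistency directly.

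\emph{Consistency and substitutability imply path independence.} Fix $I,I'$ and let $U=I\cup I'$ and $V=C(I)\cup I'\subseteq U$. The aim is to apply consistency to the pair $V\subseteq U$, which requires showing $C(U)\subseteq V$. Take $j\in C(U)$. If $j\in I'$, then $j\in V$. Otherwise $j\in I$, and substitutability with $I\subseteq U$ gives $j\in C(U)\cap I\subseteq C(I)\subseteq V$. Hence $C(U)\subseteq V\subseteq U$, and consistency yields
\[
  C(V)=C(U),
\]
which is exactly path independence.

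No step is a real obstacle. The only care needed is in the necessity direction: one must pick the right pair of sets in the path-independence identity. The useful choices are $(I, I'\setminus I)$ for substitutability and $(I,J)$, together with $C(I)\cup J=J$, for consistency.
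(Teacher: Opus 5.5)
Your proof is correct. The paper gives no argument of its own for this lemma. It attributes the equivalence to Aizerman and Malishevski (1981) and uses it only as a black box, for instance to read substitutability of a greedy rule as path independence, given that greedy choice is consistent.

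Your write-up therefore supplies the standard direct verification. It rightly relies on nothing beyond $C(I)\subseteq I$, so it covers every choice rule in the paper. The three instances you end up using are the right ones:
\begin{itemize}
\item[(1)] The pair $(I,I'\setminus I)$ gives $C(I')=C(C(I)\cup(I'\setminus I))\subseteq C(I)\cup(I'\setminus I)$, which yields substitutability.
\item[(2)] The pair $(I,J)$, together with $I\cup J=I$ and $C(I)\cup J=J$, yields consistency.
\item[(3)] For the converse, substitutability gives $C(I\cup I')\subseteq C(I)\cup I'\subseteq I\cup I'$, and consistency then gives $C(C(I)\cup I')=C(I\cup I')$.
\end{itemize}

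In a final version, delete the abandoned attempts in the necessity direction. One of them opens with $C(J)=C(I\cup J)$, which is unjustified: since $I\cup J=I$ there, that equality is the conclusion itself. The display using the pair $(J,C(I))$ is never used.
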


\begin{lemma}
\label{lem:sub-sm-sufficient}
If every school's choice rule is substitutable and size monotone, then once a
student is rejected by a school she is never selected by that school later, the
final student-choice step is redundant, and COP-and-Choose is
proposal-order independent and strategy-proof.
\end{lemma}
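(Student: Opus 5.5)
The statement has four parts: permanence of rejections, redundancy of the final choice step, proposal-order independence, and strategy-proofness. The first three need only substitutability and come directly from \autoref{prop:substitutability-cop}. That result gives permanence of rejections, pairwise disjointness of the selected sets at every stage (so each student holds at most one terminal offer and the Choose step is redundant), and independence of the terminal proposal sets from the proposal-order convention. The remaining work is strategy-proofness, which I will obtain by reducing COP-and-Choose to the cumulative offer mechanism of \citet{HatfieldMilgrom2005} and invoking its classical strategy-proofness under substitutability and the law of aggregate demand.

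\textbf{Step 1: reduction.} Because the outcome does not depend on the proposal order, I will fix any convention and view the process as a sequential cumulative offer process. Under substitutability, the final matching assigns each student to the unique school whose terminal choice contains her. This is exactly the Hatfield--Milgrom cumulative-offer outcome. Our model has at most one contract per student--school pair, so substitutability and size monotonicity here are the usual conditions in matching with contracts; the footnote following the definition of robustness makes the same identification for the observable variants.

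\textbf{Step 2: strategy-proofness.} Under substitutability and size monotonicity (the law of aggregate demand), \citet{HatfieldMilgrom2005} show that the cumulative-offer mechanism is strategy-proof for students; \citet{HatfieldKominersWestkamp2021} give the sharper observable version. Since the outcome is order independent, the comparison of truthful and deviating reports under a fixed convention $\rho$ is the same as the comparison for the canonical process, so strategy-proofness carries over to every convention.

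\textbf{Main obstacle.} The delicate point is that the classical proof uses size monotonicity to show that the outcome is the student-optimal stable matching, and then applies the standard truncation and blocking argument. If the citation must be made self-contained, I would argue as follows. Size monotonicity together with substitutability implies that the number of students selected at each school never falls as proposals accumulate. Hence a student who is rejected displaces nobody who would otherwise be retained. Comparing the truthful report with the truncation of a manipulation at the true outcome school shows that a misreport cannot produce a better terminal offer, as in the standard argument for student-proposing deferred acceptance. Citing \citet{HatfieldMilgrom2005} for this step is the route I expect to take.
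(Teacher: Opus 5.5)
This is essentially the paper's argument: \autoref{prop:substitutability-cop} gives permanence of rejections, disjointness, redundancy of the Choose step and, as you note, order independence; COP-and-Choose is then the ordinary cumulative-offer process, and strategy-proofness is imported from Hatfield--Milgrom and Hatfield--Kominers--Westkamp. The one step the paper states that you leave implicit is that substitutability and size monotonicity imply consistency (if $C_s(I)\subseteq J\subseteq I$, substitutability gives $C_s(I)\subseteq C_s(J)$ and size monotonicity gives $|C_s(J)|\leq|C_s(I)|$), which is what lets those classical results, formulated for preference-induced or IRC choice, apply to arbitrary choice rules; your self-contained sketch in the last paragraph would not stand as written, but, like the paper, you rely on the citation instead.
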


\begin{proof}[Proof of \autoref{lem:sub-sm-sufficient}]
The assumptions imply consistency: if $C_s(I)\subseteq J\subseteq I$,
substitutability gives $C_s(I)\subseteq C_s(J)$, while size monotonicity gives
$|C_s(J)|\leq |C_s(I)|$. Hence $C_s(J)=C_s(I)$.
\autoref{prop:substitutability-cop} implies that a student rejected by a
school is never selected there later, the schools' selected sets remain
pairwise disjoint, and the final student-choice step is redundant.
COP-and-Choose therefore coincides with the ordinary cumulative-offer process
with substitutable, size-monotone choice rules. Proposal-order independence
and strategy-proofness then follow from the standard deferred-acceptance
results; see
\citet{HatfieldMilgrom2005,HatfieldKominersWestkamp2021}.
\end{proof}

\begin{proof}[Proof of \autoref{lem:greedy-sub-nw}]
\autoref{prop:substitutability-cop} implies that a student rejected
by a school is never selected there later, the terminal selected sets are
pairwise disjoint, and the final student-choice step is redundant.

Let $\mu$ be the resulting greedy outcome and suppose
$s\succ_i\mu(i)$. Student $i$ must have proposed to $s$: if she is matched,
she proposes to every more-preferred school before reaching her assignment,
and if she is unmatched, she exhausts all acceptable schools. Thus,
$i\in T_s$. Since the final student-choice step is redundant,
$\mu(s)=C_s^g(T_s).$
Because $\mu(i)\neq s$, we have $i\notin C_s^g(T_s)$. By
\autoref{prop:greedy-characterization}, greedy choice is non-wasteful, so
$C_s^g(T_s)\cup\{i\}\notin\cA_s.$
Hence
$\mu(s)\cup\{i\}\notin\cA_s$. Since this holds whenever
$s\succ_i\mu(i)$, the matching $\mu$ is non-wasteful.
\end{proof}

\begin{proof}[Proof of \autoref{thm:robust-boundary}(i)]
If $C^\pi_{\cA}$ is substitutable, the distinguished school and every other
school use substitutable choice rules. \autoref{prop:substitutability-cop} then
implies
that the COP-and-Choose outcome is independent of the proposal-order
convention.

For the converse, let $C=C^\pi_{\cA}$ and suppose that $C$ is not
substitutable. Thus, there are
$K\subseteq L\subseteq I$ and $u\in K$ such that
$u\notin C(K)$ but $u\in C(L)$. Add the students in $L\setminus K$ one at a
time. At the first addition that causes $u$ to be selected, there are
$M\subseteq I$ and $j\in I\setminus M$ such that
\begin{equation}\label{eq:revival}
  u\in M,\qquad
  u\notin C(M),\qquad
  u\in C(M\cup\{j\}).
\end{equation}

The set $C(M)$ is inclusion-maximal feasible in $M$. Indeed, if
$x\in M\setminus C(M)$, greedy skipped $x$ when the accumulated selected set
was some $P\subseteq C(M)$ with $P\cup\{x\}\notin\cA$. If
$C(M)\cup\{x\}$ were feasible, downward closure would make
$P\cup\{x\}$ feasible, a contradiction.

Consequently, some student selected from $M$ must be displaced when $j$ is
added. That is, there is
\begin{equation}\label{eq:displaced}
  v\in C(M)\setminus C(M\cup\{j\}).
\end{equation}
Otherwise $C(M)\subseteq C(M\cup\{j\})$; together with
$u\in C(M\cup\{j\})$, this would make
$C(M)\cup\{u\}$ feasible, contradicting the maximality of $C(M)$.
The students $u,v,j$ are distinct.

Construct a market with schools $a$ and $b$. School $a$ uses the greedy rule
generated by $(\cA,\pi)$, extended to the full student set by making students
outside $I$ loops.\footnote{A student $i$ is a \emph{loop} of a feasibility
collection $\cA$ if $\{i\}\notin\cA$, so no feasible set contains her. Making
the students outside $I$ loops extends $\cA$ from $I$ to $\I$ without creating
any new feasible set.} School $b$ has capacity one and ranks $u$ above $v$.
Students $u$ and $v$ rank $a$ first and $b$ second. Every other student in
$M\cup\{j\}$ finds only $a$ acceptable, and every student outside
$M\cup\{j\}$ finds neither school acceptable.

Consider first a proposal-order convention under which every student in
$(M\setminus\{u\})\cup\{j\}$ proposes to $a$ before $u$, after which $u$
proposes to $a$, and rejected students move only afterward. The terminal set
of proposers at $a$ is $M\cup\{j\}$. By
\eqref{eq:revival}--\eqref{eq:displaced}, $a$ selects $u$ and rejects $v$.
Student $v$ then proposes to $b$ and is matched there.

Now consider a convention under which every student in $M$ proposes to $a$
before $j$. School $a$ then selects $C(M)$ and rejects $u$, who proposes to
$b$. When $j$ later proposes to $a$, its cumulative set of proposers becomes
$M\cup\{j\}$. School $a$ now selects $u$ and rejects $v$. Student $v$
proposes to $b$, but $b$ retains $u$. At the final student-choice step, $u$
chooses $a$, so $v$ remains unmatched.

School $a$ has the same terminal set of proposers, $M\cup\{j\}$, under both
conventions, and every student other than $v$ receives the same final
assignment. Student $v$ is assigned to $b$ under the first convention and is
unmatched under the second. Hence the outcomes differ only in $v$'s
assignment, and the first strictly Pareto dominates the second.
\end{proof}

\begin{proof}[Proof of \autoref{thm:robust-boundary}(ii)]
\emph{Sufficiency.}
Suppose $C^\pi_{\cA}$ is substitutable and size monotone. Extend the rule
to $\I$ by making every student in $\I\setminus I$ a loop. For every
$J\subseteq\I$, the extended greedy rule chooses
$C^\pi_{\cA}(J\cap I),$
since every student outside $I$ is skipped. Substitutability and size
monotonicity therefore carry directly to the extended rule. Every other school
has a substitutable, size-monotone choice rule by assumption, so
\autoref{lem:sub-sm-sufficient} implies strategy-proofness under every
proposal-order convention.

\emph{Necessity.}
Let $C=C^\pi_{\cA}$ and suppose that $C$ fails substitutability or size
monotonicity. Construct a market with schools $a$ and $b$. School $a$ uses
$C$, extended to $\I$ by making students outside $I$ loops. School $b$ will
have capacity one.

\emph{Case 1: $C$ is not substitutable.}
The construction for part~(i) of \autoref{thm:robust-boundary} gives
$M\subseteq I$, $j\in I\setminus M$, and $u\in M$ such that
\begin{equation}\label{eq:greedy-revival}
  u\notin C(M)
  \quad\text{and}\quad
  u\in C(M\cup\{j\}).
\end{equation}
School $b$ ranks $u$ above $j$ and has capacity one. Student $u$ finds only
$a$ acceptable, student $j$ ranks $b$ above $a$, every student in
$M\setminus\{u\}$ finds only $a$ acceptable, and every remaining student finds
neither school acceptable.

Under truthful reporting, every member of $M$ eventually proposes to $a$,
while $j$ remains at $b$. Thus, the terminal set of proposers at $a$ is $M$,
and $u$ is unmatched by \eqref{eq:greedy-revival}.

Now let $u$ report
$a\succ'_u b\succ'_u\emptysetmatch$. Since $u\notin C(M)$, she is eventually
rejected by $a$ and proposes to $b$, where she displaces $j$. Student $j$ then
proposes to $a$, so the terminal set of proposers there becomes
$M\cup\{j\}$. By \eqref{eq:greedy-revival}, $a$ now selects $u$. School $b$
also selects $u$, and at the final student-choice step her reported preference
makes her choose $a$. She therefore obtains her true first choice instead of
remaining unmatched. These eventual proposals and rejections are forced by the
reports, so the manipulation succeeds under every proposal-order convention.

\emph{Case 2: $C$ is substitutable but not size monotone.}
There are $J\subseteq J'\subseteq I$ with $|C(J)|>|C(J')|$. Adding the
students in $J'\setminus J$ one at a time, we obtain
$M\subseteq I$ and $j\in I\setminus M$ such that
$|C(M)|>|C(M\cup\{j\})|.$
We first record two consequences. If $j\notin C(M\cup\{j\})$, then
$C(M\cup\{j\})\subseteq M\subseteq M\cup\{j\}$, and consistency would imply
$C(M)=C(M\cup\{j\})$, a contradiction. Hence
$j\in C(M\cup\{j\})$. By substitutability,
$C(M\cup\{j\})\setminus\{j\}\subseteq C(M)$, so
\[
  \bigl|C(M)\setminus C(M\cup\{j\})\bigr|
  =
  |C(M)|-|C(M\cup\{j\})|+1
  \geq 2.
\]
Choose distinct
$u,v\in C(M)\setminus C(M\cup\{j\}).$

Let school $b$ have capacity one and rank
$u\mathrel{\pi_b}v\mathrel{\pi_b}j$ above every other student. Give students
$u,v,j$ preferences
\[
  a\succ_u b\succ_u\emptysetmatch,
  \qquad
  b\succ_v a\succ_v\emptysetmatch,
  \qquad
  b\succ_j a\succ_j\emptysetmatch.
\]
Every student in $M\setminus\{u,v\}$ finds only $a$ acceptable, and every
remaining student finds neither school acceptable.

Consider truthful reporting. Students $v$ and $j$ eventually propose to $b$,
which retains $v$ and rejects $j$. Thus, $j$ eventually proposes to $a$.
Before $j$ arrives, every cumulative set of proposers at $a$ is contained in
$M$. Whenever such a set $N$ contains $u$, substitutability and
$u\in C(M)$ imply
$u\in C(M)\cap N\subseteq C(N),$
so $u$ is not rejected before $j$ arrives.

Every student in $M\setminus\{v\}$ eventually proposes to $a$, so its
cumulative set of proposers reaches
$(M\setminus\{v\})\cup\{j\}$. Since
$v\notin C(M\cup\{j\})$,
\[
  C(M\cup\{j\})
  \subseteq (M\setminus\{v\})\cup\{j\}
  \subseteq M\cup\{j\}.
\]
Consistency therefore gives
$C\bigl((M\setminus\{v\})\cup\{j\}\bigr) =C(M\cup\{j\}).$
In particular, $u$ is rejected by $a$ and proposes to $b$, where she displaces
$v$. Student $v$ then proposes to $a$, whose terminal set of proposers becomes
$M\cup\{j\}$. Since $v\notin C(M\cup\{j\})$, student $v$ is unmatched under
truthful reporting.

Now let $v$ report that only $a$ is acceptable. We claim that $j$ is never
rejected by $b$ and therefore never proposes to $a$. Indeed, with $v$ absent
from $b$, student $j$ can be rejected only if $u$ proposes there, which in
turn requires $u$ to be rejected by $a$. But as long as $j$ has not proposed
to $a$, its cumulative proposer set is some $N\subseteq M$; whenever
$u\in N$, substitutability and $u\in C(M)$ imply $u\in C(N)$. Thus, $u$ is
never rejected, so $j$ remains at $b$.

Consequently the terminal set of proposers at $a$ is $M$. Since
$v\in C(M)$, student $v$ receives $a$, which she strictly prefers to being
unmatched. The same argument depends only on eventual proposals and
rejections, not on the order in which active students are chosen. Hence the
manipulation succeeds under every proposal-order convention.
\end{proof}

\begin{proof}[Proof of \autoref{cor:robust-greedy-frontier}]
\autoref{lem:sub-sm-sufficient} gives strategy-proofness and
proposal-order independence, while \autoref{lem:greedy-sub-nw} gives
non-wastefulness. \autoref{lem:sp-nw-frontier} then implies that greedy
COP-and-Choose is Pareto-undominated among strategy-proof mechanisms.
\end{proof}

\begin{proof}[Proof of \autoref{prop:greedy-dominates-sp-class}]
Let $f$ denote greedy COP-and-Choose and $g$ denote COP-and-Choose under $C$.
By \autoref{lem:sub-sm-sufficient}, $f$ is strategy-proof and proposal-order
independent, a student not selected by a school is never selected there later,
and the final student-choice step is redundant under $f$; by
\autoref{lem:greedy-sub-nw}, $f$ is non-wasteful. Since $C$ is substitutable,
\autoref{prop:substitutability-cop} gives proposal-order independence,
permanent rejection, and redundancy of the final choice step for $g$;
strategy-proofness of $g$ is a hypothesis. Call
$i$ a \emph{strict beneficiary} at $\succeq$ if $g_i(\succeq)\succ_i
f_i(\succeq)$, and suppose toward a contradiction that some profile has one.

\emph{Step 1: truncation.} There is a profile $\succeq$ with a strict
beneficiary at which every strict beneficiary reports a single acceptable
school. Given $\succeq^t$ with a strict beneficiary $i_t$ whose report is not a
singleton, let $a_t=g_{i_t}(\succeq^t)$ and let $\succeq^{t+1}$ change only
$i_t$'s report, declaring $a_t$ her unique acceptable school. Strategy-proofness
of $g$, with $\succeq^{t+1}_{i_t}$ as the true preference, gives
$g_{i_t}(\succeq^{t+1})=a_t$: otherwise individual rationality leaves her
unmatched and reporting $\succeq^t_{i_t}$ yields $a_t$. Strategy-proofness of
$f$, with $\succeq^t_{i_t}$ as the true preference, gives
$f_{i_t}(\succeq^{t+1})\neq a_t$: otherwise reporting $\succeq^{t+1}_{i_t}$ at
$\succeq^t$ yields $a_t\succ^t_{i_t}f_{i_t}(\succeq^t)$. So $i_t$ remains a
strict beneficiary, now with a singleton report. Each step converts one
non-singleton report and never reverses it, so the sequence terminates within
$|\I|$ steps, and the student truncated last is a strict beneficiary at the
terminal profile $\succeq$. If $i$ is a strict beneficiary at $\succeq$
reporting only $b$, individual rationality confines $f_i(\succeq)$ and
$g_i(\succeq)$ to $\{b,\emptysetmatch\}$, so $g_i(\succeq)=b$ and
$f_i(\succeq)=\emptysetmatch$.

\emph{Step 2: construction.} Fix a strict beneficiary $j$ at $\succeq$ and
write $a=g_j(\succeq)$, $\mu=f(\succeq)$, and $\nu=g(\succeq)$, so $j\in\nu(a)$
and $\mu(j)=\emptysetmatch$. Let $T_a$ and $U_a$ be the terminal sets of
proposers at $a$ under $f$ and under $g$. Since $j$ finds only $a$ acceptable
and is unmatched under $f$, she exhausts her acceptable schools, so $j\in T_a$.
Put $S=\mu(a)\cup(\nu(a)\cap T_a)$, and note $j\in S\setminus\mu(a)$.

Because rejections are permanent and the final choice step is redundant under
$f$, $\mu(a)=C_a^g(T_a)$; since $\mu(a)\subseteq S\subseteq T_a$,
\autoref{lem:greedy-consistency} gives
\begin{equation}\label{eq:p4-consistency}
  C_a^g(S)=\mu(a).
\end{equation}
Next, $S\subseteq U_a$. Let $k\in\mu(a)\setminus\nu(a)$. Were $\nu(k)\succ_k a$,
student $k$ would be a strict beneficiary and Step 1 would give
$\mu(k)=\emptysetmatch$, contradicting $\mu(k)=a$; hence $a\succ_k\nu(k)$ and
$k$ proposed to $a$ under $g$. Every student in $\nu(a)$ proposed to $a$ as
well, so $S\subseteq U_a$. Since students assigned to $a$ under $g$ receive a
terminal offer there, $\nu(a)\subseteq C_a(U_a)$, and substitutability of $C_a$
yields
\begin{equation}\label{eq:p4-substitutability}
  \nu(a)\cap S\subseteq C_a(U_a)\cap S\subseteq C_a(S).
\end{equation}

\emph{Step 3: contradiction.} As $\mu(a)$ and $\nu(a)$ are feasible, downward
closure gives $\{i\}\in\cA_a$ for every $i\in S$. By
\eqref{eq:p4-consistency} and $j\in S\setminus\mu(a)$, greedy skips at least
one student when scanning $S$. List $S$ in decreasing $\pi_a$-priority as
$i_1,\ldots,i_m$ and let $i_{l+1}$ be the first student skipped, so
$\{i_1,\ldots,i_l\}\in\cA_a$, $\{i_1,\ldots,i_{l+1}\}\notin\cA_a$, and
$l\geq1$. By \autoref{def:cutoff}, $C_a^c(S)=\{i_1,\ldots,i_l\}$, so cutoff
containment gives $\{i_1,\ldots,i_l\}\subseteq C_a(S)$. Were
$i_{l+1}\in C_a(S)$, downward closure would make
$\{i_1,\ldots,i_{l+1}\}$ feasible; hence $i_{l+1}\notin C_a(S)$. But
\eqref{eq:p4-consistency} places $i_{l+1}$ in
$S\setminus\mu(a)\subseteq\nu(a)\cap S$, so \eqref{eq:p4-substitutability}
gives $i_{l+1}\in C_a(S)$, a contradiction.
\end{proof}

\begin{proof}[Proof of \autoref{thm:sp-pareto-frontier}]
Use the two-school markets constructed in the proof of
\autoref{thm:robust-boundary}(ii). School $a$ uses the given greedy rule
$C=C^\pi_{\cA}$, and school $b$ is the additional capacity-one school.
Suppose, toward a contradiction, that a strategy-proof mechanism $f$ weakly
Pareto dominates greedy COP-and-Choose $g$.

\emph{Case 1: $C$ is not substitutable.}
Let $M,j,u$ satisfy \eqref{eq:greedy-revival}, and let $b$ rank $u$ above
$j$. At the truthful profile used in the proof of \autoref{thm:robust-boundary}(ii),
greedy leaves $u$ unmatched, assigns $j$ to $b$, and assigns $C(M)$ to $a$.

Weak Pareto dominance requires $f$ to leave $j$ at $b$ and every member of
$C(M)$ at $a$. Since $u\notin C(M)$ and greedy choice is non-wasteful on $M$,
$C(M)\cup\{u\}\notin\cA.$
Feasibility therefore rules out assigning $u$ to $a$, while capacity one at
$b$ rules out assigning her there. Hence $f$ also leaves $u$ unmatched.

Now let $u$ report
$a\succ'_u b\succ'_u\emptysetmatch$. Under greedy COP-and-Choose she receives
$a$, so weak Pareto dominance forces $f$ to assign her to $a$ at the reported
profile. Thus, under her true preference, $u$ can obtain $a$ by misreporting
instead of remaining unmatched, contradicting strategy-proofness of $f$.

\emph{Case 2: $C$ is substitutable but not size monotone.}
As in the proof of \autoref{thm:robust-boundary}(ii), there are
$M\subseteq I$, $j\in I\setminus M$, and distinct
$u,v\in C(M)\setminus C(M\cup\{j\}),$
with $j\in C(M\cup\{j\})$. Let $b$ rank
$u\mathrel{\pi_b}v\mathrel{\pi_b}j$, and use the preferences constructed
there. At the truthful profile $\succeq$, greedy assigns $u$ to $b$, leaves
$v$ unmatched, and assigns $C(M\cup\{j\})$ to $a$.

Let $\succeq'$ be the profile obtained when $v$ reports that only $a$ is
acceptable. The proof of \autoref{thm:robust-boundary}(ii) shows that
$g_v(\succeq')=a$. Weak Pareto dominance therefore gives
$f_v(\succeq')=a$. Since $v$ truly ranks $b$ above $a$ above being unmatched,
strategy-proofness of $f$ implies
$f_v(\succeq)\in\{a,b\}.$

Suppose first that $f_v(\succeq)=b$. Since $b$ has capacity one and greedy
assigns $u$ to $b$, weak Pareto dominance forces $f_u(\succeq)=a$. Every
member of $C(M\cup\{j\})$ must also remain at $a$: student $j$ cannot receive
$b$, and every other member finds only $a$ acceptable. But
$u\notin C(M\cup\{j\})$, so non-wastefulness of greedy choice on
$M\cup\{j\}$ gives
\[
  C(M\cup\{j\})\cup\{u\}\notin\cA,
\]
a contradiction.

Hence $f_v(\succeq)=a$. Similarly,
$C(M\cup\{j\})\cup\{v\}\notin\cA.$
All members of $C(M\cup\{j\})$ other than $j$ find only $a$ acceptable, so
feasibility and weak Pareto dominance force $j$ to receive $b$. Since greedy
assigns $u$ to $b$ and $b$ has capacity one, weak Pareto dominance then forces
$f_u(\succeq)=a$.

Let $\succeq''$ be obtained from $\succeq$ by having $u$ report that only
$a$ is acceptable. If $f_u(\succeq'')\neq a$, then, treating
$\succeq''_u$ as her true preference, $u$ could report her preference at
$\succeq$ and obtain $a$. Strategy-proofness therefore implies
$f_u(\succeq'')=a$.

Under greedy COP-and-Choose at $\succeq''$, student $v$ receives $b$ and
school $a$ assigns $C(M\cup\{j\})$. Weak Pareto dominance forces $v$ to
remain at $b$ and every member of $C(M\cup\{j\})$ to remain at $a$: each
member other than $j$ finds only $a$ acceptable, and $j$ cannot receive $b$
because $b$ is occupied by $v$. Together
with $f_u(\succeq'')=a$, feasibility would therefore require
$C(M\cup\{j\})\cup\{u\}\in\cA,$
contradicting the displayed infeasibility above.

Thus, no strategy-proof mechanism weakly Pareto dominates greedy
COP-and-Choose in either case.
\end{proof}

\begin{proof}[Proof of \autoref{prop:cutoff-incentives}]
Under capacity constraints, cutoff and greedy choice coincide, so cutoff
COP-and-Choose is ordinary student-proposing deferred acceptance and is
strategy-proof.

Conversely, let $\cA$ be a non-capacity feasibility collection. Since $\cA$
is downward closed, it is a general upper-bound in the terminology of
\citet{KamadaKojima2024}. Apply their Theorem~5 to a two-school market in
which one school has constraint $\cA$. Their construction in Online
Appendix~A.5 keeps the student set fixed and gives the other school capacity
one, yielding a priority profile for which the student-optimal fair matching
mechanism is not strategy-proof.
By \autoref{thm:cutoff-implementation}, cutoff COP-and-Choose implements the
student-optimal fair matching at every preference profile. Hence cutoff
COP-and-Choose is not strategy-proof in this two-school market.
\end{proof}

\begin{lemma}
\label{lem:matroid-known}
Greedy choice is substitutable for every strict priority ranking if and
only if the feasibility collection is a matroid.
\end{lemma}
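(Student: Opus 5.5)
We prove the two directions separately, working throughout with the prefix description of greedy choice in \autoref{lem:greedy-prefix}.

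\emph{Matroid implies substitutable.} Fix a matroid $\cA$ and a strict priority $\pi$, and let $I\subseteq I'$ and $i\in C^g(I')\cap I$. By \autoref{lem:greedy-prefix}, we must show that $C^g(I\cap G(i))\cup\{i\}\in\cA$, given that $C^g(I'\cap G(i))\cup\{i\}\in\cA$. The key fact is the standard matroid property that greedy choice from any set $J$ is a \emph{basis} of $J$: a maximal feasible subset, all of whose maximal feasible subsets share the common size $\operatorname{rk}(J)$. Write $X=C^g(I\cap G(i))$ and $X'=C^g(I'\cap G(i))$.

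Suppose, for contradiction, that $X\cup\{i\}\notin\cA$. Then $X$ is a basis of $(I\cap G(i))\cup\{i\}$, so this set has rank $|X|$. Extend $X$ to a basis $B$ of $I'\cap G(i)$. Since $X'\cup\{i\}$ is feasible and $X'$ is a basis of $I'\cap G(i)$, the rank of $(I'\cap G(i))\cup\{i\}$ is $|X'|+1$. Apply exchange to $B$ and $X'\cup\{i\}$: some element $y$ of $X'\cup\{i\}$ makes $B\cup\{y\}$ feasible. Any such $y$ must be $i$, because $B$ is maximal in $I'\cap G(i)$. Hence $B\cup\{i\}\in\cA$, and therefore $X\cup\{i\}\in\cA$ by downward closure, a contradiction.

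The only ingredient to confirm here is that greedy choice yields maximal feasible subsets, which was already shown in the proof of \autoref{thm:robust-boundary}(i), and that maximal feasible subsets are equicardinal in a matroid.

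\emph{Substitutable for every priority implies matroid.} Suppose $\cA$ is not a matroid. Then some set $J$ has two maximal feasible subsets $X$ and $Y$ with $|X|<|Y|$. Choose such $X$ and $Y$, and an element $y\in Y\setminus X$ with $X\cup\{y\}\notin\cA$. I would aim for a violation of the revival type \eqref{eq:revival}: a priority under which some student is rejected from a smaller set but selected from a larger one.

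A natural first attempt is to rank $X$ on top, then $Y\setminus X$, then everything else. Greedy from $X\cup Y$ then selects exactly $X$, since each $y\in Y\setminus X$ is blocked. The difficulty is that removing students from $X$ has to unblock some $y$ while keeping it unselected in a superset. The cleaner route is a minimal counterexample. Take a feasible $X$ and a feasible $Y$ with $|X|<|Y|$ and no $y\in Y\setminus X$ such that $X\cup\{y\}\in\cA$, chosen to minimize $|X\setminus Y|$. Minimality then forces the elements of $X\setminus Y$ to interact with $Y$ in a controlled way.

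Place $X\cap Y$ first, then a single $x\in X\setminus Y$, then $Y\setminus X$. From the full set, greedy takes $X\cap Y$ and $x$, and may then block some $y$. After deleting $x$, greedy takes all of $Y$. This yields the comparison $y\in C(\text{smaller})$ but $y\notin C(\text{larger})$, which is the wrong direction for a substitutability violation.

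The reversal has to come from the rejected student herself. So I would instead take $x$ to be rejected in a small set and revived once an element of $Y$ is added ahead of it, displacing a blocker. This step is the main obstacle. If the argument does not close, I would fall back on citing \citet{Fleiner2003} and \citet{HKYY2026} directly for this direction, as the paper itself indicates.
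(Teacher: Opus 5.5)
\emph{Verdict: your ``if'' direction is correct, but the ``only if'' direction has a genuine gap. It is not proved; the missing step is supplied below.}

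\emph{The ``if'' direction (matroid implies substitutable).} The paper proves neither direction itself. It cites \citet{Fleiner2003} for path independence of greedy choice on matroids and \citet{HKYY2026} for the converse. It then turns path independence into substitutability using greedy consistency (\autoref{lem:greedy-consistency}) and \autoref{lem:aizerman}. Your argument instead gets substitutability directly from \autoref{lem:greedy-prefix}. It uses one application of exchange to $B$ and $X'\cup\{i\}$, after noting $|B|=|X'|$ because both are maximal feasible subsets of the same set. This avoids any detour through path independence. The hypothesis $X\cup\{i\}\notin\cA$ is never used, so it is really a direct proof.

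\emph{The gap in the ``only if'' direction.} None of your priority orders yields a substitutability violation. Citing \citet{HKYY2026} ``directly'' also needs a bridge you omit: their result is about path independence. You must first note that a substitutable greedy rule is path independent, by \autoref{lem:greedy-consistency} and \autoref{lem:aizerman}.

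\emph{How to close it with your own setup.} Your minimal-counterexample setup suffices once you extract the right consequence from minimality.
\begin{enumerate}[label=(\arabic*),leftmargin=2.5em]
\item Take $(X,Y)$ violating exchange with $|X\setminus Y|$ minimal. Then $X\not\subseteq Y$, so pick $x\in X\setminus Y$.
\item The pair $(X\setminus\{x\},Y)$ has a smaller difference. By minimality it satisfies exchange, so some $y_1\in Y\setminus X$ makes $(X\setminus\{x\})\cup\{y_1\}$ feasible.
\item Apply minimality again to $((X\setminus\{x\})\cup\{y_1\},Y)$. This gives $y_2\in Y\setminus X$ with $y_2\neq y_1$ and $(X\setminus\{x\})\cup\{y_1,y_2\}\in\cA$.
\item Because $(X,Y)$ violates exchange, $X\cup\{y_1\}$ and $X\cup\{y_2\}$ are infeasible.
\item Rank $X\setminus\{x\}$ first, then $y_1$, then $x$, then $y_2$.
\item From $X\cup\{y_2\}$, greedy selects $X$ and skips $y_2$.
\item From $X\cup\{y_1,y_2\}$, greedy selects $(X\setminus\{x\})\cup\{y_1\}$, skips $x$, and then selects $y_2$.
\end{enumerate}
So $y_2$ is revived, which violates substitutability. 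This is the third case of \autoref{ex:blocker} in general form. Your heuristic had the roles reversed: the revived student is $y_2\in Y$, the displaced blocker is $x$, and the added student $y_1$ must be ranked immediately above $x$.
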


\citet{Fleiner2003} shows that matroid feasibility implies path independence
for every priority ranking, while \citet{HKYY2026} establish the converse.
The stated formulation follows from greedy consistency and
\autoref{lem:aizerman}.

\begin{proof}[Proof of \autoref{cor:greedy-incentives}]
Suppose first that every school has a matroid constraint. By
\autoref{lem:matroid-known}, greedy choice is substitutable under every
priority ranking. Moreover, for every school $s$ and
set $I\subseteq\I$,
greedy choice selects a basis of the restriction of $\cA_s$ to
$I$.\footnote{A \emph{basis} of the restriction of $\cA_s$ to $I$ is a maximal
feasible subset of $I$. In a matroid all such sets have the same cardinality,
called the \emph{rank} of $I$ and written $\operatorname{rank}_{\cA_s}(I)$,
which is monotone in $I$.} Hence
\[
  |C_s^g(I)|=\operatorname{rank}_{\cA_s}(I),
\]
so monotonicity of matroid rank implies size monotonicity. Therefore
\autoref{lem:sub-sm-sufficient} gives strategy-proofness and proposal-order
independence, and \autoref{cor:robust-greedy-frontier} gives non-wastefulness
and Pareto-undominance among strategy-proof mechanisms.

Conversely, let $\cA$ be a non-matroid feasibility collection. By
\autoref{lem:matroid-known}, there is a priority ranking $\pi$ for which
$C^\pi_{\cA}$ is not substitutable. The necessity
direction of
\autoref{thm:robust-boundary}(ii) then embeds $(\cA,\pi)$ in a market with one
additional capacity-one school in which greedy COP-and-Choose is manipulable
under every proposal-order convention; \autoref{thm:sp-pareto-frontier} uses
the same market and shows that no strategy-proof mechanism weakly Pareto
dominates greedy there.
\end{proof}

\begin{proof}[Proof of \autoref{prop:ascending}]
List students in $\pi$-order. Because the requirement vectors are totally
ordered componentwise and $\pi$ refines ascending requirements, requirements
are weakly increasing componentwise along the list.

Consider greedy choice from a set $I\subseteq\I$. Suppose that, after selecting
$X\subseteq I$, greedy cannot add student $i$. Then $X\cup\{i\}$ violates
some resource constraint. Every lower-priority student $j\in I$ satisfies
$r_j\geq r_i$ componentwise, so $X\cup\{j\}$ violates the same constraint.
Thus, once greedy skips a student, it skips every lower-priority student.
Greedy therefore selects a feasible priority prefix and coincides with cutoff
choice on every set.

The common rule is substitutable by \autoref{lem:cutoff-rejection}.

It remains to establish size monotonicity. Let $I\subseteq I'\subseteq\I$ and
write $k=|C^\pi_{\cA}(I)|$. Since greedy selects a priority prefix, the first
$k$ students of $I$ form a feasible set. For each $m\leq k$, the $m$th
student of $I'$ has a requirement vector weakly smaller componentwise than
the $m$th student of $I$. Hence the sum of the first $k$ requirement vectors
in $I'$ is componentwise no larger than the corresponding sum in $I$ and is
therefore feasible. Greedy consequently selects at least $k$ students from
$I'$, so
$|C^\pi_{\cA}(I')|\geq |C^\pi_{\cA}(I)|.$
Thus, greedy choice is size monotone. The final conclusion follows from
\autoref{thm:robust-boundary}(i) and \autoref{thm:robust-boundary}(ii).
\end{proof}

The chain assumption cannot simply be dropped in several dimensions. For
example, let $\I=\{1,2,3\}$, $\cD=\{1,2\}$, $q=(2,3)$, and
\[
  r_1=(1,1),\qquad r_2=(1,2),\qquad r_3=(2,1).
\]
The only componentwise comparisons between distinct vectors are
$r_1\leq r_2$ and $r_1\leq r_3$; $r_2$ and $r_3$ are incomparable. Thus,
$1\mathrel{\pi}3\mathrel{\pi}2$ refines ascending requirements: it ranks
student 1 first, and either order of students 2 and 3 is permitted. These
requirements induce the feasibility collection in \autoref{ex:blocker};
under this priority, its third case shows that greedy choice is not substitutable.

\subsection{Refugee resettlement}

\begin{proof}[Proof of \autoref{prop:resource-hierarchy}]
By \autoref{lem:cutoff-protects-fair} and
\autoref{thm:dual-protection}(i), every individually rational and fair matching
is claim-compatible.

Suppose next that an individually rational matching $\mu^0$ is
claim-compatible. Fix a locality $\ell\in\cL$ and a family
$f\in\mu^0(\ell)$. Then
$H_\ell^{\mu^0}(f)\cup\{f\}$ is feasible. Hence, for every resource dimension
$d$ with $r_f^d>0$,
\[
  r_f^d+\sum_{f'\in H_\ell^{\mu^0}(f)}r_{f'}^d
  \leq q_\ell^d.
\]
For dimensions with $r_f^d=0$, weak accommodation imposes no restriction.
Thus, locality $\ell$ weakly accommodates $f$ alongside
$H_\ell^{\mu^0}(f)$, so $\mu^0$ is interference-free.

Now suppose that $\mu^0$ is interference-free. Fix
$\ell\in\cL$, $f\in\mu^0(\ell)$, and a feasible set
$F\subseteq H_\ell^{\mu^0}(f)$. For every $d\in\cD$ with $r_f^d>0$,
interference-freeness gives
\[
  r_f^d+\sum_{f'\in F}r_{f'}^d
  \leq
  r_f^d+\sum_{f'\in H_\ell^{\mu^0}(f)}r_{f'}^d
  \leq q_\ell^d.
\]
If $r_f^d=0$, adding $f$ does not change resource use in dimension $d$.
Therefore $F\cup\{f\}$ is feasible. Hence $\mu^0$ is weakly
claim-compatible.

It remains to show that each implication can be strict. Let
$\cL=\{\ell\}$ and suppose every family finds $\ell$ acceptable.

First, let $\cF=\{f_1,f_2\}$, with one resource, capacity $2$,
$r_{f_1}=r_{f_2}=1$, and priority $f_1\mathrel{\pi_\ell}f_2$. The matching
that assigns only $f_2$ is claim-compatible, since $\{f_1,f_2\}$ is feasible,
but it is not fair because the unmatched family $f_1$ has higher priority than $f_2$.
Thus, the first implication can be strict.

Second, let $\cF=\{f_1,f_2,f_3\}$, with capacity vector $(2,1)$,
requirements
\[
  r_{f_1}=r_{f_2}=(2,0),
  \qquad
  r_{f_3}=(0,1),
\]
and priority
$f_1\mathrel{\pi_\ell}f_2\mathrel{\pi_\ell}f_3$. The matching that assigns
only $f_3$ is interference-free because $f_3$ uses only the second resource,
but it is not claim-compatible because
$\{f_1,f_2,f_3\}$ is infeasible. Thus, the second implication can be strict.

Finally, let there be one resource with capacity $3$,
$r_{f_1}=r_{f_2}=2$, $r_{f_3}=1$, and the same priority. The matching that
assigns only $f_3$ is weakly claim-compatible: every feasible subset of
$\{f_1,f_2\}$ can be combined with $f_3$. It is not interference-free,
because accommodating $f_3$ alongside both higher-priority families would
require five units of capacity. Thus, the third implication can be strict.
\end{proof}

\begin{proof}[Proof of \autoref{prop:kda-coincidence}]
Fix an individually rational matching $\mu^0$, a locality $\ell\in\cL$, and
$f\in\mu^0(\ell)$. With one resource, the model requires $r_f>0$, so
$\ell$ weakly accommodates $f$ alongside
$H_\ell^{\mu^0}(f)$ if and only if
$H_\ell^{\mu^0}(f)\cup\{f\}$ is feasible. Hence interference-freeness and
claim compatibility coincide.

By \autoref{thm:cutoff-implementation}, the cutoff outcome is individually
rational and fair. By \autoref{prop:resource-hierarchy}, it is therefore
interference-free. Moreover, claim compatibility and
\autoref{thm:dual-protection} imply that cutoff choice protects every
individually rational, interference-free matching. Hence
\autoref{thm:protection} implies that the cutoff outcome weakly Pareto
dominates every individually rational, interference-free matching. It is
therefore family-optimal among such matchings.

\citet[Theorem~5]{DKT2023} gives the same family-optimality property to the
KDA outcome under the standing assumptions stated in
\autoref{sec:resources}. Since both outcomes are interference-free, each weakly Pareto
dominates the other. Strict family preferences then imply that they assign
every family to the same locality. Thus, cutoff COP-and-Choose and KDA
coincide.
\end{proof}

\begin{proof}[Proof of \autoref{cor:if-dominance}]
By \autoref{prop:resource-hierarchy}, every individually rational,
interference-free matching is weakly claim-compatible. Hence
\autoref{thm:dual-protection} implies that greedy choice protects every such
matching, and \autoref{thm:protection} implies that every greedy
COP-and-Choose outcome weakly Pareto dominates it. By Theorems~5 and~6 of
\citet{DKT2023}, respectively, KDA and TKDA are interference-free on the
domain stated in \autoref{sec:resources}. Their outcomes are individually
rational by the admissible-pair and outside-option conventions imposed there.
Thus, greedy COP-and-Choose weakly Pareto dominates both.

By \autoref{thm:cutoff-implementation}, the cutoff outcome is individually
rational and fair, and therefore interference-free by
\autoref{prop:resource-hierarchy}. By \citet[Theorem~5]{DKT2023}, KDA is
family-optimal among individually rational, interference-free matchings.
TKDA is such a matching by \citet[Theorem~6]{DKT2023}. Thus, KDA weakly
Pareto dominates both TKDA and cutoff COP-and-Choose.

The final statement follows from \autoref{prop:kda-coincidence}.
\end{proof}

\section*{AI usage disclosure}

The authors used ChatGPT (OpenAI) and Claude (Anthropic) to assist with
manuscript preparation, including exposition, LaTeX editing, and checking  
mathematical arguments. The authors take full responsibility for the content 
of the paper, including its results, proofs, computations, and references.

\begin{spacing}{1.15}

\end{spacing}

\clearpage
\section*{Online Appendix}
\addcontentsline{toc}{section}{Online Appendix}
\renewcommand{\thesection}{OA\arabic{section}}
\renewcommand{\thesubsection}{OA\arabic{section}.\arabic{subsection}}
\renewcommand{\thesubsubsection}{OA\arabic{section}.\arabic{subsection}.\arabic{subsubsection}}
\renewcommand*{\theHsection}{supp.\arabic{section}}
\renewcommand*{\theHsubsection}{supp.\arabic{section}.\arabic{subsection}}
\renewcommand*{\theHsubsubsection}{supp.\arabic{section}.\arabic{subsection}.\arabic{subsubsection}}
\setcounter{section}{0}
\setcounter{subsection}{0}
\setcounter{subsubsection}{0}

\section{A Cutoff-Containing Incentive Limit}

\begin{theorem}
\label{thm:no-sp-class}
There are constraints and priorities such that, for every cutoff-containing
profile of choice rules, COP-and-Choose is manipulable under every
proposal-order convention.
\end{theorem}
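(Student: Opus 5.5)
The plan is to reuse the two-school market behind \autoref{ex:motivating} and \autoref{thm:robust-boundary}(ii). The first observation is that, on this market, cutoff containment leaves almost no freedom in the choice rules. Take students $1,2,3$ and school $a$ with $\cA_a=\{\emptyset,\{1\},\{2\},\{3\},\{1,3\}\}$ and priority $1\mathrel{\pi_a}2\mathrel{\pi_a}3$. Add a capacity-one school $b$ with a fixed priority, say $3\mathrel{\pi_b}1\mathrel{\pi_b}2$. At $b$, every nonempty proposal set has a one-student cutoff choice, which is already maximal feasible, so cutoff containment forces $C_b=C_b^c$. At $a$, enumerating subsets gives $C_a(I)=C_a^c(I)=C_a^g(I)$ on every $I\neq\{1,2,3\}$, while $C_a(\{1,2,3\})\in\{\{1\},\{1,3\}\}$. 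The class of cutoff-containing profiles is therefore exactly $\{\text{greedy},\text{cutoff}\}$ at $a$, with $b$ fixed.

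\emph{Case 1: $C_a(\{1,2,3\})=\{1,3\}$ (greedy).} Here the argument is Case 1 of the necessity proof of \autoref{thm:robust-boundary}(ii), with $M=\{2,3\}$, $j=1$ and $u=3$. Indeed $3\notin C_a(\{2,3\})$ but $3\in C_a(\{1,2,3\})$, and $b$ ranks $3$ above $1$ as that proof requires. Student $3$ truly finds only $a$ acceptable, student $1$ ranks $b$ above $a$, and student $2$ finds only $a$ acceptable. Truthfully, $3$ is unmatched. If $3$ adds $b$ below $a$, she is rejected at $a$, displaces $1$ at $b$, and forces $1$ to propose to $a$, which revives her offer there. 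That proof already shows the manipulation succeeds under every proposal-order convention.

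\emph{Case 2: $C_a(\{1,2,3\})=\{1\}$ (cutoff).} Here the mechanism is cutoff COP-and-Choose. By \autoref{thm:cutoff-implementation} it is order independent and returns the fair benchmark, so I need one preference profile, with the same constraints and priorities, at which the student-optimal fair matching is manipulable. The intended mechanism is the one in \citet[Theorem~5]{KamadaKojima2024}: a student whose presence at $a$ acts as a blocker, or whose truncation removes the blocker, changes which student $b$ holds and thereby benefits herself.

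I expect this case to be the main obstacle. A first attempt with $1\colon b\succ a$, $2,3\colon a\succ b$ yields no profitable report. So the plan is to search the finitely many profiles over $\{a,b\}$ systematically, and, if necessary, to re-choose $\pi_b$ jointly with Case 1. Case 1 needs only that $b$ rank $3$ above $1$, which leaves the position of $2$ free. If no three-student profile works, I would enlarge the market with a fourth student. This student would be a loop at $a$, or would be added to $\cA_a$ so as to preserve the enumeration above, or would give $b$ capacity one with extra demand, so as to embed the Kamada--Kojima non-capacity manipulation. Once a profile is found, order independence turns a single profitable deviation into manipulability under every convention, and combining the two cases proves the theorem.
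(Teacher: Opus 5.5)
\textbf{Gap: Case~2 (the cutoff profile) is not proved.} Your reduction is sound as far as it goes. With $\cA_a$ from \autoref{ex:motivating}, $1\mathrel{\pi_a}2\mathrel{\pi_a}3$, and a capacity-one school $b$, cutoff containment forces every choice except $C_a(\{1,2,3\})\in\{\{1\},\{1,3\}\}$. Your Case~1 is a correct instance of the revival construction in \autoref{thm:robust-boundary}(ii), with $3\mathrel{\pi_b}1$ as that construction requires.

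In Case~2, however, you exhibit no manipulable profile; you give only a search plan and fallbacks, and none of them closes the case. Kamada and Kojima's Theorem~5 lets one \emph{choose} the priorities, but yours are already pinned down by Case~1. Adding a fourth student enlarges the family of available sets and can create new free choices for cutoff-containing rules, so your enumeration would have to be redone. As written, the theorem is proved only for the greedy profile.

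The missing idea is that in the cutoff case the manipulator is the priority blocker herself, and she manipulates by truncating her list rather than lengthening it. Student $2$ is the blocker at $a$: together with $1$ she makes cutoff stop before $3$. Let $1$ find only $a$ acceptable, and let $a\succ_i b\succ_i\emptysetmatch$ for $i=2,3$.
\begin{itemize}
\item Truthfully, the terminal proposal set at $a$ is $\{1,2,3\}$, so $a$ selects $\{1\}$. Students $2$ and $3$ move to $b$, which keeps $3$ because $3\mathrel{\pi_b}2$, so $2$ ends unmatched.
\item If $2$ reports only $b$ acceptable, $a$ receives $\{1,3\}$ and selects both, so $3$ never proposes to $b$. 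Then $b$ selects $2$, who gains since $b\succ_2\emptysetmatch$.
\end{itemize}
By \autoref{thm:cutoff-implementation}, cutoff COP-and-Choose is order independent, so this manipulation succeeds under every convention.

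Your priority $3\mathrel{\pi_b}1\mathrel{\pi_b}2$ serves both cases at once, so you need neither a new $\pi_b$ nor an extra student. With this profile your argument is complete. It mirrors the paper's proof, which uses two non-capacity schools but the same two manipulations: blocker truncation against the cutoff profile, and list lengthening (offer revival) against the greedy one.
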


\begin{proof}[Proof of \autoref{thm:no-sp-class}]
Let $\I=\{1,2,3\}$ and $\Sset=\{b,c\}$, with feasibility collections
\[
\begin{split}
  \cA_b&=\{\emptyset,\{1\},\{2\},\{3\},\{2,3\}\},\\
  \cA_c&=\{\emptyset,\{1\},\{2\},\{3\},\{1,2\}\},
\end{split}
\]
and priorities
\[
  1\mathrel{\pi_b}2\mathrel{\pi_b}3,
  \qquad
  2\mathrel{\pi_c}3\mathrel{\pi_c}1.
\]
Both feasibility collections are downward closed and fail the matroid
augmentation property: $\{1\}$ and $\{2,3\}$ witness failure for $\cA_b$,
while $\{3\}$ and $\{1,2\}$ do so for $\cA_c$.

Feasibility and cutoff containment determine the choice rule for every set of
students except $\{1,2,3\}$ at school $c$. For that set, cutoff choice is
$\{2\}$, and the only feasible supersets containing it are $\{2\}$ and
$\{1,2\}$. Hence there are exactly two cutoff-containing profiles of choice
rules.

Suppose first that
\[
  C_c(\{1,2,3\})=\{2\}.
\]
Let student $1$ rank $c$ above $b$, let student $2$ find only $c$
acceptable, and let student $3$ rank $c$ above $b$. Under truthful reporting,
all three students eventually propose to $c$. Its terminal choice is $\{2\}$,
so students $1$ and $3$ eventually propose to $b$. From $\{1,3\}$, school
$b$ selects student $1$. Thus, the outcome assigns student $1$ to $b$ and
student $2$ to $c$, leaving student $3$ unmatched.

If student $3$ instead reports that only $b$ is acceptable, then only students
$1$ and $2$ propose to $c$, which selects $\{1,2\}$, while $b$ selects
student $3$. Student $3$ therefore obtains $b$ rather than remaining
unmatched, a profitable deviation.

Suppose instead that
\[
  C_c(\{1,2,3\})=\{1,2\}.
\]
Let student $1$ find only $c$ acceptable, let student $2$ rank $b$ above
$c$, and let student $3$ find only $c$ acceptable. Under truthful reporting,
students $1$ and $3$ propose to $c$, which selects student $3$, while student
$2$ is selected by $b$. Hence student $1$ is unmatched.

Now let student $1$ report $c\succ'_1 b\succ'_1\emptysetmatch$. After being
rejected by $c$, student $1$ proposes to $b$ and displaces student $2$.
Student $2$ then proposes to $c$, whose cumulative set of proposers becomes
$\{1,2,3\}$ and whose choice is $\{1,2\}$. Student $1$ is therefore selected
by both schools and chooses $c$ at the final step. She obtains her true first
choice instead of remaining unmatched.

In both cases, the relevant eventual proposals and rejections are forced by
the reported preferences, so the manipulation succeeds under every
proposal-order convention.
\end{proof}

The next example exhibits a rule to which
\autoref{prop:greedy-dominates-sp-class} applies nonvacuously, and over which
greedy's dominance is strict.

\begin{example}\label{ex:strict-bracket}
Let $\I=\{1,2,3\}$ and $\Sset=\{a,b\}$ with two resources, $r_1=(0,1)$,
$r_2=r_3=(1,0)$, $q_a=(2,0)$, and $q_b=(0,1)$, and let both priorities rank
students in the order $1,2,3$. Let $C_b$ be greedy choice and let $C_a$ select
the cutoff set together with the first student greedy would add beyond it, so
that $C_a(\{2,3\})=\{2,3\}$ while $C_a(\{1,2,3\})=\{2\}$. Then $C_a$ is
substitutable and cutoff-containing but not size monotone, while greedy choice
is substitutable and size monotone at both schools. Each student finds only one
school feasible and $C_a$ selects student~2 from every proposal set containing
her, so COP-and-Choose under $C$ is strategy-proof. When all three students
find only $a$ acceptable it returns $(\emptysetmatch,a,\emptysetmatch)$, while
greedy COP-and-Choose returns $(\emptysetmatch,a,a)$.
\end{example}

The next example shows that the substitutability hypothesis in
\autoref{prop:greedy-dominates-sp-class} cannot be dropped: a cutoff-containing
rule that fails substitutability can induce a strategy-proof COP-and-Choose
that greedy does not weakly Pareto dominate.

\begin{example}\label{ex:nonsubstitutable-sp}
Let $\I=\{1,2,3,4\}$ and $\Sset=\{s\}$, with priority $1\pi_s 2\pi_s 3\pi_s 4$
and feasibility collection consisting of the sets that contain at most one of
$\{1,2\}$ and at most one of $\{3,4\}$. This is a partition matroid, so greedy
choice is substitutable and size monotone. From the full set $\{1,2,3,4\}$,
cutoff choice selects $\{1\}$ and greedy choice selects $\{1,3\}$.

Let $C_s$ agree with greedy choice on every proper subset of $\I$ and select
$\{1,4\}$ from $\I$ itself. The set $\{1,4\}$ is feasible and contains the
cutoff set, so $C_s$ is a cutoff-containing choice rule. It is not
substitutable: student $4$ is selected from $\{1,2,3,4\}$ but not from its
subset $\{1,3,4\}$, where $C_s$ coincides with greedy and selects $\{1,3\}$.

With a single school, a student's report affects the process only through
whether she proposes to $s$, and her assignment is $s$ if and only if she is
selected from the terminal set of proposers. Declaring $s$ unacceptable leaves
her unmatched, and declaring it acceptable when it is not can only assign her
an unacceptable school. Hence COP-and-Choose under $C_s$ is strategy-proof, as
is greedy COP-and-Choose.

At the profile in which all four students find $s$ acceptable, greedy assigns
$\{1,3\}$ to $s$ and COP-and-Choose under $C_s$ assigns $\{1,4\}$. Student $4$
strictly prefers the latter, so greedy COP-and-Choose does not weakly Pareto
dominate COP-and-Choose under $C_s$. The only hypothesis of
\autoref{prop:greedy-dominates-sp-class} that fails is substitutability of
$C_s$.
\end{example}

\Needspace{7\baselineskip}
\section{Additional Welfare Results}
\subsection{Why cutoff retains rejected proposals}

\begin{example}
\label{ex:cutoff-da}
Let $\I=\{1,2,3\}$ and $\Sset=\{s\}$. Every student finds the school
acceptable. The priority is
\[
  1\mathrel{\pi_s}2\mathrel{\pi_s}3,
\]
and the feasible sets are
\[
  \cA_s=\{\emptyset,\{1\},\{2\},\{3\},\{1,3\}\}.
\]

Consider the natural deferred-acceptance analogue that applies cutoff choice to
the currently held students together with each new proposer and discards
rejected proposals. Suppose students propose in the order $1,2,3$. Student $1$
is initially held. When student $2$ proposes, cutoff choice applied to
$\{1,2\}$ selects only student $1$, so student $2$ is rejected and her proposal
is discarded. Student $3$ then proposes. Since $\{1,3\}$ is feasible, cutoff
choice selects both students. The resulting matching assigns students $1$ and
$3$ to $s$ and is unfair because student $2$ has justified envy toward student
$3$.

Cutoff COP-and-Choose produces a different outcome because it retains student
$2$'s proposal. Its terminal set of proposers is $\{1,2,3\}$. Cutoff choice
selects student $1$ and stops at student $2$, whose addition would violate
feasibility. Thus, only student $1$ is assigned to $s$, which is the fair
benchmark by \autoref{thm:cutoff-implementation}.

The difference comes from retaining student $2$ in the set of proposers:
removing her changes cutoff choice from
$C_s^c(\{1,2,3\})=\{1\}$ to $C_s^c(\{1,3\})=\{1,3\}$.
\end{example}

\subsection{Maximum-cardinality cutoff completion}

The class characterized in \autoref{thm:exact-fair-dominance} is not limited
to priority-based objectives. For each school $s\in\Sset$ and set
$I\subseteq\I$ of available students, compare subsets by their indicator
vectors in descending $\pi_s$-priority order. Let $C_s^m(I)$ be the
lexicographically highest maximum-cardinality member of
\[
  \{I'\in\cA_s:C_s^c(I)\subseteq I'\subseteq I\}.
\]
This collection is nonempty because it contains $C_s^c(I)$, so the
\emph{maximum-cardinality completion} is well defined.

\begin{corollary}\label{cor:maximum-enrollment}
For every preference profile and every proposal-order convention, every
COP-and-Choose outcome under $(C_s^m)_{s\in\Sset}$ weakly Pareto dominates
every individually rational and fair matching. Each $C_s^m$ is non-wasteful,
and $|C_s^m(I)|\geq |C_s^g(I)|$ for every $s\in\Sset$ and $I\subseteq\I$.
\end{corollary}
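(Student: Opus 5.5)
The corollary has three claims, and I will take them in turn. The first is the Pareto-dominance claim. It follows from \autoref{thm:exact-fair-dominance} once we check that $(C_s^m)_{s\in\Sset}$ is cutoff-containing. That check is immediate from the definition, since $C_s^m(I)$ is chosen from sets $I'$ with $C_s^c(I)\subseteq I'$. It is also a legitimate choice rule: the chosen set is feasible and is contained in $I$. Implication (i)$\Rightarrow$(iii) of \autoref{thm:exact-fair-dominance} then gives weak Pareto dominance over every individually rational and fair matching, for every preference profile and every proposal-order convention. Equivalently, one can invoke \autoref{lem:cutoff-protects-fair} together with \autoref{thm:protection}.

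The second claim is non-wastefulness. Suppose $i\in I\setminus C_s^m(I)$ and $C_s^m(I)\cup\{i\}\in\cA_s$. This enlarged set still contains $C_s^c(I)$ and lies inside $I$, so it belongs to the collection over which $C_s^m(I)$ is a maximum-cardinality element. But it has strictly larger cardinality, which is a contradiction.

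The third claim is the size comparison $|C_s^m(I)|\ge|C_s^g(I)|$. By \eqref{eq:nesting}, $C_s^c(I)\subseteq C_s^g(I)$. Moreover $C_s^g(I)$ is feasible and contained in $I$. Hence $C_s^g(I)$ is itself a member of the collection $\{I'\in\cA_s:C_s^c(I)\subseteq I'\subseteq I\}$, and its cardinality is bounded by the maximum cardinality in that collection, which is $|C_s^m(I)|$.

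No step is a real obstacle. The only point needing care is confirming that $C_s^m$ meets the formal definition of a choice rule, so that \autoref{thm:exact-fair-dominance} applies. The lexicographic tie-break makes $C_s^m(I)$ well defined, and this property, together with feasibility and $C_s^m(I)\subseteq I$, is exactly what the theorem requires.
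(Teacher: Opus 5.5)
Your proof is correct and follows the paper's argument step for step. It obtains the welfare claim from cutoff containment by construction plus \autoref{thm:exact-fair-dominance}, non-wastefulness from the cardinality contradiction, and the size bound from \eqref{eq:nesting}, which places $C_s^g(I)$ in the maximization set (the paper also notes that this set is nonempty because it contains $C_s^c(I)$).
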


The maximization is performed separately for each set of available students
and therefore need not maximize enrollment in the resulting matching.
Computing $C_s^m(I)$ requires solving a maximum-cardinality feasible-subset
problem. Greedy uses a single priority-ordered pass and, whenever
it is size monotone, selects a maximum-cardinality feasible subset from every
available set. On that domain, the completion provides no increase in the
number of students selected.

Whenever $C_s^m\neq C_s^g$, \autoref{prop:cutoff-containing-uniqueness}
implies that $C_s^m$ violates consistency: removing unselected students can
change the school's choice. More generally, substitutability and size
monotonicity together imply consistency \citep{AygunSonmez2013}, so a
cutoff-containing rule with both properties exists at a given
constraint-priority pair if and only if greedy has both.

\begin{proof}[Proof of \autoref{cor:maximum-enrollment}]
For every school $s\in\Sset$ and set $I\subseteq\I$,
$C_s^c(I)\subseteq C_s^m(I)$ by construction. Hence the profile
$(C_s^m)_{s\in\Sset}$ is cutoff-containing, and
\autoref{thm:exact-fair-dominance} gives the Pareto-dominance conclusion.

To see that $C_s^m$ is non-wasteful, suppose that
$i\in I\setminus C_s^m(I)$ and $C_s^m(I)\cup\{i\}\in\cA_s$. Then
$C_s^m(I)\cup\{i\}$ is a feasible subset of $I$ containing $C_s^c(I)$ and has
strictly larger cardinality than $C_s^m(I)$, contradicting the definition of
$C_s^m(I)$.

Finally, by \eqref{eq:nesting}, $C_s^g(I)$ is a feasible subset of $I$
containing $C_s^c(I)$. It is therefore one of the sets over which
$C_s^m(I)$ maximizes cardinality, so
$|C_s^m(I)|\geq |C_s^g(I)|$.
\end{proof}

\subsection{The minimal weak-claim protector}

The class of weakly claim-compatible matchings protected by greedy also has a
pointwise smallest protecting choice rule. For each school $s\in\Sset$ and set
$I\subseteq\I$, define \emph{weak-claim choice} by
\begin{equation}\label{eq:weak-claim-choice}
  C_s^w(I)=
  \bigl\{i\in I:
  I'\cup\{i\}\in\cA_s
  \text{ for every $I'\in\cA_s$ with }
  I'\subseteq I\cap G_s(i)\bigr\}.
\end{equation}

\begin{lemma}\label{lem:weak-claim-nesting}
For every $s\in\Sset$ and $I\subseteq\I$,
\[
  C_s^c(I)\subseteq C_s^w(I)\subseteq C_s^g(I).
\]
In particular, $C_s^w$ is a choice rule for every school $s$.
\end{lemma}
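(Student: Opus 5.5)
We prove the two inclusions separately, each by a pointwise membership test. The tools are the prefix characterization \eqref{eq:cutoff-prefix-test} for cutoff choice and \autoref{lem:greedy-prefix} for greedy choice. Once both inclusions hold, $C_s^w(I)\subseteq C_s^g(I)\in\cA_s$, so $C_s^w(I)$ is feasible by downward closure. It is also a subset of $I$ by definition, so $C_s^w$ is a choice rule.

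\emph{First inclusion.} Fix $i\in C_s^c(I)$. By \eqref{eq:cutoff-prefix-test}, the priority prefix $P_s(i;I)=(I\cap G_s(i))\cup\{i\}$ is feasible. Take any $I'\in\cA_s$ with $I'\subseteq I\cap G_s(i)$. Then $I'\cup\{i\}\subseteq P_s(i;I)$, so $I'\cup\{i\}\in\cA_s$ by downward closure. Hence $i$ passes the test in \eqref{eq:weak-claim-choice}, and $i\in C_s^w(I)$.

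\emph{Second inclusion.} Fix $i\in C_s^w(I)$ and set $X=C_s^g(I\cap G_s(i))$. Greedy choice returns a feasible set, so $X\in\cA_s$, and $X\subseteq I\cap G_s(i)$. The defining condition of $C_s^w$, applied with $I'=X$, gives $X\cup\{i\}\in\cA_s$. The second statement of \autoref{lem:greedy-prefix} then gives $i\in C_s^g(I)$.

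There is no real obstacle here. The only point that needs care is to apply the weak-claim condition to the specific set that greedy has already accumulated before reaching $i$. This mirrors part (ii) of the proof of \autoref{thm:dual-protection}.
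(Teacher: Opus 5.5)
Your proof is correct and follows the paper's argument essentially step for step. The first inclusion uses feasibility of the priority prefix and downward closure. The second applies the weak-claim condition to $X=C_s^g(I\cap G_s(i))$ and then invokes the greedy prefix lemma. The choice-rule conclusion follows from feasibility of $C_s^g(I)$ together with downward closure.
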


\begin{theorem}\label{thm:exact-weak-claim}
For a profile $C$ of choice rules, the following are equivalent.
\begin{enumerate}[label=(\roman*),leftmargin=2.5em]
\item $C_s^w(I)\subseteq C_s(I)$ for every school $s\in\Sset$ and every
      $I\subseteq\I$;
\item for every student-preference profile, $C$ protects every individually
      rational, weakly claim-compatible matching;
\item for every student-preference profile and every proposal-order convention,
      every COP-and-Choose outcome under $C$ weakly Pareto dominates every
      individually rational, weakly claim-compatible matching.
\end{enumerate}
\end{theorem}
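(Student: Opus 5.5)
The proof follows the template of \autoref{thm:exact-fair-dominance}, with claim compatibility replaced by weak claim compatibility and cutoff choice replaced by $C_s^w$.

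\emph{(i) implies (ii).} First I will show that $C^w$ itself protects every individually rational, weakly claim-compatible matching $\mu^0$. Fix $s$, a set $I\subseteq D_s(\mu^0)$, and a student $i\in\mu^0(s)\cap I$. Let $I'\in\cA_s$ with $I'\subseteq I\cap G_s(i)$. Then $I'\subseteq D_s(\mu^0)\cap G_s(i)=H_s^{\mu^0}(i)$, so weak claim compatibility gives $I'\cup\{i\}\in\cA_s$. By \eqref{eq:weak-claim-choice}, $i\in C_s^w(I)$. Containment (i) then transfers protection to $C$.

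\emph{(ii) implies (iii).} This is immediate from \autoref{thm:protection}.

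\emph{(iii) implies (i).} This is the substantive direction. Suppose $i\in C_s^w(I)\setminus C_s(I)$ for some school $s$ and set $I$. Make $s$ the only acceptable school for every student in $I$, and make every school unacceptable to students outside $I$. Under every convention, the terminal proposer set at $s$ is then $I$, so $i$ ends up unmatched.

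I need a reference matching $\mu^0$ with $\mu^0(i)=s$ that is individually rational and weakly claim-compatible. The simplest choice is $\mu^0(s)=\{i\}$ with everyone else unmatched. Here the demand set $D_s(\mu^0)$ is all of $I$, so $H_s^{\mu^0}(i)=I\cap G_s(i)$. The definition of $C_s^w$ says precisely that $I'\cup\{i\}\in\cA_s$ for every feasible $I'\subseteq I\cap G_s(i)$, which is weak claim compatibility at $i$. The set $\{i\}$ is feasible because $C_s^w(I)\subseteq C_s^g(I)$ by \autoref{lem:weak-claim-nesting}. No other student is assigned to $s$, and other schools are empty, so no further conditions arise. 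The COP-and-Choose outcome leaves $i$ unmatched, whereas $\mu^0(i)=s\succ_i\emptysetmatch$, contradicting (iii).

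The main obstacle I expect is checking that this singleton reference matching works cleanly. The earlier fair-matching proof needed the whole cutoff prefix assigned; here, by contrast, weak claim compatibility involves only $i$, and the students in $I\setminus\{i\}$, being unmatched, all lie in the demand set. So the verification reduces exactly to the defining condition of $C_s^w$. I also need to confirm that $C_s^w$ is a genuine choice rule, but \autoref{lem:weak-claim-nesting} supplies this; I would take it as given, or prove it via \autoref{lem:greedy-prefix}, using $I'=C_s^g(I\cap G_s(i))$ for the upper inclusion and downward closure of priority prefixes for the lower one.
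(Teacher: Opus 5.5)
Your proof is correct and follows the paper's argument step for step: (i)$\Rightarrow$(ii) by checking that every feasible $I'\subseteq I\cap G_s(i)$ lies in $H_s^{\mu^0}(i)$, (ii)$\Rightarrow$(iii) by \autoref{thm:protection}, and (iii)$\Rightarrow$(i) by the single-school construction from \autoref{thm:exact-fair-dominance}. The only difference is that the paper's reference matching assigns all of $C_s^w(I)$ to $s$ and checks the weak claim condition for each member, whereas your singleton $\mu^0(s)=\{i\}$ needs the condition only at $i$; $\{i\}\in\cA_s$ also follows directly from taking $I'=\emptyset$ in \eqref{eq:weak-claim-choice}, so this is a valid and slightly leaner choice.
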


Thus, $C^w$ is the pointwise smallest weak-claim protector. On a matroid,
$C^w=C^g$: the greedy set on $I\cap G_s(i)$ is a basis, so if it can be
augmented by $i$, every feasible subset can. The inclusion is strict in
\autoref{ex:cutoff-da}, where $C^w(\{1,2,3\})=\{1\}$ and
$C^g(\{1,2,3\})=\{1,3\}$.

Whenever $C_s^w(I)\subsetneq C_s^g(I)$, weak-claim choice is wasteful,
since any student in the difference can be added feasibly. Together with
the maximum-cardinality completion, it brackets greedy within the
cutoff-containing class: $C_s^w$ can select fewer students and $C_s^m$
more. By \autoref{prop:cutoff-containing-uniqueness}, either rule violates
consistency whenever it differs from greedy.

\subsection{Limits and completions}\label{sec:completion}

Starting from an individually
rational matching $\mu$, an \emph{improvement step} selects a student
$i\in\I$ and a school $s\in\Sset$ such that
\[
  s\succ_i\mu(i)
  \quad\text{and}\quad
  \mu(s)\cup\{i\}\in\cA_s,
\]
moves $i$ to $s$, and removes $i$ from her former school, if any.

\begin{proposition}
\label{prop:beyond-benchmark-completion}
There is a market in which the student-optimal fair matching admits no
improvement step, while every greedy COP-and-Choose outcome Pareto dominates
it and is Pareto efficient. Hence no sequence of improvement steps starting
from the student-optimal fair matching can reach or weakly Pareto dominate a
greedy outcome.
\end{proposition}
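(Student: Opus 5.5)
The plan is to prove \autoref{prop:beyond-benchmark-completion} with a small explicit market, verified by hand, in which the only welfare gain over the fair benchmark is a two-student swap. The intended shape is as follows. There are two schools, $a$ and $b$. School $a$ carries a non-matroid general upper bound of the type in \autoref{ex:motivating}, so that a priority blocker $2$ sits between a top-priority student $1$ and the lower-priority students. School $b$ has capacity one. Two further students, $3$ and $4$, are to be placed by the fair benchmark ``crosswise'': one of them at $a$, the other at $b$, each strictly preferring the other's school. Greedy, by skipping the blocker at $a$, should instead realize the swap.

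\textbf{Step 1: design the market.} I will choose $\cA_a$, $\pi_a$, $\pi_b$ and preferences with three aims:
\begin{enumerate}[label=(\alph*),leftmargin=2.5em]
\item cutoff at $a$ stops at the blocker, so the benchmark seats the crosswise pair;
\item neither swapped student can be added alone, because $b$ is full and $\mu(a)\cup\{i\}\notin\cA_a$;
\item greedy's terminal sets at $a$ and $b$ are forced to be the swapped configuration whatever the proposal order.
\end{enumerate}
The feasibility collection $\cA_a$ should allow $1$ together with at most one of $\{3,4\}$, and allow the blocker $2$ only alone. Students $1$ and $2$ find only $a$ acceptable, so their assignments are pinned down.

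\textbf{Step 2: compute the fair benchmark.} I will not search over fair matchings. Instead I will run cutoff COP-and-Choose, which by \autoref{thm:cutoff-implementation} returns the benchmark under any single convenient order. I then check directly that no improvement step exists: for each student with a preferred school $s$, adding her to $\mu(s)$ is infeasible.

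\textbf{Step 3: compute the greedy outcome under every order.} Here I will use the proposal-order-free reasoning from the proofs of \autoref{thm:robust-boundary}. I identify which proposals are forced and what the terminal proposal sets must be, and then apply greedy to those sets. By \autoref{cor:fair-dominance} the greedy outcome weakly dominates the benchmark. Since it differs from the benchmark, it Pareto dominates it.

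\textbf{Step 4: Pareto efficiency of the greedy outcome.} I will argue this directly. The students $1$, $3$, $4$ receive their top choices. Student $2$ can be placed at $a$ only alone, which would hurt $1$.

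\textbf{Step 5: the final sentence.} Starting from the benchmark no improvement step exists, so every sequence of improvement steps stays at the benchmark. The benchmark is strictly worse than the greedy outcome for some student, so no such sequence reaches or weakly dominates a greedy outcome.

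The main obstacle is Step 1, specifically reconciling fairness with the crosswise placement. If $2$ outranks the student the benchmark seats at $a$, fairness forces $2$ in, which then excludes $1$. The first thing I would try is ranking at $a$ as $1\,\pi_a\,2\,\pi_a\,3\,\pi_a\,4$ while having the benchmark seat $1$ alone at $a$. In that version, $3$ and $4$ compete for $b$ and the loser is left at a worse option, rather than literally sitting at $a$. If that fails, I would add a third school to host the crosswise student, keeping all capacities at one there. I would then recheck (b) and order-independence of the greedy terminal sets.
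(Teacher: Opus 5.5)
The proposition is existential, so the market \emph{is} the proof, and your proposal never produces one. Step~1 is left open, and both designs you sketch fail.

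\emph{First try.} With $1\mathrel{\pi_a}2\mathrel{\pi_a}3\mathrel{\pi_a}4$, cutoff at $a$ stops at $2$ once $1$ and $2$ have proposed, so the benchmark seats $1$ alone at $a$. Since $\{1,3\},\{1,4\}\in\cA_a$, any of $3,4$ who prefers $a$ to her benchmark assignment has an improvement step. If neither does, then by \autoref{cor:fair-dominance} neither ends at $a$ under greedy, so $1$ again holds $a$ alone. Weak dominance together with capacity one at $b$ then forces greedy to return the benchmark.

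\emph{Crosswise version.} This fails for a subtler reason. Suppose the benchmark seats $x$ at $a$ and $y$ at $b$, with $b\succ_x a$ and $a\succ_y b$. No improvement step forces $1$ to sit at $a$ with $x$; otherwise $\{1,x\}\in\cA_a$ gives $1$ one. Fairness forces $y\mathrel{\pi_b}x$. Only $x$ and $y$ ever propose to $b$, so $x$ reaches $a$ only after $y$ has been rejected at $a$ and has displaced her at $b$. Hence cutoff must reject $y$ from a subset of $\{1,2,y\}$, and because $\{1,y\}\in\cA_a$ this requires $2\mathrel{\pi_a}y$. Now take any convention in which $2$ and $y$ propose to $a$ before $1$. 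Greedy selects $\{2\}$ from $\{2,y\}$, since $2$ is feasible only alone. Then $y$ moves to $b$ and displaces $x$, and $x$ moves to $a$. Greedy ends with $\{1,x\}$ at $a$, because the benchmark makes $\{1,x\}$ the top two there, and this is exactly the benchmark. A constraint of the type in \autoref{ex:motivating} makes greedy non-substitutable here: $y$ is rejected from $\{2,y\}$ but chosen from $\{1,2,y\}$. That builds in the order dependence of \autoref{thm:robust-boundary}(i), so the ``every greedy outcome'' clause fails.

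\emph{The missing idea.} You need a blocker at which cutoff stalls but which greedy skips from every set that can arise before the swap, under every order. The paper gets this with a blocker that is infeasible even alone. Take $\I=\{1,2,3\}$, sizes $(1,3,1)$, $q_a=4$, $q_b=1$, preferences $a\succ_1 b$ and $b\succ_i a$ for $i=2,3$, and priorities $3\mathrel{\pi_a}2\mathrel{\pi_a}1$ and $1\mathrel{\pi_b}2\mathrel{\pi_b}3$.
\begin{itemize}
\item Student $2$ cannot fit at $b$, so cutoff stalls at her and rejects $3$. Student $3$ then, together with $2$, crowds $1$ out of $a$.
\item The benchmark $(b,a,a)$ fills both schools, so no improvement step exists.
\item Greedy skips $2$ at $b$, keeps $3$ there, and returns $(a,a,b)$.
\item Both constraints are matroids (all pairs at $a$; $\{1\},\{3\}$ at $b$). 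So greedy is substitutable, and the outcome is order independent by \autoref{prop:substitutability-cop}.
\end{itemize}

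Your four-student template can also be rescued without a loop by letting $2$ coexist with one of $3,4$. Take $\cA_a$ to be ``at most one of $\{1,2\}$ and at most one of $\{3,4\}$'' with $1\mathrel{\pi_a}x\mathrel{\pi_a}2\mathrel{\pi_a}y$, and keep the rest of your design. Greedy then keeps $y$ from every subset of $\{1,2,y\}$, while cutoff still stalls at $2$. Your Steps~2--5 go through once such a market is fixed.
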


\begin{proposition}\label{prop:completion}
Every maximal sequence of improvement steps from an individually rational
matching terminates at an individually rational, non-wasteful matching that
weakly Pareto dominates its starting point. Every cutoff-containing
COP-and-Choose outcome admits both a non-wasteful completion and a
Pareto-efficient weak Pareto improvement; each weakly Pareto dominates every
individually rational and fair matching.
\end{proposition}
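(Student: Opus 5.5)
The statement has three parts: termination with non-wastefulness, weak Pareto dominance of the starting point, and the consequences for cutoff-containing COP-and-Choose outcomes. I will handle the first part with a potential argument and obtain the rest from \autoref{thm:exact-fair-dominance}.

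\emph{Termination and the properties of the endpoint.} Consider an improvement step that moves student $i$ to school $s$. Student $i$ is strictly better off and every other student keeps her assignment. Feasibility is preserved: $s$ receives the feasible set $\mu(s)\cup\{i\}$, and $i$'s former school keeps a subset of a feasible set, which is feasible by downward closure. Individual rationality is preserved because $s\succ_i\mu(i)\succeq_i\emptysetmatch$.

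To bound the number of steps, use the potential $\Phi(\mu)=\sum_{i\in\I}\operatorname{rank}_i(\mu(i))$, where $\operatorname{rank}_i$ is the position of an option in $\succ_i$ over the finite set $\Sset\cup\{\emptysetmatch\}$. Each step strictly decreases $\Phi$, and $\Phi$ takes finitely many values. Hence every sequence of improvement steps is finite, and a maximal sequence ends at a matching that admits no improvement step. By definition, such a matching is non-wasteful.

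Weak Pareto dominance of the starting point follows from transitivity. Each step weakly improves every student, so the endpoint weakly Pareto dominates the starting matching.

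\emph{Completions of COP-and-Choose outcomes.} Let $\mu$ be a COP-and-Choose outcome under a cutoff-containing profile. It is individually rational because students propose only to acceptable schools. Applying the first part to $\mu$ gives a non-wasteful completion $\mu'$ that weakly Pareto dominates $\mu$.

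For the Pareto-efficient improvement, take, among the finitely many matchings that weakly Pareto dominate $\mu$, one that minimizes $\Phi$. Call it $\mu^*$.
\begin{itemize}
\item If some matching Pareto dominated $\mu^*$, it would also weakly dominate $\mu$ by transitivity and would have strictly smaller $\Phi$, contradicting the choice of $\mu^*$. So $\mu^*$ is Pareto efficient.
\item $\mu^*$ is individually rational, because it weakly dominates the individually rational matching $\mu$.
\end{itemize}

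\emph{Dominance over fair matchings.} By \autoref{thm:exact-fair-dominance}\,(i)$\Rightarrow$(iii), the outcome $\mu$ weakly Pareto dominates every individually rational and fair matching. Both $\mu'$ and $\mu^*$ weakly dominate $\mu$, so transitivity of weak Pareto dominance gives the same conclusion for each of them.

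\emph{Main obstacle.} I expect none of substance. The only points needing care are that feasibility at the vacated school relies on downward closure, and that the Pareto-efficient improvement must be chosen by a global minimization over all weak improvements rather than reached through improvement steps, since \autoref{prop:beyond-benchmark-completion} shows that improvement steps alone can get stuck.
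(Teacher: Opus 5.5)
Your proof is correct and follows the paper's argument. Improvement steps preserve feasibility and individual rationality and terminate by finiteness; your rank-sum potential just makes that explicit. The absence of a further step is non-wastefulness, and the fair-matching guarantee comes from \autoref{thm:exact-fair-dominance} together with transitivity of weak Pareto dominance. The Pareto-efficient improvement is a Pareto-maximal element of a finite set of weak improvements. Your $\Phi$-minimizer is simply a concrete way to select such an element, and taking weak improvements of $\mu$ rather than of the completed matching, as the paper does, makes no difference.
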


\begin{proof}[Proof of \autoref{prop:completion}]
Each step preserves feasibility and individual rationality, strictly improves
one student, and changes no other assignment. Finiteness implies termination,
and the absence of a further step is non-wastefulness. The welfare guarantee
follows from \autoref{thm:exact-fair-dominance}. For Pareto efficiency, choose
a Pareto-maximal element of the finite set of feasible matchings that weakly
Pareto dominate the completed matching.
\end{proof}

The local procedure need not reach a Pareto-efficient matching because a
further gain may require coordinated reassignment; see
\citet{ImamuraKawase2024}. By
\autoref{prop:beyond-benchmark-completion}, greedy also cannot generally be
reduced to cutoff followed by unilateral improvements.

\begin{proof}[Proof of \autoref{lem:weak-claim-nesting}]
Fix $s\in\Sset$ and $I\subseteq\I$.

For the first inclusion, let $i\in C_s^c(I)$. Then the priority prefix of $I$
through $i$ is feasible. Hence, for every feasible
$I'\subseteq I\cap G_s(i)$, the set $I'\cup\{i\}$ is a subset of that prefix
and is therefore feasible by downward closure. Thus, $i\in C_s^w(I)$.

For the second inclusion, let $i\in C_s^w(I)$ and set
$X=C_s^g(I\cap G_s(i))$. The set $X$ is feasible and satisfies
$X\subseteq I\cap G_s(i)$. By the definition of weak-claim choice,
$X\cup\{i\}$ is feasible. \autoref{lem:greedy-prefix} therefore gives
$i\in C_s^g(I)$.

Thus,
\[
  C_s^c(I)\subseteq C_s^w(I)\subseteq C_s^g(I).
\]
Since $C_s^g(I)$ is feasible and $\cA_s$ is downward closed,
$C_s^w(I)$ is feasible. Hence $C_s^w$ is a choice rule.
\end{proof}

\begin{proof}[Proof of \autoref{thm:exact-weak-claim}]
Suppose first that (i) holds. Fix an individually rational, weakly
claim-compatible matching $\mu^0$, a school $s\in\Sset$, a set
$I\subseteq D_s(\mu^0)$, and $i\in\mu^0(s)\cap I$. Every feasible
subset of $I\cap G_s(i)$ is a feasible subset of
$H_s^{\mu^0}(i)$. Weak claim compatibility therefore gives
$i\in C_s^w(I)\subseteq C_s(I)$. Hence $C$ protects $\mu^0$, proving (ii).
Statement (ii) and \autoref{thm:protection} imply (iii).

It remains to prove that (iii) implies (i). Suppose (i) fails, so
$i\in C_s^w(I)\setminus C_s(I)$ for some school $s\in\Sset$, set
$I\subseteq\I$, and student $i\in I$. Apply the construction in the proof of
\autoref{thm:exact-fair-dominance} with $P=C_s^w(I)$ in place of $C_s^c(I)$:
every student in $I$ finds only $s$ acceptable, every school is unacceptable
outside $I$, and $\mu^0$ assigns $P$ to $s$. Here $P$ is feasible by
\autoref{lem:weak-claim-nesting}, so $\mu^0$ is an individually rational
matching, and it is weakly claim-compatible: for $j\in P$ the higher-priority
students who weakly prefer $s$ to their assignments are exactly
$I\cap G_s(j)$, and $j\in C_s^w(I)$ with
\eqref{eq:weak-claim-choice} gives $I'\cup\{j\}\in\cA_s$ for every feasible
$I'\subseteq I\cap G_s(j)$. As in that proof, the terminal set of proposers at
$s$ is $I$, so $i$ remains unmatched while $\mu^0(i)=s$, contradicting (iii).
\end{proof}

\begin{example}
\label{ex:iterated-choose}
Let $\I=\{1,2,3\}$ and $\Sset=\{a,b\}$. School $a$ has capacity one, while
\[
  \cA_b=\{\emptyset,\{1\},\{2\},\{3\},\{1,2\}\}.
\]
Priorities and preferences are
\[
  1\mathrel{\pi_a}2\mathrel{\pi_a}3,
  \qquad
  2\mathrel{\pi_b}3\mathrel{\pi_b}1,
\]
\[
  b\succ_1 a\succ_1\emptysetmatch,
  \qquad
  a\succ_i b\succ_i\emptysetmatch
  \quad\text{for }i\in\{2,3\}.
\]

Students $2$ and $3$ first propose to $a$, and student $1$ to $b$. Once
$2$ and $3$ have proposed to $a$, student $3$ is rejected and proposes to
$b$. Greedy choice from $\{1,3\}$ at $b$ selects $3$, causing student $1$
to propose to $a$ and displace $2$, who then proposes to $b$. Thus every
student proposes to both schools under every proposal-order convention.
The terminal offers are $O_a=\{1\}$ and $O_b=\{1,2\}$. Student $1$
chooses $b$, yielding
\[
  \mu=(b,b,\emptysetmatch).
\]
The outcome is wasteful: school $a$ is empty and can feasibly admit student
$2$, who prefers $a$ to $b$.

Now suppose each school permanently deletes declined terminal offers and
re-chooses from its remaining proposers. After student $1$ declines $a$,
school $a$ selects $2$ from $\{2,3\}$. Student $2$ chooses $a$, so $b$
deletes $2$ and selects $3$ from $\{1,3\}$, revoking student $1$'s offer.
The resulting matching is
$(\emptysetmatch,a,b)$, which makes student $1$ worse off than at the
individually rational and fair matching $(a,b,\emptysetmatch)$.
Thus this deletion-and-rechoice completion loses reference protection.
Allowing deleted proposals to return would require a separate termination
and welfare argument. The improvement steps in \autoref{prop:completion}
preserve every student's assignment unless she strictly improves.
\end{example}

\begin{proof}[Proof of \autoref{prop:beyond-benchmark-completion}]
Let $\I=\{1,2,3\}$ and $\Sset=\{a,b\}$. Student sizes are
\[
  (z_1,z_2,z_3)=(1,3,1),
\]
and school capacities are $q_a=4$ and $q_b=1$. Preferences are
\[
  a\succ_1 b\succ_1\emptysetmatch,
  \qquad
  b\succ_i a\succ_i\emptysetmatch
  \quad\text{for }i\in\{2,3\},
\]
and priorities are
\[
  3\mathrel{\pi_a}2\mathrel{\pi_a}1,
  \qquad
  1\mathrel{\pi_b}2\mathrel{\pi_b}3.
\]

Under cutoff choice, students $1,2,3$ initially propose to $a,b,b$,
respectively. Student $2$ cannot fit at $b$, and because she has higher
priority than student $3$, cutoff choice stops before reaching student $3$.
Thus, students $2$ and $3$ eventually propose to $a$. Once all three students
have proposed to $a$, cutoff choice selects $\{2,3\}$ and rejects student $1$,
who then proposes to $b$. The terminal sets of proposers are therefore
$\{1,2,3\}$ at both schools. Cutoff choice selects $\{2,3\}$ at $a$ and
$\{1\}$ at $b$, so \autoref{thm:cutoff-implementation} gives the
student-optimal fair matching
\[
  \mu^0=(b,a,a).
\]

Under greedy choice, student $2$ is skipped at $b$ because she cannot fit even
alone, and therefore eventually proposes to $a$. At $b$, greedy continues
after skipping student $2$ and selects student $3$. Students $1$ and $2$ fit
together at $a$. Hence every proposal-order convention yields
\[
  \mu=(a,a,b).
\]
Students $1$ and $3$ receive their first choices at $\mu$ rather than their
second choices at $\mu^0$, while student $2$ remains at $a$. Thus, $\mu$
strictly Pareto dominates $\mu^0$.

No improvement step is available at $\mu^0$: school $a$ uses all four
units of capacity and school $b$ uses its one unit. Hence neither school
can admit an additional student while keeping its current assignees,
so every sequence of improvement steps starting from $\mu^0$ is trivial.

Finally, $\mu$ is Pareto efficient. Students $1$ and $3$ receive their first
choices, and student $2$ cannot be assigned to her first choice $b$ because
her size exceeds its capacity. Therefore no student can be made strictly
better off without making another student worse off.
\end{proof}

\begin{example}
\label{ex:universal-improvement}
Let $\I=\{1,2,3,4\}$ and $\Sset=\{a,b\}$, with
\[
  \cA_a=\{\emptyset,\{1\},\{2\},\{3\},\{4\},\{1,2\}\},\qquad
  \cA_b=\{\emptyset,\{1\},\{2\},\{3\},\{4\},\{3,4\}\}.
\]
Every singleton is feasible. Equivalently, use two resources with
$r_1=r_2=(1,2)$, $r_3=r_4=(2,1)$, $q_a=(2,4)$, and $q_b=(4,2)$.
Priorities and preferences are
\[
  3\mathrel{\pi_a}1\mathrel{\pi_a}4\mathrel{\pi_a}2,
  \qquad 2\mathrel{\pi_b}3\mathrel{\pi_b}4\mathrel{\pi_b}1,
\]
\[
  a\succ_i b\succ_i\emptysetmatch\quad(i=1,2,4),
  \qquad b\succ_3 a\succ_3\emptysetmatch.
\]
Cutoff COP-and-Choose returns the fair benchmark
$\mu^c=(\emptysetmatch,b,a,\emptysetmatch)$. Its assigned sets $\{3\}$ at
$a$ and $\{2\}$ at $b$ are inclusion-maximal feasible, so no improvement
step of \autoref{prop:completion} is available.

Under greedy, selecting the lowest-indexed active student gives the proposal
sequence $1a,2a,3b,4a,4b$, where $is$ denotes a proposal by $i$ to $s$.
School $a$ selects $\{1,2\}$ from $\{1,2,4\}$, skipping student $4$,
and $b$ selects $\{3,4\}$. The outcome is
$\mu^g=(a,a,b,b)$, which every student strictly prefers to $\mu^c$.
It is Pareto efficient: students $1,2,3$ receive their first choices, and
student $4$ cannot join $\{1,2\}$ at $a$.

Instead, selecting the first active student in the fixed ordering $2,4,3,1$
gives the sequence
\[
  2a,\ 4a,\ 2b,\ 3b,\ 3a,\ 4b,\ 1a,\ 1b,
\]
and greedy returns $\mu^c$. Thus one order improves every student, while
another leaves the fair benchmark unchanged. Since $\mu^c$ admits no
improvement step, reaching $\mu^g$ requires coordinated reassignment,
as in \autoref{prop:beyond-benchmark-completion}.
\end{example}

\begin{remark}\label{rem:order-robust-improvement}
The universal improvement above is order dependent, but it need not be. Let
$\I=\{1,2,3\}$ and $\Sset=\{a,b\}$ with two resources,
$r_1=(2,0)$, $r_2=(1,1)$, $r_3=(0,2)$, $q_a=(1,2)$, and $q_b=(2,2)$, so that
$\cA_a=\{\emptyset,\{2\},\{3\}\}$ and
$\cA_b=\{\emptyset,\{1\},\{2\},\{3\},\{1,3\}\}$, with priorities
$3\mathrel{\pi_a}1\mathrel{\pi_a}2$ and
$2\mathrel{\pi_b}1\mathrel{\pi_b}3$ and preferences
$a\succ_1 b\succ_1\emptysetmatch$, $a\succ_2 b\succ_2\emptysetmatch$, and
$b\succ_3 a\succ_3\emptysetmatch$. Cutoff COP-and-Choose returns the
non-wasteful fair benchmark $\mu^c=(\emptysetmatch,b,a)$, while every
proposal-order convention makes greedy COP-and-Choose return the
Pareto-efficient matching $\mu^g=(b,a,b)$, which every student strictly
prefers: student $1$ is infeasible at $a$ even alone, so cutoff stalls at her
under every order while greedy continues to student $2$. Here $\cA_b$ is not a
matroid, since $\{2\}$ and $\{1,3\}$ violate exchange, and size monotonicity
fails at $b$, since $C_b^g(\{1,3\})=\{1,3\}$ while
$C_b^g(\{1,2,3\})=\{2\}$; the market therefore lies outside the domain of
\autoref{thm:robust-boundary}(ii).
\end{remark}

\subsection{Stability under greedy choice}\label{app:greedy-stability}

\begin{definition}\label{def:greedy-stability}
A matching $\mu$ is \emph{greedy-stable} if it is individually rational and
there is no school $s$ and nonempty set $B\subseteq\I\setminus\mu(s)$ such
that
\[
  s\succ_i\mu(i)\quad\text{for every }i\in B,
  \qquad B\subseteq C_s^g\bigl(\mu(s)\cup B\bigr).
\]
\end{definition}

School individual rationality, $C_s^g(\mu(s))=\mu(s)$, is automatic because
every assigned set is feasible. This stability notion uses the greedy choice
rules to determine which proposed deviations a school accepts.

Following \citet[Remark~1]{KamadaKojima2024}, a matching is \emph{weakly fair}
if no student $i$ has justified envy toward an assignee $i'$ at school $s$
with $(\mu(s)\setminus\{i'\})\cup\{i\}\in\cA_s$. Thus, only envy
that can be satisfied by a feasible replacement counts.

\Needspace{10\baselineskip}
\begin{proposition}\label{prop:greedy-stability}
\begin{enumerate}[label=(\roman*),leftmargin=2.5em]
\item An individually rational matching $\mu$ is greedy-stable if and only if
\begin{equation}\label{eq:greedy-stability-prefix}
  \bigl(\mu(s)\cap G_s(i)\bigr)\cup\{i\}\notin\cA_s
  \quad\text{whenever }s\succ_i\mu(i).
\end{equation}
Every greedy-stable matching is non-wasteful and weakly fair.
\item Every greedy COP-and-Choose outcome with disjoint terminal offers is
greedy-stable. In particular, this holds if greedy choice is substitutable
at every school, and hence if every constraint-priority pair is robustly
strategy-proof.
\end{enumerate}
\end{proposition}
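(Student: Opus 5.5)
For part (i), I will reduce greedy-stability to single-student blocks and then apply \autoref{lem:greedy-prefix}. Suppose a block $(s,B)$ exists, and let $i$ be the highest-priority member of $B$. Since $B\subseteq C_s^g(\mu(s)\cup B)$, the first identity in \autoref{lem:greedy-prefix} gives
\[
C_s^g\bigl(\mu(s)\cup B\bigr)\cap G_s(i)=C_s^g\bigl((\mu(s)\cup B)\cap G_s(i)\bigr)=C_s^g\bigl(\mu(s)\cap G_s(i)\bigr).
\]
The second equality holds because no member of $B$ outranks $i$. Next, $\mu(s)$ is feasible, so $\mu(s)\cap G_s(i)$ is feasible by downward closure, and greedy selects all of it. Hence the set chosen before $i$ is exactly $\mu(s)\cap G_s(i)$. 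Since $i$ is selected, the second clause of \autoref{lem:greedy-prefix} makes $(\mu(s)\cap G_s(i))\cup\{i\}$ feasible, which violates \eqref{eq:greedy-stability-prefix}. Conversely, if that prefix condition fails for some $i$ and $s$, take $B=\{i\}$. The same lemma applied to $\mu(s)\cup\{i\}$ shows that $i$ is selected, so $(s,\{i\})$ blocks.

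The remaining claims of (i) follow from downward closure. For non-wastefulness, if $\mu(s)\cup\{i\}$ were feasible with $s\succ_i\mu(i)$, then its subset $(\mu(s)\cap G_s(i))\cup\{i\}$ would be feasible. For weak fairness, suppose $i$ has justified envy toward $i'\in\mu(s)$, so $i\mathrel{\pi_s}i'$. Then $i'\notin G_s(i)$, which gives $(\mu(s)\cap G_s(i))\cup\{i\}\subseteq(\mu(s)\setminus\{i'\})\cup\{i\}$. Condition \eqref{eq:greedy-stability-prefix} makes the smaller set infeasible, so the larger one is infeasible too.

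For part (ii), let the terminal offers be disjoint. Then each student holding an offer takes it, and $\mu(s)=O_s=C_s^g(T_s)$. Fix $s\succ_i\mu(i)$. Because students propose in preference order and stop only when selected or exhausted, $i\in T_s$. Disjointness and $\mu(i)\ne s$ give $i\notin C_s^g(T_s)$. By \autoref{lem:greedy-prefix},
\[
C_s^g\bigl(T_s\cap G_s(i)\bigr)\cup\{i\}\notin\cA_s,
\qquad
C_s^g\bigl(T_s\cap G_s(i)\bigr)=C_s^g(T_s)\cap G_s(i)=\mu(s)\cap G_s(i).
\]
Together these are exactly \eqref{eq:greedy-stability-prefix}. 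The outcome is individually rational by construction, so it is greedy-stable.

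The "in particular" clauses cite earlier results. Substitutability at every school yields disjoint terminal offers by \autoref{prop:substitutability-cop}. Robust strategy-proofness of every pair implies substitutability by \autoref{thm:robust-boundary}(ii).

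The main obstacle is the first direction of (i), because the deviating set $B$ may contain several students. Choosing the top-priority member of $B$ and applying the prefix lemma handles this. The one point to check is that greedy selects all of the feasible set $\mu(s)\cap G_s(i)$ even when other members of $B$ rank below $i$, and this holds because every student greedy considers before $i$ lies in $\mu(s)\cap G_s(i)$.
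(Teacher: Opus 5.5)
Your proof is correct. Part (i) follows the paper's argument. The paper also takes the highest-priority member $i$ of a blocking set $B$ and notes that the students scanned before $i$ in $\mu(s)\cup B$ are exactly $\mu(s)\cap G_s(i)$, all selected because they lie in the feasible set $\mu(s)$. It then derives non-wastefulness and weak fairness by downward closure. You make the scan argument formal through \autoref{lem:greedy-prefix}, but the idea is identical.

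Part (ii) takes a different route. The paper does not go through (i). It notes that every student who prefers $s$ to her assignment has proposed there. So for any candidate blocking set $B$, $C_s^g(T_s)=\mu(s)\subseteq\mu(s)\cup B\subseteq T_s$. Greedy consistency (\autoref{lem:greedy-consistency}) then gives $C_s^g(\mu(s)\cup B)=\mu(s)$, which rules out every nonempty $B$ at once.

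You instead check the single-student condition \eqref{eq:greedy-stability-prefix} directly from the terminal scan. You use disjointness to get $i\notin C_s^g(T_s)$ and \autoref{lem:greedy-prefix} to identify $C_s^g(T_s\cap G_s(i))$ with $\mu(s)\cap G_s(i)$, then invoke the characterization in (i).

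The paper's version is self-contained and yields slightly more: the school chooses exactly $\mu(s)$ from every set between $\mu(s)$ and $T_s$. Yours makes (ii) a corollary of (i), certifying greedy stability student by student from the terminal priority scan. It also shows directly why only the higher-priority terminal assignees matter.
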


\begin{proof}[Proof of \autoref{prop:greedy-stability}]
\emph{(i).} In the greedy scan of $\mu(s)\cup\{i\}$, every student
ahead of $i$ is selected, because these students form a subset of the feasible
set $\mu(s)$. Thus $i$ is selected exactly when the set in
\eqref{eq:greedy-stability-prefix} is feasible. Greedy stability therefore
implies that condition. Conversely, if a nonempty set $B$ blocks at $s$, let
$i$ be its highest-priority member. In the scan of $\mu(s)\cup B$, the
students ahead of $i$ are exactly $\mu(s)\cap G_s(i)$ and all are selected.
Since $i$ is also selected, the set in \eqref{eq:greedy-stability-prefix} is
feasible, a contradiction.

If $\mu(s)\cup\{i\}$ is feasible, so is its subset
$(\mu(s)\cap G_s(i))\cup\{i\}$. The same holds if
$(\mu(s)\setminus\{i'\})\cup\{i\}$ is feasible and $i\mathrel{\pi_s}i'$.
Condition \eqref{eq:greedy-stability-prefix} therefore implies both
non-wastefulness and weak fairness.

\emph{(ii).} Fix a preference profile and a proposal-order convention, and let
$T_s$ be school $s$'s terminal proposal set. Disjoint offers give
$\mu(s)=C_s^g(T_s)$. The matching is individually rational because students
propose only to acceptable schools.

Every student who strictly prefers $s$ to her assignment has proposed to $s$:
proposals follow preference order, and an unmatched student exhausts her
acceptable schools. For any set $B$ of such students,
\[
  C_s^g(T_s)=\mu(s)\subseteq\mu(s)\cup B\subseteq T_s.
\]
Greedy consistency (\autoref{lem:greedy-consistency}) therefore gives
$C_s^g(\mu(s)\cup B)=\mu(s)$, ruling out every nonempty blocking set
$B\subseteq\I\setminus\mu(s)$. Substitutability at every school makes
terminal offers disjoint by \autoref{prop:substitutability-cop}, and
\autoref{thm:robust-boundary}(ii) implies substitutability when every pair
is robustly strategy-proof.
\end{proof}

Condition \eqref{eq:greedy-stability-prefix} says that every student who
prefers a school is infeasible alongside its higher-priority assignees.
Size monotonicity is unnecessary for this conclusion. The next example
separates greedy stability from fairness.

\begin{example}\label{ex:kk-stability}
Consider the market in \citet[Example~3]{KamadaKojima2024} (hereafter KK).
There are four students and two schools, with preferences and priorities
\[
  s_2\succ_1 s_1\succ_1\emptysetmatch,
  \qquad s_1\succ_i s_2\succ_i\emptysetmatch\quad(i=2,3,4),
\]
\[
  1\mathrel{\pi_{s_1}}2\mathrel{\pi_{s_1}}3\mathrel{\pi_{s_1}}4,
  \qquad
  2\mathrel{\pi_{s_2}}1\mathrel{\pi_{s_2}}3\mathrel{\pi_{s_2}}4,
\]
and feasibility collections
\[
\begin{split}
  \cA_{s_1}&=\{\emptyset,\{1\},\{2\},\{3\},\{4\},\{2,4\}\},\\
  \cA_{s_2}&=\{\emptyset,\{1\},\{2\},\{3\},\{4\}\}.
\end{split}
\]
In KK's terminology, a stable matching is feasible, individually rational,
fair, and non-wasteful. The relevant matchings are
\begin{center}
\begin{tabular}{lccc}
\toprule
Matching & $s_1$ & $s_2$ & Unmatched\\
\midrule
$\mu$: KK student-optimal stable & $\{1\}$ & $\{2\}$ & $\{3,4\}$\\
$\mu'$: cutoff COP-and-Choose & $\{2\}$ & $\{1\}$ & $\{3,4\}$\\
$\mu''$: greedy COP-and-Choose & $\{2,4\}$ & $\{1\}$ & $\{3\}$\\
\bottomrule
\end{tabular}
\end{center}

Both COP outcomes are independent of proposal order in this market.
Students 1 and 2 retain their first choices: neither can be displaced before
the other leaves her first choice, so neither is the first to do so.
Students 3 and 4 eventually propose to $s_1$, which chooses from
$\{2,3,4\}$. Cutoff stops at student 3 and selects only student 2; greedy
skips student 3 and also selects student 4. School $s_2$ retains student 1.
Thus, $\mu'$ Pareto dominates $\mu$ by improving students 1 and 2, and
$\mu''$ Pareto dominates $\mu'$ by improving student 4.
Student optimality of $\mu$ is restricted to KK-stable matchings.
The cutoff outcome $\mu'$ is wasteful because student $4$ can join
student $2$ at $s_1$.

At $\mu''$, only student 3 wants to move. Greedy chooses $\{2,4\}$ from
$\{2,3,4\}$ at $s_1$ and $\{1\}$ from $\{1,3\}$ at $s_2$, so there is
no blocking set. Nevertheless, student 3 has justified envy toward student 4
at $s_1$ under KK's fairness criterion, which does not require the replacement
to be feasible. Hence $\mu''$ is greedy-stable but is not fair.

This market is outside the robust strategy-proofness domain:
\[
  C_{s_1}^g(\{3,4\})=\{3\},
  \qquad C_{s_1}^g(\{2,3,4\})=\{2,4\},
\]
which violates substitutability. Terminal offers are nevertheless disjoint,
and \autoref{prop:greedy-stability}(ii) therefore applies.
\end{example}

For a greedy COP-and-Choose outcome, a weak-fairness violation requires a
higher-priority terminal offer holder to decline the school. To see this,
suppose $i$ has feasible justified envy toward $i'$ at $s$, and let
$O_s=C_s^g(T_s)$. Student $i$ proposed to $s$ but has no terminal offer there.
Put $H=O_s\cap G_s(i)$. Since greedy skipped $i$, $H\cup\{i\}$ is
infeasible. If $H\subseteq\mu(s)$, then
$H\cup\{i\}\subseteq(\mu(s)\setminus\{i'\})\cup\{i\}$, contradicting
feasibility of the replacement. Some $h\in H\setminus\mu(s)$ therefore
declined $s$ in favor of another terminal offer. Overlap is necessary for
such a violation, but need not produce one.

\begin{example}\label{ex:weak-fairness}
Let $\I=\{1,2,3\}$ and $\Sset=\{a,b\}$, with
\[
\begin{split}
  \cA_a&=\{\emptyset,\{1\},\{2\},\{3\},\{1,3\}\},\\
  \cA_b&=\{\emptyset,\{1\},\{2\},\{3\},\{1,2\}\}.
\end{split}
\]
These constraints can be represented by school-independent resource
requirements $r_1=(1,1)$, $r_2=(2,1)$, $r_3=(1,2)$ and capacities
$q_a=(2,3)$, $q_b=(3,2)$. Priorities and preferences are
\[
  1\mathrel{\pi_a}2\mathrel{\pi_a}3,
  \qquad 2\mathrel{\pi_b}3\mathrel{\pi_b}1,
\]
\[
  b\succ_1 a\succ_1\emptysetmatch,
  \qquad a\succ_i b\succ_i\emptysetmatch\quad(i=2,3).
\]
Under every proposal order, students 2 and 3 propose first to $a$, which
chooses $\{2\}$ from $\{2,3\}$. Student 3 then proposes to $b$, where
she displaces student 1 or prevents her selection when she arrives. Student 1
therefore proposes to $a$, whose choice becomes $\{1,3\}$, displacing
student 2. Student 2 proposes to $b$, whose choice becomes $\{1,2\}$.
Both terminal proposal sets are $\{1,2,3\}$, and student 1 chooses $b$.
Thus every greedy outcome is $\mu^g=(b,b,a)$, whereas cutoff yields
$\mu^c=(a,b,\emptysetmatch)$. Greedy Pareto dominates the fair benchmark.

Student 2 prefers $a$ and outranks its assignee 3, whom she can feasibly
replace. Hence $\mu^g$ is not weakly fair. The violation arises when
student 1 declines her offer from $a$. Yet $\mu^g$ is non-wasteful:
only student 2 wants to move, and $\{2,3\}\notin\cA_a$. It is also
Pareto efficient, since students 1 and 3 receive their first choices and
student 2 cannot join student 3 at $a$. Neither unilateral non-wasteful
completion nor any Pareto improvement can repair this violation.
\end{example}

\Needspace{6\baselineskip}
\subsection{Common priorities and efficiency}\label{app:common-priority-efficiency}

\begin{proposition}\label{prop:common-priority-efficiency}
Suppose every school uses the same strict priority $\pi$ and each greedy
choice rule $C_s^g=C^\pi_{\cA_s}$ is substitutable. At every preference
profile and under every proposal-order convention, greedy COP-and-Choose
coincides with serial dictatorship in order $\pi$ and is Pareto efficient.
\end{proposition}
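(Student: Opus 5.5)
The plan has two parts: first show that the greedy COP-and-Choose outcome equals the serial dictatorship matching $\sigma$, then show that $\sigma$ is Pareto efficient. Since every greedy rule is substitutable, \autoref{prop:substitutability-cop} makes the final choice step redundant and the outcome independent of the proposal-order convention. It therefore suffices to identify the outcome as $\sigma$.

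Define $\sigma$ by the usual sequence. List students in $\pi$-order as $i_1,\ldots,i_n$. Student $i_k$ receives her most-preferred acceptable school $s$ such that $\sigma(s)\cup\{i_k\}\in\cA_s$, where $\sigma(s)$ is the set already assigned to $s$; if there is none, she is unmatched. This is a matching by construction.

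To show that COP-and-Choose returns $\sigma$, I would argue by strong induction on priority. Suppose every student above $i_k$ ends at her $\sigma$-assignment. The key fact is \autoref{lem:greedy-prefix}: for any proposal set $T$ at $s$, whether $i_k$ is selected depends only on $C_s^g(T\cap G(i_k))$. This set is computed from higher-priority students alone, since $G(i)$ is now common across schools.

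The difficulty is that terminal proposal sets at $s$ may contain higher-priority students who are not assigned to $s$, because they proposed and were rejected. With substitutable choice and disjoint offers, the final assignment at $s$ is exactly $C_s^g(T_s)$. Hence $C_s^g(T_s)\cap G(i_k)=\sigma(s)\cap G(i_k)$ by the induction hypothesis applied to each higher-priority student's final assignment.

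By \autoref{lem:greedy-prefix}, $i_k$ receives a terminal offer from $s$ if and only if $i_k\in T_s$ and $(\sigma(s)\cap G(i_k))\cup\{i_k\}\in\cA_s$. The second condition is exactly the condition under which serial dictatorship allows her at $s$. Since $i_k$ proposes down her list and stops only when she holds an offer, and offers are never revived after rejection, her final school is her first acceptable school satisfying this condition, which is $\sigma(i_k)$. One subtlety remains: the terminal-offer characterization refers to the terminal sets, while proposals happen dynamically. I would handle this by noting that rejection is permanent under substitutability, and that the test set $C_s^g(T\cap G(i_k))$ only weakly shrinks as $T$ grows? That is not quite true, so instead I would rely only on terminal quantities. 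Student $i_k$ ends matched to the best school among those she proposed to that offer her a terminal seat. She has proposed to every school above her final assignment, and all of those reject her at termination. So her final school is her best school whose terminal prefix test passes, and that equals $\sigma(i_k)$. This induction step, and in particular equating terminal choices restricted to $G(i_k)$ with $\sigma(s)\cap G(i_k)$, is the main obstacle.

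For Pareto efficiency, suppose $\nu$ weakly Pareto dominates $\sigma$ and differs from it. Take the highest-priority student $i_k$ with $\nu(i_k)\neq\sigma(i_k)$. Then $\nu(i_k)=s\succ_{i_k}\sigma(i_k)$, and every higher-priority student has the same assignment in both matchings. So $(\sigma(s)\cap G(i_k))\cup\{i_k\}\subseteq\nu(s)$ is feasible by downward closure. Serial dictatorship would then have given $i_k$ the school $s$ or something better, which is a contradiction.
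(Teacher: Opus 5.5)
Your proof is correct. Your Pareto-efficiency argument is the paper's: take the highest-priority student whose assignment differs and apply downward closure. For the identification with serial dictatorship, however, you take a different route.

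The paper uses the order-independence clause of \autoref{prop:substitutability-cop} to fix the convention in which the highest-priority active student always proposes next. It then follows the process dynamically. Once $i_1,\ldots,i_{k-1}$ have finished, later lower-priority proposals cannot change greedy's decisions about them, so $i_k$ is selected at $s$ exactly when $H_s\cup\{i_k\}\in\cA_s$.

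You instead fix an arbitrary run and reason only about its terminal proposal sets. Disjointness gives $\mu(s)=C_s^g(T_s)$, and the induction hypothesis gives $C_s^g(T_s)\cap G_s(i_k)=\sigma(s)\cap G_s(i_k)$. \autoref{lem:greedy-prefix} then turns $i_k\in C_s^g(T_s)$ into the serial-dictatorship test. The schools $i_k$ proposed to form a top segment of her list, and she holds at most one terminal offer. Hence she ends at her best school passing that test, or unmatched after exhausting her list when none passes.

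This argument is complete, so the step you flagged as the main obstacle is already closed by your own terminal reasoning. The tentative claim that $C_s^g(T\cap G_s(i_k))$ weakly shrinks as $T$ grows is indeed false, since a new higher-priority proposer can enter it. You never need that claim, and the terminal version does not use permanence of rejections either.

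Your version needs only disjointness of terminal offers and handles every convention at once. It delivers order independence as a corollary rather than an input, and it avoids verifying claims about a particular proposal sequence. The paper's version is shorter and shows more directly that the process literally runs as serial dictatorship under the priority-first convention.
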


\begin{proof}[Proof of \autoref{prop:common-priority-efficiency}]
By \autoref{prop:substitutability-cop}, the outcome is proposal-order
independent. Use the convention that always selects the highest-priority
active student, and write the common order as
$i_1\mathrel{\pi}\cdots\mathrel{\pi}i_n$.

Suppose $i_1,\ldots,i_{k-1}$ have completed their proposals. Their selections
remain unchanged because adding lower-priority proposers cannot change
greedy's decisions about higher-priority students. Let $H_s$ be the set of
these earlier students assigned to $s$. Earlier proposals rejected by $s$
are still skipped. Thus, when $i_k$ proposes to $s$, she is selected exactly
when $H_s\cup\{i_k\}\in\cA_s$. Until selected, she remains the
highest-priority active student and proposes in preference order. She obtains
her favorite acceptable school that can accommodate her with its earlier
assignees, or remains unmatched if none exists. Later proposals do not alter
her selection. This is serial dictatorship.

Let $\mu$ be its outcome and suppose a feasible matching $\nu$ Pareto
dominates $\mu$. Let $i_k$ be the highest-priority student whose assignment
differs. All earlier students keep their assignments, while strict
preferences imply $\nu(i_k)\succ_{i_k}\mu(i_k)$. Since $\mu$ is
individually rational, $s=\nu(i_k)$ is an acceptable school. Its earlier
assignees satisfy $H_s\cup\{i_k\}\subseteq\nu(s)$. Feasibility of $\nu$
and downward closure imply $H_s\cup\{i_k\}\in\cA_s$, so serial
dictatorship could not assign $i_k$ to the worse alternative $\mu(i_k)$.
\end{proof}

\begin{proof}[Proof of \autoref{cor:ascending-efficiency}]
\autoref{prop:ascending} gives substitutability at every school, so
\autoref{prop:common-priority-efficiency} applies to greedy COP-and-Choose.
The proof of \autoref{prop:ascending} also shows that greedy and cutoff
choice coincide on every available set under its assumptions. Hence their
COP-and-Choose outcomes coincide under every proposal-order convention.
\end{proof}

\section{Additional Incentive Results}\label{app:incentives-additional}

\subsection{Deletion and contraction at a fixed priority}
\label{app:fixed-priority-proof}

Fix a nonempty $I\subseteq\I$, a feasibility collection $\cA$ on $I$, and a
strict priority $\pi$ on $I$. Let $i$ be the highest-priority student and
$\pi_{-i}:=\pi|_{I\setminus\{i\}}$.

\begin{definition}[Deletion and contraction]\label{def:minors}
\[
  \cA\setminus i:=\{I'\in\cA:i\notin I'\},
  \qquad
  \cA/i:=\{I'\subseteq I\setminus\{i\}:I'\cup\{i\}\in\cA\}.
\]
Write $C_{\setminus i}:=C^{\pi_{-i}}_{\cA\setminus i}$ and, when
$\{i\}\in\cA$, $C_{/i}:=C^{\pi_{-i}}_{\cA/i}$.
\end{definition}

\begin{proposition}\label{prop:fixed-priority}
If $\{i\}\notin\cA$, then $C^\pi_{\cA}$ is substitutable if and only if
$C_{\setminus i}$ is substitutable, and $C^\pi_{\cA}$ is size monotone if
and only if $C_{\setminus i}$ is size monotone.

If $\{i\}\in\cA$, then:
\begin{enumerate}[label=(\roman*),leftmargin=2.5em]
\item $C^\pi_{\cA}$ is substitutable if and only if $C_{\setminus i}$ and
      $C_{/i}$ are substitutable and
      $C_{/i}(I')\subseteq C_{\setminus i}(I')$ for every
      $I'\subseteq I\setminus\{i\}$;
\item $C^\pi_{\cA}$ is size monotone if and only if
      $C_{\setminus i}$ and $C_{/i}$ are size monotone and
      $|C_{\setminus i}(I')|\leq 1+|C_{/i}(I')|$ for every
      $I'\subseteq I\setminus\{i\}$;
\item $C^\pi_{\cA}$ is substitutable and size monotone if and only if
      $C_{\setminus i}$ and $C_{/i}$ satisfy both properties,
      $C_{/i}(I')\subseteq C_{\setminus i}(I')$, and
      $|C_{\setminus i}(I')\setminus C_{/i}(I')|\leq 1$ for every
      $I'\subseteq I\setminus\{i\}$.
\end{enumerate}
Recursion over deletion and contraction minors is finite because each step
reduces the ground set by one student.
\end{proposition}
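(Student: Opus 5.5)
The plan is to describe $C:=C^\pi_{\cA}$ explicitly in terms of $C_{\setminus i}$ and $C_{/i}$, and then read off each equivalence. Since $i$ has top priority, greedy considers $i$ first whenever $i$ is available. If $\{i\}\notin\cA$, greedy skips $i$ and then scans $J\setminus\{i\}$ with an empty accumulated set. The sets that greedy can build afterwards are exactly those in $\cA\setminus i=\cA$, because no feasible set contains $i$. Hence
\[
C(J)=C_{\setminus i}(J\setminus\{i\})\qquad\text{for all } J\subseteq I.
\]
I will check that both substitutability and size monotonicity transfer in both directions under this identity. For the forward direction, restrict to sets not containing $i$. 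For the converse, note that $J\mapsto J\setminus\{i\}$ preserves inclusion, and $i$ never appears in any choice.

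When $\{i\}\in\cA$, the same scan gives
\[
C(J)=C_{\setminus i}(J)\ \text{ if } i\notin J,\qquad C(J)=\{i\}\cup C_{/i}(J\setminus\{i\})\ \text{ if } i\in J,
\]
because after $i$ is accepted, greedy adds $x$ exactly when the accumulated set plus $x$, together with $i$, is feasible.

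For (i), I test substitutability on pairs $J\subseteq J'$ in three cases.
\begin{enumerate}[label=(\alph*),leftmargin=2.5em]
\item If neither set contains $i$, the condition is substitutability of $C_{\setminus i}$.
\item If both contain $i$, it is substitutability of $C_{/i}$.
\item If $i\notin J$ and $i\in J'$, write $J'=J''\cup\{i\}$ with $J\subseteq J''$. The condition becomes $C_{/i}(J'')\cap J\subseteq C_{\setminus i}(J)$.
\end{enumerate}
Taking $J''=J$ in case (c) yields the containment condition $C_{/i}(I')\subseteq C_{\setminus i}(I')$, which gives necessity. For sufficiency, chain the inclusions
\[
C_{/i}(J'')\cap J\subseteq C_{/i}(J)\subseteq C_{\setminus i}(J),
\]
using substitutability of $C_{/i}$ and then containment.

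Part (ii) uses the same three cases, with the count $|C(J)|=1+|C_{/i}(J\setminus\{i\})|$ whenever $i\in J$. The mixed case then requires $|C_{\setminus i}(J)|\le 1+|C_{/i}(J'')|$ for all $J\subseteq J''$. Taking $J''=J$ gives necessity. Sufficiency follows from size monotonicity of $C_{/i}$, since $|C_{/i}(J)|\le|C_{/i}(J'')|$.

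For (iii), I combine (i) and (ii). Under the containment $C_{/i}(I')\subseteq C_{\setminus i}(I')$, the inequality $|C_{\setminus i}(I')|\le 1+|C_{/i}(I')|$ is equivalent to $|C_{\setminus i}(I')\setminus C_{/i}(I')|\le1$, because the difference then has cardinality $|C_{\setminus i}(I')|-|C_{/i}(I')|$. The final remark about recursion is immediate, since both minors live on $I\setminus\{i\}$.

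The main obstacle is the explicit formula for $C$ when $i\in J$. It requires checking that greedy on $\cA$ after accepting $i$ makes exactly the same decisions as greedy on $\cA/i$ with the inherited priority $\pi_{-i}$. This holds because $\cA/i$ is downward closed and the feasibility test at each step matches. Once that identity is in place, the rest is the three-way case split described above.
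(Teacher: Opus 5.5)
Your proof is correct. It differs from the paper's mainly in part (i).

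The paper first converts substitutability into path independence, using greedy consistency (\autoref{lem:greedy-consistency}) and the Aizerman--Malishevski equivalence (\autoref{lem:aizerman}). It then splits the identity $C(J\cup J')=C(C(J)\cup J')$ into the same three cases. The mixed case becomes $C_{/i}(X\cup Y)=C_{/i}(C_{\setminus i}(X)\cup Y)$. Necessity of the containment comes from setting $Y=\emptyset$. Sufficiency needs consistency of $C_{/i}$ to replace $C_{\setminus i}(X)$ by $C_{/i}(X)$ before applying path independence twice.

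You instead test the substitutability inclusion directly on nested pairs $J\subseteq J'$. Your mixed case reduces to $C_{/i}(J'')\cap J\subseteq C_{\setminus i}(J)$, and sufficiency is a one-line chain through substitutability of $C_{/i}$. Your route avoids both auxiliary lemmas and never uses consistency of the minors. The paper's route, in exchange, states the criterion in the path-independence form in which the matroid characterization behind \autoref{lem:matroid-known} is phrased.

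In (ii) the two sufficiency chains differ only trivially. The paper passes through $|C_{\setminus i}(J'\setminus\{i\})|$ using monotonicity of $C_{\setminus i}$. You apply the inequality at $J$ and then monotonicity of $C_{/i}$. Your treatment of the loop case and of (iii) matches the paper's.
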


\begin{proof}[Proof of \autoref{prop:fixed-priority}]
Every rule in the proposition is greedy and therefore consistent by
\autoref{lem:greedy-consistency}. By \autoref{lem:aizerman}, substitutability
is equivalent here to path independence,
$C(J\cup J')=C(C(J)\cup J')$. The greedy algorithm gives
\[
  C^\pi_{\cA}(J)=
  \begin{cases}
    C_{\setminus i}(J),&i\notin J,\\
    \{i\}\cup C_{/i}(J\setminus\{i\}),&i\in J,
  \end{cases}
  \qquad\text{if }\{i\}\in\cA,
\]
and $C^\pi_{\cA}(J)=C_{\setminus i}(J\setminus\{i\})$ otherwise.

Suppose first that $\{i\}\notin\cA$. Then
\[
  C^\pi_{\cA}(J)=C_{\setminus i}(J\setminus\{i\})
  \qquad\text{for every }J\subseteq I.
\]
Thus, adding or removing the loop $i$ affects neither path independence nor
size monotonicity. Assume henceforth that $\{i\}\in\cA$ and write
$C=C^\pi_{\cA}$.

\emph{(i).} Consider path independence,
$C(J\cup J')=C(C(J)\cup J')$. There are three cases.

If $i\in J$, then $i$ is selected on both sides. After removing $i$ and using
the displayed decomposition, the identity is exactly path independence of
$C_{/i}$.
If $i\notin J\cup J'$, the identity is exactly path independence of
$C_{\setminus i}$.

It remains to consider the case $i\notin J$ and $i\in J'$. Write
$X=J$ and $Y=J'\setminus\{i\}$. By the displayed decomposition, path independence
in this case is equivalent to
\begin{equation}\label{eq:cross}
  C_{/i}(X\cup Y)
  =
  C_{/i}\bigl(C_{\setminus i}(X)\cup Y\bigr)
  \qquad
  \text{for all }X,Y\subseteq I\setminus\{i\}.
\end{equation}

Suppose first that \eqref{eq:cross} holds. Setting $Y=\emptyset$ gives
\[
  C_{/i}(X)
  =
  C_{/i}\bigl(C_{\setminus i}(X)\bigr)
  \subseteq C_{\setminus i}(X).
\]
Thus, the containment condition is necessary.

Conversely, suppose $C_{/i}$ is path independent and
$C_{/i}(X)\subseteq C_{\setminus i}(X)$ for every
$X\subseteq I\setminus\{i\}$. Greedy choice is consistent by
\autoref{lem:greedy-consistency}, so
\[
  C_{/i}\bigl(C_{\setminus i}(X)\bigr)=C_{/i}(X).
\]
Hence, for every $X,Y\subseteq I\setminus\{i\}$,
\[
  C_{/i}(X\cup Y)
  =C_{/i}\bigl(C_{/i}(X)\cup Y\bigr)
  =C_{/i}\bigl(C_{/i}(C_{\setminus i}(X))\cup Y\bigr)
  =C_{/i}\bigl(C_{\setminus i}(X)\cup Y\bigr),
\]
where the first and last equalities use path independence of $C_{/i}$.
Thus, \eqref{eq:cross} holds, proving (i).

\emph{(ii).} Suppose first that $C$ is size monotone. Restricting attention to
sets that exclude $i$ gives size monotonicity of $C_{\setminus i}$.
Restricting attention to sets that contain $i$ and using the displayed
decomposition gives size monotonicity of $C_{/i}$. Finally, for every
$X\subseteq I\setminus\{i\}$, size monotonicity applied to
$X\subseteq X\cup\{i\}$ gives
\[
  |C_{\setminus i}(X)|
  \leq
  1+|C_{/i}(X)|.
\]

Conversely, suppose $C_{\setminus i}$ and $C_{/i}$ are size monotone and the
displayed inequality holds for every $X\subseteq I\setminus\{i\}$. Fix
$J\subseteq J'\subseteq I$. If $i\notin J'$, size monotonicity follows from
that of $C_{\setminus i}$. If $i\in J$, it follows from that of $C_{/i}$.
In the remaining case, $i\notin J$ and $i\in J'$. Let
$K=J'\setminus\{i\}$. Then
\[
  |C(J)|
  =|C_{\setminus i}(J)|
  \leq |C_{\setminus i}(K)|
  \leq 1+|C_{/i}(K)|
  =|C(J')|.
\]
Thus, $C$ is size monotone, proving (ii).

\emph{(iii).} Combine (i) and (ii). Under
$C_{/i}(X)\subseteq C_{\setminus i}(X)$,
\[
  |C_{\setminus i}(X)|
  \leq 1+|C_{/i}(X)|
\]
is equivalent to
\[
  |C_{\setminus i}(X)\setminus C_{/i}(X)|\leq 1.
\]
This proves (iii).

Applying the same decomposition recursively to each deletion and contraction
minor, always at its highest-priority student, yields the finite recursive
characterization. Each step reduces the ground set by one student.
\end{proof}

\subsection{A cutoff manipulation on a matroid}

\begin{example}
\label{ex:cutoff-manipulation}
Let $\I=\{1,2,3\}$ and $\Sset=\{a,b\}$. School $a$ has capacity one, while
\[
  \cA_b=\{\emptyset,\{2\},\{3\}\}.
\]
Thus, $\cA_b$ is a rank-one matroid, so every nonempty feasible set is a
singleton, and student $1$ is a loop because $\{1\}\notin\cA_b$. Priorities
are
\[
  2\mathrel{\pi_a}1\mathrel{\pi_a}3,
  \qquad
  1\mathrel{\pi_b}2\mathrel{\pi_b}3,
\]
and preferences are
\[
  b\succ_1 a\succ_1\emptysetmatch,
  \qquad
  b\succ_2 a\succ_2\emptysetmatch,
  \qquad
  a\succ_3\emptysetmatch.
\]

Under truthful reporting, students $1$ and $2$ eventually propose to $b$.
Because student $1$ has highest priority at $b$ but is infeasible there,
cutoff choice stops at student $1$ and selects neither student. Students $1$
and $2$ therefore reach $a$, where student $2$ has higher priority. The
outcome assigns student $2$ to $a$ and leaves students $1$ and $3$ unmatched.

If student $1$ instead reports that only $a$ is acceptable, student $2$ is
selected by $b$, while $a$ selects student $1$ over student $3$. Student $1$
therefore obtains $a$ rather than remaining unmatched, so cutoff
COP-and-Choose is manipulable.

Under truthful reporting, greedy COP-and-Choose produces the latter matching:
at $b$, greedy skips the infeasible student $1$ and continues to student $2$.
\end{example}

\clearpage
\section{Additional HIAS Analyses}\label{app:hias-additional}

\autoref{fig:hias-dominance} presents the beneficiary shares in
\autoref{tab:hias-by-type} with Monte Carlo confidence intervals.
\autoref{fig:hias-outcomes} summarizes the calibrated outcomes behind
\autoref{tab:hias-main}, comparing cutoff COP-and-Choose, KDA, and greedy
COP-and-Choose under one and three resources. The subsections that follow
report the supporting incidence, severity, and robustness checks.

\textbf{Data and code availability.}
The calibration uses the public data and code archive of \citet{DKT2023},
available from the AEA Data and Code Repository:
\begin{center}
  \url{https://doi.org/10.3886/E191062V1}
\end{center}
The code and
derived results needed to reproduce the tables and figures in
\autoref{sec:hias-simulations}, including the strong-envy and completion
incentive audits, accompany this submission as supplementary material.

\begin{figure}[H]
  \centering
  \tagstructbegin{tag=Figure,alttext={Bar charts report the share of families
  strictly better off under greedy COP-and-Choose relative to KDA.
  Separate panels show one-dimensional and three-dimensional constraints under
  four simulated preference specifications.}}
  \tagmcbegin{tag=Figure}
  \includegraphics[width=\textwidth,height=.68\textheight,keepaspectratio]{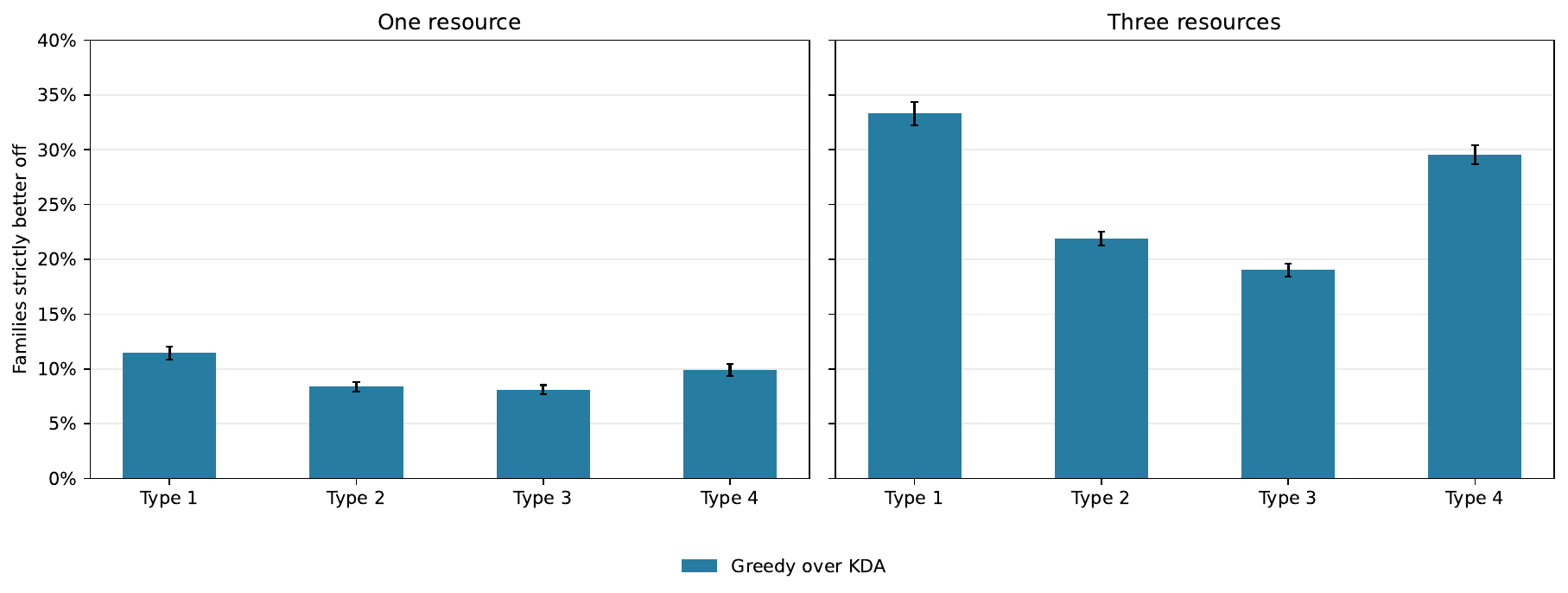}
  \tagmcend
  \tagstructend
  \caption{Pareto gains of greedy COP-and-Choose relative to KDA. Each bar is
  the share of families receiving a strictly preferred locality under greedy
  according to their simulated preferences; the reverse share is zero in every
  comparison shown. Left: one resource; right: three resources.
  Error bars are 95-percent Monte Carlo confidence intervals
  across the 100 simulated preference draws.}
  \label{fig:hias-dominance}
\end{figure}

\begin{figure}[H]
  \centering
  \tagstructbegin{tag=Figure,alttext={Bar charts compare cutoff COP-and-Choose
  with KDA and with greedy COP-and-Choose across four preference
  specifications under one-dimensional and three-dimensional HIAS resource
  constraints. Panels report placement and predicted-employment outcomes with
  confidence intervals across simulation draws.}}
  \tagmcbegin{tag=Figure}
  \includegraphics[width=\textwidth,height=.68\textheight,keepaspectratio]{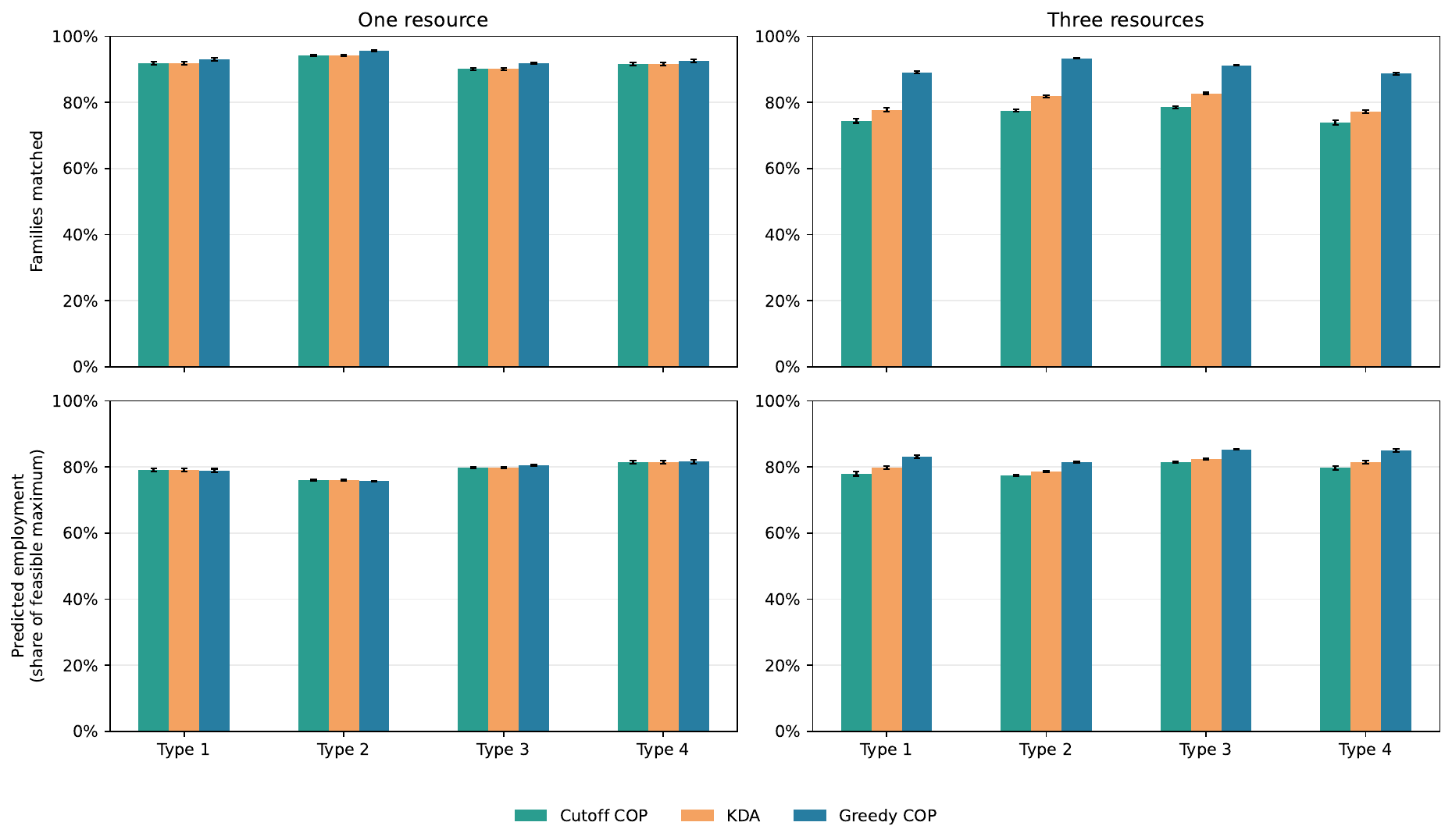}
  \tagmcend
  \tagstructend
  \caption{HIAS-calibrated outcomes. Left: one resource; right: three resources.
  Bars are means across 100 preference
  draws, with greedy COP-and-Choose first averaged over 12 proposal orders.
  Error bars are 95-percent Monte Carlo confidence intervals across simulated
  preference draws.}
  \label{fig:hias-outcomes}
\end{figure}

\Needspace{.5\textheight}
\subsection{Priority-blocker incidence}\label{sec:blocker-incidence}

\autoref{tab:blocker-incidence} reports incidence in terminal proposal sets.

\begin{table}[H]
\centering
\caption{Priority blockers in terminal greedy scans}
\label{tab:blocker-incidence}
\begin{tabular}{llrrr}
\toprule
Dim. & Type & Distinct blockers & Localities & Assigned below (\%) \\
\midrule
1 & Type 1 & 38.39 & 10.27 & 5.24 \\
  & Type 2 & 23.16 & 10.18 & 5.70 \\
  & Type 3 & 22.08 & 9.96 & 5.79 \\
  & Type 4 & 40.79 & 10.08 & 4.98 \\
\addlinespace
3 & Type 1 & 273.96 & 13.83 & 26.26 \\
  & Type 2 & 72.85 & 12.54 & 19.92 \\
  & Type 3 & 93.49 & 12.35 & 18.06 \\
  & Type 4 & 251.35 & 13.53 & 25.03 \\
\bottomrule
\end{tabular}
\par\medskip
\begin{minipage}{.98\linewidth}\footnotesize
\emph{Notes:} Means over 100 preference draws and 12 orders per draw.
A terminal blocker is a family skipped as infeasible in a locality's final
priority scan before a lower-priority family is selected. Distinct blockers
counts each family once per run, even if it blocks at several localities;
Localities counts localities with a terminal blocker. Assigned below is the
share of all 329 families ultimately assigned below a blocker at their own
locality. All 9,600 runs contain a terminal blocker; all 20 localities exhibit
one in at least one run under each resource specification. The counts refer
to each locality's terminal scan.
\end{minipage}
\end{table}

\Needspace{.45\textheight}
\subsection{Strong-envy severity}\label{sec:envy-severity}

\autoref{tab:envy-severity} counts distinct strongly envied assignees
(\emph{targets}) and their localities, conditional on strong envy.

\begin{table}[H]
  \centering
  \caption{Severity of strong envy under greedy COP-and-Choose}
  \label{tab:envy-severity}
  \begin{tabular}{llrrrrr}
    \toprule
    Dim. & Type & Affected (\%) & Mean targets & 90th & Maximum & Mean localities \\
    \midrule
    1 & Type 1 & 11.4 & 1.83 & 3 & 13 & 1.27 \\
      & Type 2 &  6.9 & 2.00 & 4 & 18 & 1.25 \\
      & Type 3 &  6.6 & 2.17 & 4 & 14 & 1.29 \\
      & Type 4 & 12.3 & 1.72 & 3 & 11 & 1.28 \\
    \addlinespace
    3 & Type 1 & 83.0 & 7.37 & 23 & 78 & 2.60 \\
      & Type 2 & 21.3 & 7.20 & 18 & 49 & 2.07 \\
      & Type 3 & 26.7 & 4.48 & 11 & 40 & 1.89 \\
      & Type 4 & 75.9 & 7.65 & 24 & 73 & 2.56 \\
    \bottomrule
  \end{tabular}
  \par\medskip
  \begin{minipage}{0.98\linewidth}\footnotesize
  \emph{Notes:} Each observation is a family in one greedy outcome. Each cell
  pools 100 preference draws and 12 proposal orders. Affected means that the
  family strongly envies at least one assigned family. Mean targets and mean
  localities are conditional on being affected. The 90th percentile and
  maximum refer to target counts among affected family-outcome observations;
  quantiles use the empirical inverse distribution function.
  \end{minipage}
\end{table}

\subsection{TKDA and ordinary justified envy}\label{sec:tkda-envy}

\autoref{tab:tkda-pairwise} reports the familywise welfare comparisons
with TKDA under employment priorities. Our implementation follows
Algorithms~4--5 of \citet{DKT2023}. Preferences rank compatible localities by
simulated utility, and locality priorities rank families by employment
weight; ties follow family or locality index. Capacities are aligned by
locality name, as in the main experiment.

Algorithm~5 assigns an infinite final threshold to a family whose temporary
threshold is infinite. For finite thresholds, our implementation uses the
algorithm's sets of higher-priority families and current proposers. The
replication package includes independent checks of the thresholds and final
assignments against this set formulation.

\begin{table}[H]
\centering
\caption{Welfare comparisons under employment priorities}
\label{tab:tkda-pairwise}
\small
\begin{tabular}{lllrrrr}
\toprule
Resources & First & Second & Better & Worse & \shortstack{Matched\\first} & \shortstack{Matched\\second} \\
\midrule
1 & greedy & KDA & 9.5 & 0.0 & 93.3 & 91.9 \\
1 & greedy & TKDA & 66.0 & 0.0 & 93.3 & 57.0 \\
1 & KDA & TKDA & 63.6 & 0.0 & 91.9 & 57.0 \\
\addlinespace
3 & greedy & KDA & 25.9 & 0.0 & 90.6 & 79.9 \\
3 & greedy & TKDA & 70.8 & 0.0 & 90.6 & 44.5 \\
3 & KDA & TKDA & 59.0 & 0.0 & 79.9 & 44.5 \\
\bottomrule
\end{tabular}
\par\medskip
\begin{minipage}{0.98\linewidth}\footnotesize
\emph{Notes:} All entries are percentages. Better and Worse compare the first mechanism with the second family by family; Matched gives placement rates. TKDA follows Algorithms~4--5 of \citet{DKT2023}. Preferences rank localities by simulated utility, priorities follow employment weights, and capacities are aligned by locality name. Results average over 400 profiles per resource specification, with greedy first averaged over 12 proposal orders.
\end{minipage}
\end{table}

\begin{table}[H]
\centering
\caption{Ordinary justified envy and strong envy}
\label{tab:ordinary-envy}
\small
\begin{tabular}{llrr}
\toprule
Resources & Mechanism & Justified envy & Strong envy \\
\midrule
1 & cutoff & 0.0 & 0.0 \\
1 & KDA & 0.0 & 0.0 \\
1 & TKDA & 0.0 & 0.0 \\
1 & greedy & 9.3 & 9.3 \\
\addlinespace
3 & cutoff & 0.0 & 0.0 \\
3 & KDA & 6.1 & 0.0 \\
3 & TKDA & 0.0 & 0.0 \\
3 & greedy & 51.9 & 51.8 \\
\bottomrule
\end{tabular}
\par\medskip
\begin{minipage}{0.98\linewidth}\footnotesize
\emph{Notes:} A family is counted if it envies at least one assignee under the indicated definition. Ordinary justified envy requires a preferred locality with a lower-priority assignee, irrespective of replacement feasibility. Shares are expressed as percentages and averaged as in \autoref{tab:hias-main}.
\end{minipage}
\end{table}

Ordinary justified envy and strong envy give nearly identical incidence for
greedy. With three resources, however, 6.1 percent of families have ordinary
justified envy under KDA despite its interference-free guarantee. Neither
measure in \autoref{tab:ordinary-envy} imposes the feasible-replacement test
used for weak fairness in \autoref{app:greedy-stability}.

\subsection{Locality-alignment sensitivity}\label{sec:join-sensitivity}

Our baseline aligns capacities with employment and compatibility data by
locality name, using the named public data to construct a crosswalk for the
anonymized openICPSR V1 files. Employment and compatibility columns 14 and 15
correspond to Columbus and Cleveland Heights; our crosswalk assigns them
capacity rows 15 and 14, respectively. Other positions retain their file order.
We also run the full 12-order experiment with all capacities aligned by
position. Preference and proposal seeds are held fixed, and initial-fit
compatibility and the employment-maximizing denominator are recomputed.
The compatibility masks remain unchanged.

Under positional alignment, greedy makes 8.5--11.7 percent of families strictly
better off than KDA with one resource and 20.8--33.1 percent with three,
compared with 8.1--11.5 and 19.0--33.3 percent under name alignment.
The largest type-specific change in the three-dimensional beneficiary share
is 1.83 percentage points. All 9,600 positional-alignment greedy outcomes
continue to give every family a weakly preferred locality assignment relative
to cutoff and KDA under its simulated preference. Waste
reductions remain substantial, and the three-dimensional employment gain
relative to KDA remains positive under every preference specification,
ranging from 1.8 to 4.2 percentage points under positional alignment.

All 12,000 mechanism--order outcomes change across alignments. Name alignment
remains the baseline.

\subsection{Additional priority-design detail}\label{sec:priority-design-detail}

Under the employment-based priorities, all 40 constraint-priority pairs
(20 localities times two resource specifications) fail substitutability.
In 37 cases, the search finds a violation of the containment condition in
\autoref{prop:fixed-priority}(i) at the first level of the recursion.
The replication package reports the corresponding certificates.

\autoref{prop:ascending} applies in one dimension but not in the
three-dimensional specification. Although 91.6 percent of family pairs have
componentwise comparable requirement vectors, the full collection is not
totally ordered, so the proposition provides no analogous guarantee.

Under ascending-size priority, 63.0 percent of one-person families have a
better order-averaged rank and 0.9 percent a worse one, compared with 7.9 and
61.6 percent among families of five or more. These comparisons use ranks
1--20 for acceptable localities and 21 for unassignment, averaged over the
12 proposal orders under employment-based priorities. Mean placements of
families of five or more fall by 6.5.

An additive employment accounting separates newly placed families, families
losing placement, and families placed under both priorities. Their mean
contributions to the change in total normalized employment weight are
$+1.318$, $-1.315$, and $-2.247$, respectively, summing to $-2.243$ up to
rounding. The decline is accounted for by reassignment among continuously
placed families.

For three resources, a bounded screen tests 2,000 independently sampled
rankings per locality, together with employment, total-size, normalized-load,
and three lexicographic rankings. For each ranking it tests 33 available sets
and at most six selected-family deletions per set, stopping at the first
violation. Explicit violations of substitutability or size monotonicity occur
for 38,772 of the 40,000 random rankings and 68 of the 120 specified rankings.
The remaining rankings are unclassified. The package records
each tested ranking and witness. The deletion--contraction decomposition in
\autoref{prop:fixed-priority} does not supply a polynomial-time guarantee for
compactly represented resource constraints.

\subsection{Finite profitable-deviation search}
\label{sec:deviation-search}

The lowest-indexed active family proposes next under every tested report.
For each of the 800 profiles, the search selects the 20 non-top-choice
families with the worst truthful ranks, breaking ties by family index.
Holding other reports truthful, it tests every nonempty strict top truncation,
single-locality deletion, adjacent transposition, and promotion of one
locality to the top. Duplicate reports are removed, and the search stops at
the first profitable report per profile. This finite search gives lower
bounds on manipulability under the fixed proposal convention.

\begin{table}[H]
  \centering
  \caption{Finite search for profitable reports under a fixed proposal order}
  \label{tab:deviation-search}
  \begin{tabular}{llrrrr}
    \toprule
    Resources & Preference & Profiles & Witnesses & Share (\%) & Mean rank gain \\
    \midrule
    One & Type 1 & 100 & 0  & 0.0  & -- \\
        & Type 2 & 100 & 3  & 3.0  & 3.33 \\
        & Type 3 & 100 & 3  & 3.0  & 3.67 \\
        & Type 4 & 100 & 4  & 4.0  & 1.50 \\
    \addlinespace
    Three & Type 1 & 100 & 4  & 4.0  & 1.50 \\
          & Type 2 & 100 & 10 & 10.0 & 4.20 \\
          & Type 3 & 100 & 1  & 1.0  & 5.00 \\
          & Type 4 & 100 & 1  & 1.0  & 1.00 \\
    \bottomrule
  \end{tabular}
  \par\medskip
  \begin{minipage}{0.98\linewidth}\footnotesize
  \emph{Notes:} A witness is a tested
  unilateral report that gives the family a strictly better locality under
  its true simulated ranking. Mean rank gain is conditional on a witness.
  Ranks number acceptable localities from one, with unassignment immediately
  below the last acceptable locality.
  The search examines at most 20 families per profile and stops after its
  first witness. The detected shares are lower bounds on the proportions of
  these simulated profiles at which greedy COP-and-Choose is manipulable under
  the fixed lowest-index-active-family convention.
  \end{minipage}
\end{table}

The search tests about 775,000 unilateral reports and finds witnesses in 10
of 400 one-resource profiles and 16 of 400 three-resource profiles. Conditional
on a witness, the mean improvements are 2.70 and 3.38 true preference-rank
positions, and the maximum improvements are 6 and 8. An independent
implementation reproduces every witness.

\subsection{Completion and incentives}\label{sec:completion-incentives}

Under the lowest-index-active-family proposal convention, completion begins
when greedy COP-and-Choose ends. The lowest-indexed family with a feasible
strict improvement moves to its most-preferred such locality, releasing its
old assignment and leaving every other family's assignment unchanged. The
scan restarts from the lowest index after each move and ends when no move is
available. This defines a deterministic rule at every reported preference
profile; \autoref{prop:completion} guarantees
termination, feasibility, non-wastefulness, and weak Pareto improvement under
the reported preferences.

\begin{table}[H]
\centering
\caption{Welfare before and after unilateral completion}
\label{tab:completion-outcomes}
\small
\begin{tabular}{llrrrr}
\toprule
Resources & Mechanism & \shortstack{Better\\vs. KDA} & \shortstack{Worse\\vs. KDA} & Matched & Mean rank \\
\midrule
1 & greedy & 8.0 & 0.0 & 93.0 & 5.10 \\
1 & greedy + completion & 10.7 & 0.0 & 93.7 & 4.92 \\
\addlinespace
3 & greedy & 25.1 & 0.0 & 90.3 & 5.59 \\
3 & greedy + completion & 26.2 & 0.0 & 90.9 & 5.45 \\
\bottomrule
\end{tabular}
\par\medskip
\begin{minipage}{0.98\linewidth}\footnotesize
\emph{Notes:} The proposal convention always selects the lowest-indexed active family. Completion repeatedly moves the lowest-indexed family with an available improvement to its most-preferred feasible locality. All priorities are employment based. Entries except mean rank are percentages, averaged over 400 profiles per resource specification. These fixed-order outcomes differ from the 12-order averages in \autoref{tab:hias-main}. Mean rank uses that table's convention.
\end{minipage}
\end{table}

With employment priorities, completion makes roughly ten moves per profile
with one resource and seven with three; the respective maxima are 33
and 24. Beneficiary shares relative to KDA increase from 8.0 to 10.7 percent
and from 25.1 to 26.2 percent (\autoref{tab:completion-outcomes}). The
proposal convention is fixed here, so the
uncompleted figures differ from the proposal-order averages in the main
tables. No family loses relative to greedy or KDA.

For each profile, the search selects the 20 worst-ranked non-top-choice
families under their truthful \emph{completed} assignments, with family-index
ties. It uses the report classes and order in \autoref{sec:deviation-search}
and stops at the first profitable report. Both COP-and-Choose and completion
are rerun using the reported preferences; gains are evaluated under the
deviating family's true preference. Selection by completed assignment can
change the candidate families, so comparison with the uncompleted search
does not hold the tested reports fixed.

\begin{table}[H]
\centering
\caption{Finite profitable-deviation search for completed greedy}
\label{tab:completion-incentives}
\small
\begin{tabular}{llrrrr}
\toprule
Resources & Preference & Profiles & Witnesses & Share (\%) & Mean rank gain \\
\midrule
1 & Type 1 & 100 & 0 & 0.0 & -- \\
1 & Type 2 & 100 & 10 & 10.0 & 4.80 \\
1 & Type 3 & 100 & 8 & 8.0 & 3.50 \\
1 & Type 4 & 100 & 6 & 6.0 & 1.33 \\
\addlinespace
3 & Type 1 & 100 & 5 & 5.0 & 1.40 \\
3 & Type 2 & 100 & 17 & 17.0 & 4.35 \\
3 & Type 3 & 100 & 4 & 4.0 & 3.50 \\
3 & Type 4 & 100 & 1 & 1.0 & 1.00 \\
\addlinespace
1, ascending size & All types & 400 & 0 & 0.0 & -- \\
\bottomrule
\end{tabular}
\par\medskip
\begin{minipage}{0.98\linewidth}\footnotesize
\emph{Notes:} Employment priorities apply except in the ascending-size control. Candidate families are selected using their truthful completed assignments. Every tested report is evaluated by rerunning both COP-and-Choose and completion. Mean rank gain is conditional on a witness and uses acceptable-list ranks, with unassignment just below the last acceptable locality. Detected shares are finite-search lower bounds; candidate sets may differ from those in \autoref{tab:deviation-search}.
\end{minipage}
\end{table}

The search reported in \autoref{tab:completion-incentives} tests about
760,000 reports under employment priorities and finds witnesses in
24 of 400 one-resource profiles (6.0 percent) and 27 of 400 three-resource
profiles (6.8 percent). An independent implementation reproduces all
truthful outcomes and every witness.

The one-resource ascending-size exercise is a control. With family-index
ties common across localities, greedy is serial dictatorship and is already
non-wasteful at every report. Completion therefore leaves it unchanged and
preserves strategy-proofness and proposal-order independence. In all 400
truthful profiles and about 377,000 tested reports, completion makes no moves,
and no profitable report is found.

\subsection{Proposal-order dispersion}
\label{sec:proposal-order-dispersion}

Across the original 12 orders, the mean within-market range is
0.089--0.148 rank positions and 0.2--0.6 percentage points of predicted
employment, depending on the cell. Beneficiary counts vary by 7.8 families on
average with one resource and 4.9 with three; maximum ranges are 26 and 21.

\autoref{tab:order-convergence} extends the same 800 market profiles (two
resource specifications and 400 preference draws) from 12 to 50 independently
seeded orders.
The largest absolute change in any cell mean is 0.029 percentage points for
the beneficiary share, 0.007 for placement and employment, 0.068 for strong
envy, and 0.451 for waste; the largest mean-rank change is 0.0012 positions.
The table also compares the disjoint first 12 and next 38 orders. These
checks assess the sensitivity of cell means to additional sampled orders;
they do not bound outcomes over all proposal-order conventions. The main
tables retain the original 12-order specification.

\begin{table}[t]
\centering
\caption{Proposal-order extension on the same 800 preference profiles}
\label{tab:order-convergence}
\begin{tabular}{lrr}
\toprule
Statistic & 12 versus 50 orders & First 12 versus next 38 \\
\midrule
Beneficiary share (pp) & 0.029 & 0.038 \\
Matched share (pp) & 0.007 & 0.010 \\
Mean rank (positions) & 0.0012 & 0.0016 \\
Employment share (pp) & 0.007 & 0.009 \\
Strong-envy incidence (pp) & 0.068 & 0.090 \\
Waste (pp) & 0.451 & 0.593 \\
\bottomrule
\end{tabular}
\par\medskip
\begin{minipage}{.98\linewidth}\footnotesize
\emph{Notes:} Maximum absolute difference in a cell mean across the eight
resource-dimension--preference-type cells. Orders are independently seeded
within each of 100 draws per type, using the same preference draws throughout.
The second comparison uses disjoint order samples. pp denotes percentage
points. The original 12-order estimates remain the main-table specification.
\end{minipage}
\end{table}

\FloatBarrier

\subsection{Overlapping terminal offers}\label{sec:overlap}

At least one family holds terminal offers from more than one locality in
94.4 percent of greedy runs under the one-dimensional constraints and
93.1 percent under the three-dimensional constraints. The mean number of
families holding multiple terminal offers is 3.60 and 2.79 per run,
respectively, with a maximum of 13 under either specification.

\FloatBarrier
\Needspace{6\baselineskip}
\subsection{Retaining rejected proposals}\label{sec:retention}

We compare greedy COP-and-Choose with the deferred-acceptance analogue that
applies greedy choice to current holdings and each new proposal, permanently
discarding rejections. The comparison uses all 9,600 greedy runs in the main
experiment. The two procedures receive the same initial
proposal-randomization seed, but
their realized proposal histories may diverge as their active sets change.

\begin{table}[H]
  \centering
  \caption{Families whose assignments change when greedy rejections are
  discarded}
  \label{tab:retention}
  \begin{tabular}{lrrrrrr}
    \toprule
    & Mean & Median & 90th & 95th & 99th & Maximum \\
    \midrule
    One resource & 19.39 & 19 & 30 & 33 & 39 & 46 \\
    Three resources & 16.09 & 15 & 28 & 31 & 39 & 56 \\
    \bottomrule
  \end{tabular}
  \par\medskip
  \begin{minipage}{0.98\linewidth}\footnotesize
  \emph{Notes:} Statistics are over 4,800 paired runs per resource specification,
  including identical outcomes. Quantiles use the empirical inverse distribution
  function. Each market contains 329 families.
  \end{minipage}
\end{table}

The outcomes differ in 99.8 percent of one-resource runs and 99.9 percent of
three-resource runs. Conditional on a difference, the mean numbers of changed
assignments are 19.4 and 16.1; the quantiles in
\autoref{tab:retention} are unchanged.

The welfare effects are mixed. With one resource, an average of 10.4 families
prefer COP-and-Choose and 9.0 prefer the discarded-rejection procedure; with
three resources, the corresponding means are 8.3 and 7.8. The paired
outcomes are Pareto incomparable in 99.2 and 97.6 percent of runs,
respectively. COP-and-Choose changes mean placement by $-0.17$ and $+0.80$
families and mean simulated-preference rank by $+0.009$ and $-0.050$ positions; lower
ranks are better.

\FloatBarrier
\Needspace{.45\textheight}
\subsection{KDA-preserving improvement ceilings}\label{sec:oracle}

No greedy run attains the beneficiary ceiling in \autoref{tab:oracle}.
Mean maxima are 66.7 and 123.2 beneficiaries with one and three resources;
maximum placement gains are 15.9 and 52.2 families, against 4.4 and 35.2 under
greedy. A beneficiary-maximizing matching reaches the unconditional placement
ceiling in only 9 and 24 of the respective 400 markets.

\begin{table}[H]
  \centering
  \caption{Greedy COP-and-Choose relative to KDA-preserving improvement
  ceilings}
  \label{tab:oracle}
  \begin{tabular}{lccc}
    \toprule
    & Beneficiaries (\%) & Placement (\%) & Employment (\%) \\
    \midrule
    One resource     & $48.9$ & $27.7$ & $0.3$ \\
    Three resources  & $70.1$ & $67.4$ & $44.4$ \\
    \bottomrule
  \end{tabular}
  \par\medskip
  \begin{minipage}{0.98\linewidth}\footnotesize
  \emph{Notes:} Entries report greedy's gain as a percentage of the largest gain
  attainable by any matching subject to no family receiving a worse locality
  than under KDA according to its simulated preference. Beneficiary capture is
  greedy's number of beneficiaries divided by the largest attainable number;
  placement and employment capture are greedy's incremental gains over KDA
  divided by the corresponding largest attainable gains. Reported values are
  means of the corresponding market-level capture rates over 400 preference
  draws for each resource specification, with greedy first averaged over 12
  proposal orders within a draw.
  \end{minipage}
\end{table}

With one resource, the mean employment gain available above KDA is 2.07
weight units and greedy realizes 0.13. The 0.3-percent capture entry averages
market-level ratios. Greedy's order-averaged employment
is below KDA's in 152 of 400 draws. With three resources, it realizes 2.27 of
the 5.15 available units and is below KDA in 13 of 400 draws.

Decomposing greedy's employment change into newly placed families and
reassignment of already placed families gives mean contributions of
$+0.333$ and $-0.204$ weight units with one resource, and $+2.581$ and
$-0.315$ with three. No family placed under KDA loses placement under greedy.

\end{document}